\def\final{1}
\def\stocsubmit{0}
\definecolor{DarkGreen}{rgb}{0.1,0.5,0.1}
\definecolor{DarkRed}{rgb}{0.5,0.1,0.1}
\definecolor{DarkBlue}{rgb}{0.1,0.1,0.5}
\newcommand{\mynote}[1]{\marginpar{\tiny \sf #1}}
\newcommand{\mynote}[1]{}
\newcommand{\mnote}[1]{\mynote{Mark: {#1}}}
\newcommand{\jnote}[1]{\mynote{Jon: {#1}}}
\newcommand\N{\mathbb{N}}
\newcommand\R{\mathbb{R}}
\newcommand{\cA}{\mathcal{A}}
\newcommand{\cB}{\mathcal{B}}
\newcommand{\cD}{\mathcal{D}}
\newcommand{\cE}{\mathcal{E}}
\newcommand{\calG}{\mathcal{G}}
\newcommand{\cM}{\mathcal{M}}
\newcommand{\cQ}{\mathcal{Q}}
\newcommand{\cR}{\mathcal{R}}
\newcommand{\cX}{\mathcal{X}}
\renewcommand{\dots}{\ldots}
\newcommand{\poly}{\mathrm{poly}}
\newcommand{\polylog}{\mathrm{polylog}}
\newcommand{\bits}{\{0,1\}}
\newcommand{\getsr}{\gets_{\mbox{\tiny R}}}
\newcommand{\set}[1]{\left\{#1\right\}} 
\newcommand{\from}{:}
\newcommand{\negl}{\mathrm{negl}}
\newcommand{\corr}[2]{\langle #1, #2 \rangle}
\newcommand{\ex}[1]{\mathbb{E}\left[#1\right]}
\DeclareMathOperator*{\Expectation}{\mathbb{E}}
\newcommand{\Ex}[2]{\Expectation_{#1}\left[#2\right]}
\DeclareMathOperator*{\Probability}{\mathrm{Pr}}
\newcommand{\prob}[1]{\mathrm{Pr}\left[#1\right]}
\newcommand{\Prob}[2]{\Probability_{#1}\left[#2\right]}
\DeclareMathOperator*{\argmin}{arg\,min}
\newcommand{\INDSTATE}[1][1]{\STATE\hspace{#1\algorithmicindent}}
\newtheorem{theorem}{Theorem}[section]
\newtheorem{lemma}[theorem]{Lemma}
\newtheorem{fact}[theorem]{Fact}
\newtheorem{claim}[theorem]{Claim}
\newtheorem{remark}[theorem]{Remark}
\newtheorem{corollary}[theorem]{Corollary}
\theoremstyle{definition}
\newtheorem{definition}[theorem]{Definition}
\newcommand{\db}{D}
\newcommand{\univ}{\cX}
\newcommand{\dbset}{\univ^{\rows}}
\newcommand{\san}{\cA}
\newcommand{\query}{q}
\newcommand{\queryj}[1]{\query_{#1}}
\newcommand{\queryset}{\cQ}
\newcommand{\answer}{a}
\newcommand{\answerq}[1]{\answer_{#1}}
\newcommand{\eps}{\varepsilon}
\newcommand{\rows}{n}
\newcommand{\row}{x}
\newcommand{\rowi}[1]{\row_{#1}}
\newcommand{\cols}{d}
\newcommand{\wcsc}{n^*} 
\newcommand{\vcdim}{\mathit{VC}}
\newcommand{\adv}{\cB}
\newcommand{\privadv}{\adv'}
\newcommand{\recadv}{\adv}
\newcommand{\conj}{\cM}
\newcommand{\kdconj}[1]{\conj_{#1, d}}
\newcommand{\alphsize}{m}
\newcommand{\users}{\rows}
\newcommand{\len}{\cols}
\newcommand{\gen}{\mathit{Gen}}
\newcommand{\trace}{\mathit{Trace}}
\newcommand{\codebook}{C}
\newcommand{\codebookS}[1]{\codebook_{#1}}
\newcommand{\codeword}{c}
\newcommand{\codewordi}[1]{\codeword_{#1}}
\newcommand{\codewordij}[2]{\codeword_{#1#2}}
\newcommand{\pirateword}{c'}
\newcommand{\piratewordj}[1]{\pirateword_{#1}}
\renewcommand{\sec}{\xi}
\newcommand{\fpadv}{\san_{\mathit{FP}}}
\newcommand{\rob}{\beta}
\newcommand{\recqueryset}{\queryset}
\newcommand{\recquery}{\query}
\newcommand{\recuniv}{\univ}
\newcommand{\recdb}{D}
\newcommand{\recrow}{x}
\newcommand{\recrows}{n}
\newcommand{\privqueryset}{\queryset'}
\newcommand{\goodqueryset}{\queryset'_\mathit{good}}
\newcommand{\privquery}{\query'}
\newcommand{\privuniv}{\univ'}
\newcommand{\privdist}{\cD'}
\newcommand{\privdb}{D'}
\newcommand{\privrow}{x'}
\newcommand{\privrows}{n'}
\newcommand{\reidadv}{\adv'}
\newcommand{\compadv}{\adv^*}
\newcommand{\compdist}{\cD^*}
\newcommand{\compdb}{\db^*}
\title{Fingerprinting Codes and the \\ Price of Approximate Differential Privacy\thanks{A preliminary version of this work appeared in the Symposium on the Theory of Computing 2014.}}
\author{Mark Bun\thanks{Supported by an NDSEG Fellowship and NSF grant CNS-1237235.} \qquad \qquad Jonathan Ullman\thanks{Supported by NSF grant CNS-1237235.} \qquad \qquad Salil Vadhan\thanks{Supported by NSF grant CNS-1237235, a gift from Google, and a Simons Investigator Award.} \\
\ \\
School of Engineering and Applied Sciences \&\\
Center for Research on Computation and Society\\
Harvard University, Cambridge, MA \\
\texttt{\{mbun,jullman,salil\}@seas.harvard.edu} }
\begin{document}
\maketitle

\pagenumbering{gobble}
\begin{abstract}
We show new information-theoretic lower bounds on the sample complexity of $(\eps, \delta)$-differentially private algorithms that accurately answer large sets of counting queries.  A counting query on a database $D \in (\{0,1\}^d)^n$ has the form ``What fraction of the individual records in the database satisfy the property $q$?''  We show that in order to answer an arbitrary set $\cQ$ of $\gg d/\alpha^2$ counting queries on $D$ to within error $\pm \alpha$ it is necessary that
$$
n \geq \tilde{\Omega}\Bigg( \frac{\sqrt{d} \log |\cQ|}{\alpha^2 \eps} \Bigg).
$$
This bound is optimal up to poly-logarithmic factors, as demonstrated by the Private Multiplicative Weights algorithm (Hardt and Rothblum, FOCS'10).  In particular, our lower bound is the first to show that the sample complexity required for accuracy and $(\eps, \delta)$-differential privacy is asymptotically larger than what is required merely for accuracy, which is $O(\log |\cQ| / \alpha^2)$.  In addition, we show that our lower bound holds for the specific case of $k$-way marginal queries (where $|\cQ| = 2^k \binom{d}{k}$) when $\alpha$ is not too small compared to $d$ (e.g.~when $\alpha$ is any fixed constant).

Our results rely on the existence of short \emph{fingerprinting codes} (Boneh and Shaw, CRYPTO'95; Tardos, STOC'03), which we show are closely connected to the sample complexity of differentially private data release.  We also give a new method for combining certain types of sample complexity lower bounds into stronger lower bounds.
\end{abstract}
\thispagestyle{empty}

\newpage
\tableofcontents

\newpage
\pagenumbering{arabic}
\section{Introduction}

Consider a database $D \in \univ^n$, in which each of the $n$ rows corresponds to an individual's record, and each record is an element of some data universe $\univ$ (e.g.~$\univ = \bits^d$, corresponding to $d$ binary attributes per record).  The goal of privacy-preserving data analysis is to enable rich statistical analyses on such a database while protecting the privacy of the individuals.  It is especially desirable to achieve \emph{$(\eps, \delta)$-differential privacy}~\cite{DworkMcNiSm06,DworkKeMcMiNa06}, which (for suitable choices of $\eps$ and $\delta$) guarantees that no individual's data has a significant influence on the information released about the database.  A natural way to measure the tradeoff between these two goals is via \emph{sample complexity}---the minimum number of records $n$ such that there exists a (possibly computationally unbounded) algorithm that achieves both differential privacy and statistical accuracy.

Some of the most basic statistics are \emph{counting queries}, which are queries of the form ``What fraction of individual records in $D$ satisfy some property $q$?''  In particular, we would like to design an algorithm that takes as input a database $D$ and, for some family of counting queries $\cQ$, outputs an approximate answer to each of the queries in $\cQ$ that is accurate to within, say, $\pm .01$.  Suppose we are given a bound on the number of queries $|\cQ|$ and the dimensionality of the database records $d$, but otherwise allow the family $\cQ$ to be arbitrary.  What is the sample complexity required to achieve $(\eps, \delta)$-differential privacy and statistical accuracy for $\cQ$?

Of course, if we drop the requirement of privacy, then we could achieve perfect accuracy when $D$ contains any number of records.  However, in many interesting settings the database $D$ consists of random samples from some larger population, and an analyst is actually interested in answering the queries on the population.  Thus, even without a privacy constraint, $D$ would need to contain enough records to ensure that (with high probability) for every query $q \in \cQ$, the answer to $q$ on $D$ is close to the answer to $q$ on the whole population, say within $\pm .01$.  To achieve this form of \emph{statistical accuracy}, it is well-known that it is necessary and sufficient for $D$ to contain $\Theta(\log |\cQ|)$ samples.\footnote{For a specific family of queries $\cQ$, the necessary and sufficient number of samples is proportional to the \emph{VC-dimension} of $\cQ$, which can be as large as $\log |\cQ|$.}  In this work we consider whether there is an additional ``price of differential privacy'' if we require both statistical accuracy and $(\eps, \delta)$-differential privacy (for, say, $\eps = O(1)$, $\delta = o(1/n)$).  This benchmark has often been used to evaluate the utility of differentially private algorithms, beginning with the seminal work of Dinur and Nissim~\cite{DinurNi03}.

Some of the earliest work in differential privacy~\cite{DinurNi03,DworkNi04,BlumDwMcNi05,DworkMcNiSm06} gave an algorithm---the so-called \emph{Laplace mechanism}---whose sample complexity is $\tilde{\Theta}(|\cQ|^{1/2})$, and thus incurs a large price of differential privacy.  Fortunately, a remarkable result of Blum, Ligett, and Roth~\cite{BlumLiRo08} showed that the dependence on $|\cQ|$ can be improved exponentially to $O(d \log|\cQ|)$ where $d$ is the dimensionality of the data.  Their work was improved on in several important aspects~\cite{DworkNaReRoVa09,DworkRoVa10,RothRo10,HardtRo10,GuptaRoUl12,HardtLiMc12}.  The current best upper bound on the sample complexity is $O(\sqrt{d} \log|\cQ|)$, which is obtained via the private multiplicative weights mechanism of Hardt and Rothblum~\cite{HardtRo10}.

These results show that the price of privacy is small for datasets with few attributes, but may be large for high-dimensional datasets.  For example, if we simply want to estimate the mean of each of the $d$ attributes without a privacy guarantee, then $\Theta(\log d)$ samples are necessary and sufficient to get statistical accuracy.  However, the best known $(\eps, \delta)$-differentially private algorithm requires $\Omega(\sqrt{d})$ samples---an exponential gap.  In the special case of \emph{pure} $(\eps, 0)$-differential privacy, a lower bound of $\Omega(d)$ is known~\cite{HardtTa10}.  However, for the general case of \emph{approximate} $(\eps,\delta)$-differential privacy the best known lower bound is $\Omega(\log d)$~\cite{DinurNi03}.  More generally, there are no known lower bounds that separate the sample complexity of $(\eps,\delta)$-differential privacy from the sample complexity required for statistical accuracy alone.

In this work we close this gap almost completely, and show that there is indeed a ``price of approximate differential privacy'' for high-dimensional datasets.  
\begin{theorem}[Informal] \label{thm:main0}
Any algorithm that takes as input a database $D \in (\bits^d)^n$, satisfies approximate differential privacy, and estimates the mean of each of the $d$ attributes to within error $\pm 1/3$ requires $n \geq \tilde{\Omega}(\sqrt{d})$ samples.
\end{theorem}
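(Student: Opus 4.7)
The plan is to exploit the connection, hinted at in the abstract, between accurate differentially private release of attribute means and \emph{fingerprinting codes}. Recall that an $n$-user, $d$-bit fingerprinting code consists of a randomized codebook generator $\fpgen$ producing rows $\codewordi{1},\dots,\codewordi{n} \in \bits^d$ together with a tracer $\fptrace$. Security requires (i) \emph{completeness}: for every ``pirate word'' $\pirateword$ produced by some coalition $S \subseteq [n]$ under the marking assumption---that $\pirateword$ agrees with the coalition on every coordinate where its members are unanimous---$\fptrace(\pirateword)$ outputs some $i \in S$ with probability at least $1-\xi$; and (ii) \emph{non-framing}: for every $i \notin S$, $\fptrace$ outputs $i$ with probability at most $\xi/n$. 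Tardos's construction achieves these guarantees simultaneously with $d = \tilde{O}(n^2 \log(n/\xi))$, so for any dimension $d$ we can fix an $n$-user code with $n = \tilde{\Theta}(\sqrt{d})$ and length $\ell \leq d$.

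The reduction proceeds as follows. Suppose for contradiction that $\san$ is an $(\eps,\delta)$-differentially private algorithm with $\eps = O(1)$ and $\delta = o(1/n)$ that estimates each of the $d$ column means of any database $D \in (\bits^d)^n$ to within $\pm 1/3$. Sample $\codebook \sim \fpgen$, pad each $\ell$-bit row with zeros to produce a database $D \in (\bits^d)^n$, and run $\san(D)$ to obtain $\answer \in [0,1]^d$. Round each of the first $\ell$ coordinates of $\answer$ at threshold $1/2$ to define a candidate pirate word $\pirateword$. At any marked coordinate $j$, all rows $\codewordi{i}$ agree, so the true mean is $0$ or $1$, and accuracy $\pm 1/3$ forces $\pirateword_j$ to match. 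Hence $\pirateword$ satisfies the marking assumption against $S=[n]$, and completeness gives $\prob{\fptrace(\pirateword) \in [n]} \geq 1 - \xi$.

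The final step couples differential privacy with non-framing. For each $i \in [n]$, let $D_{-i}$ be $D$ with row $i$ replaced by some arbitrary fixed vector; the resulting pirate word $\pirateword^{(-i)}$ built from $\san(D_{-i})$ is a function of $\{\codewordi{j}\}_{j \ne i}$ alone and still satisfies marking for the coalition $[n] \setminus \{i\}$, so non-framing yields $\prob{\fptrace(\pirateword^{(-i)}) = i} \leq \xi/n$. Since $\fptrace$ is post-processing, one step of differential privacy upgrades this to $\prob{\fptrace(\pirateword) = i} \leq e^{\eps} \cdot \xi/n + \delta$. Summing over $i$,
\[
1 - \xi \;\leq\; \sum_{i=1}^{n} \prob{\fptrace(\pirateword) = i} \;\leq\; e^{\eps}\xi + n\delta,
\]
which for a small constant $\xi$ and $\delta = o(1/n)$ is a contradiction. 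Recalling that $n = \tilde{\Theta}(\sqrt{d})$ in the Tardos parameters gives $n \geq \tilde{\Omega}(\sqrt{d})$, as claimed.

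I expect the main obstacle to be setting up a fingerprinting primitive with exactly the two properties above in a form tailored to the DP reduction: the non-framing bound must be $\xi/n$ (not merely $\xi$) so that the union bound over users yields a constant total, and both completeness and non-framing must hold against a general randomized pirate of the form $\san$ rather than a restricted adversary. Verifying (or, if necessary, strengthening) Tardos's construction to this form---and carefully justifying the ``replace-row'' neighbor relation when the neighboring database is no longer drawn from $\fpgen$---will be the bulk of the technical work.
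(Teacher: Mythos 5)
Your proposal is correct and is essentially the paper's own argument (Theorem~\ref{thm:fpctolb} plus Lemma~\ref{lem:reidenttodp} plus Tardos' code): use the codebook as the database, round the accurate answers into a feasible pirate word, and play tracing against differential privacy. The only cosmetic difference is that you sum the per-user false-accusation bounds $e^{\eps}\xi/n + \delta$ over all $i$, whereas the paper isolates a single $i^*$ accused with probability $\Omega(1/n)$ and derives the contradiction from that one user with $\xi = o(1/n)$; both yield the same conclusion, and your worry about non-framing holding for arbitrary (non-marking) pirates is already resolved by the standard definition, in which the false-accusation guarantee is unconditional.
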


We establish this lower bound using a combinatorial object called a \emph{fingerprinting code}, which was originally introduced by Boneh and Shaw~\cite{BonehSh98} for the problem of watermarking copyrighted content.  Specifically, we use Tardos' construction of optimal fingerprinting codes~\cite{Tardos08}.  The use of ``secure content distribution schemes'' to prove lower bounds for differential privacy originates with the work of Dwork et al.~\cite{DworkNaReRoVa09}, who used ``traitor-tracing schemes,'' which are a cryptographic analogue of information-theoretic fingerprinting codes, to prove computational hardness results for differential privacy.  Extending this connection, Ullman~\cite{Ullman13} used fingerprinting codes to construct a novel traitor-tracing scheme and obtain a strong computational hardness result for differential privacy.\footnote{In fact, one way to prove Theorem \ref{thm:main0} is by replacing the one-way functions in~\cite{Ullman13} with a random oracle, and thereby obtain an information-theoretically secure traitor-tracing scheme.}   Here we show that a \emph{direct} use of fingerprinting codes yields information-theoretic lower bounds on sample complexity.  

Using the additional structure of Tardos' fingerprinting code, we are able to prove \emph{statistical minimax lower bounds} for inferring the marginals of a product distribution from samples while guaranteeing differential privacy for the sample.  Specifically, suppose the database $D \in (\bits^d)^n$ consists of $n$ independent samples from a product distribution over $\bits^d$ such that the $i$-th coordinate of each sample is set to $1$ with probability $p_i$, for some unknown $p = (p_1,\dots,p_d) \in [0,1]^d$.  We show that if there exists a differentially private algorithm that takes such a database as input, satisfies approximate differential privacy, and outputs $\hat{p}$ such that $\| \hat{p} - p \|_\infty \leq 1/3$, then $n \geq \tilde{\Omega}(\sqrt{d}).$  Statistical minimax bounds of this type for differentially private inference problems were first studied by Duchi, Jordan, and Wainwright~\cite{DuchiJoWa13}, who proved minimax bounds for algorithms that satisfy the stronger constraint of \emph{local pure $(\eps, 0)$-differential privacy}.

\medskip

Next, we consider the sample complexity of answering an arbitrary set $\cQ$ of counting queries to within error $\pm \alpha$.  As above, if we assume the database contains samples from a population, and require only that the answers to queries on the sampled database and the population are close, to within $\pm \alpha$, then $\Theta(\log |\cQ| / \alpha^2)$ samples are necessary and sufficient for just statistical accuracy.  When $|\cQ|$ is large (relative to $d$ and $1/\alpha$), the best sample complexity for differential privacy is again achieved by the private multiplicative weights algorithm, and is $O(\sqrt{d}  \log|\cQ| / \alpha^2)$.  For pure differential privacy, a lower bound of $\Omega(d \log|\cQ| / \alpha^2)$ is known~\cite{HardtThesis}.  On the other hand, the best known lower bound for approximate differential privacy is $\Omega(\max\{ \log|\cQ| / \alpha, 1/\alpha^2 \})$, which follows from the techniques of~\cite{DinurNi03}.  To resolve this gap, we give a \emph{composition theorem} that allows us to obtain a nearly optimal lower bound by combining Theorem~\ref{thm:main0} with (variants of) the existing sample complexity lower bounds.  The result shows that the private multiplicative weights algorithm achieves nearly-optimal sample-complexity as a function of $|\cQ|, d$, and $\alpha$.
\begin{theorem}[Informal] \label{thm:main1}
For every sufficiently small $\alpha > 0$, $d \geq 6\log(1/\alpha)$, and $s \geq d/\alpha^2$, there exists a family of queries $\cQ$ of size $s$ such that any algorithm that takes as input a database $D \in (\bits^d)^n$, satisfies approximate differential privacy, and outputs an approximate answer to each query in $\cQ$ to within $\pm \alpha$ requires $n \geq \tilde{\Omega}(\sqrt{d} \log |\cQ| / \alpha^2)$.
\end{theorem}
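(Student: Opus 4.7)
The plan is to prove Theorem~\ref{thm:main1} by combining two essentially orthogonal lower bounds and then stitching them together via the new composition theorem promised in the introduction. The first piece is Theorem~\ref{thm:main0} itself, which already supplies the $\tilde\Omega(\sqrt d)$ factor: applied to the $d$ single-coordinate mean queries $q_i(x)=x_i$, it shows that any $(\eps,\delta)$-differentially private algorithm that estimates all $d$ attribute means of a product distribution on $\bits^d$ to within $\pm 1/3$ already requires $n\ge n_1=\tilde\Omega(\sqrt d)$ samples. The second piece is a Dinur--Nissim style lower bound that supplies the $\log|\cQ|/\alpha^2$ factor: for every sufficiently large $s$, one exhibits (say, via random inner-product queries over a size-$s$ universe) a family of $s$ counting queries on $\bits^{\Theta(\log s)}$ together with a hard distribution on which any $(\eps,\delta)$-differentially private $\alpha$-accurate algorithm is forced to use $n\ge n_2=\Omega(\log s/\alpha^2)$ samples.

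The first ingredient I would take almost off the shelf by instantiating Theorem~\ref{thm:main0} on the $d$ attribute queries. For the second ingredient I would use the standard reconstruction-attack template: for a random query family over a universe of size roughly $s$, any sufficiently accurate answer vector permits reconstruction of a non-negligible fraction of the input database, which violates differential privacy unless $n\ge\Omega(\log s/\alpha^2)$. Both constructions output not just a single worst-case database but a full \emph{hard distribution} over databases, together with a ``tracer'' (the fingerprinting decoder in one case, the reconstruction attack in the other) which from any accurate answer vector points out a privacy-violating feature of the input. It is this uniform interface that makes them eligible for the composition step.

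The crux, and the expected main obstacle, is the composition itself. The goal is to combine the hard distribution and tracer for the $\sqrt d$ bound with the hard distribution and tracer for the $\log|\cQ|/\alpha^2$ bound to produce a single family $\cQ$ of $\Theta(s)$ counting queries on a universe of dimension $\Theta(d)$ and a single hard distribution on which $\alpha$-accurate differentially private answers require $n\ge\tilde\Omega(n_1\cdot n_2)=\tilde\Omega(\sqrt d\log|\cQ|/\alpha^2)$ samples. The naive ``disjoint blocks of rows for the two instances'' reduction only delivers $\max(n_1,n_2)$; obtaining the product requires interleaving the two instances — for example, letting each row carry data for both sub-problems and letting each combined query couple one fingerprinting query with one Dinur--Nissim query — so that $\alpha$-accuracy on the combined family forces the algorithm to simultaneously defeat both tracers on the same physical row. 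Pushing the tracer analysis through in a way that genuinely multiplies the two hardness quantities, rather than merely maximizing them, is the technical heart of the argument. Once this composition is in place, Theorem~\ref{thm:main1} follows by plugging in $n_1=\tilde\Omega(\sqrt d)$ and $n_2=\Omega(\log s/\alpha^2)$; the hypothesis $s\ge d/\alpha^2$ ensures that $\log|\cQ|$ is large enough to absorb the polylogarithmic losses introduced by the composition.
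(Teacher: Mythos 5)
Your high-level plan is on the right track, and you have correctly identified the product-database / coupled-query construction as the key to making the lower bounds multiply rather than merely take a maximum. But there is a concrete gap in your decomposition. You propose to split the target into two pieces, $\tilde\Omega(\sqrt d)$ and $\Omega(\log s/\alpha^2)$, and you present the second piece as a single ``Dinur--Nissim style'' lower bound available off the shelf. It is not. As the paper itself notes in the Related Work section, the best lower bound obtainable by DN-style techniques alone is only $\Omega(\max\{\log|\cQ|/\alpha,\ 1/\alpha^2\})$; a self-contained reconstruction bound of $\Omega(\log|\cQ|/\alpha^2)$ is not known and is in fact of essentially the same difficulty as the theorem you are trying to prove (restricted to a small universe). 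Getting the \emph{product} $\log|\cQ|\cdot(1/\alpha^2)$ is precisely what requires composition, so you cannot assume it as an input to the composition.

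What the paper actually does is decompose into \emph{three} pieces and apply the composition theorem (Theorem~\ref{thm:composition}) \emph{twice}: (i) the $\tilde\Omega(\sqrt d)$ re-identifiable distribution from robust fingerprinting codes for $1$-way marginals, (ii) an $\Omega(1/\alpha^2)$ reconstruction attack on a separate family of random counting queries (Theorem~\ref{thm:arbitraryrec}, a DN-style attack extended to fractional target vectors via Berry--Ess\'een), and (iii) an $\Omega(\mathit{VC}(\cQ))=\Omega(\log|\cQ|)$ reconstruction attack coming from a shattered set (Lemma~\ref{lem:vcrec}). Moreover the \emph{order} of composition is forced: piece (ii) must be composed with piece (i) first, and piece (iii) must be applied last. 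The reason, spelled out in Section~\ref{sec:puttingtogetherconj}, is that the composition theorem needs the re-identifiable distribution it is handed to tolerate $(\alpha,\beta)$-accurate answers with $\beta>0$ (because the outer reconstruction can fail on a small fraction of queries), which the robust-fingerprinting piece provides; but the VC-dimension reconstruction attack requires $(\alpha,0)$-accuracy and produces a re-identifiable distribution that also only tolerates $(\alpha,0)$-accuracy, so it cannot sit in the middle of a chain of compositions. Your proposal does not surface either the need for robustness in the fingerprinting code (which forces the development of error-robust fingerprinting codes in Section~\ref{chap:rfpc}) or this ordering constraint, and as written it would try to compose with a lemma that does not exist.
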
 
We remark that the condition that $d \geq 6\log(1/\alpha)$ is both necessary (up to the constant factor) and fairly mild.  Necessary because the \emph{noisy histogram algorithm} (see, e.g.~\cite{Vadhan16}) requires $n = O(2^{d/2} \sqrt{\log |\cQ|} / \alpha)$ samples, which is better than the conclusion of the lower bound when $d < 2\log(1/\alpha)$.  Mild because differential privacy cannot be satisfied for large query sets unless $\alpha \gtrsim 1/\sqrt{n}$, so the condition is no stronger than assuming $n \lesssim 2^{d/3}$, in which case the number of samples is exponential in the dimension.  Similarly, the condition $s \geq d/\alpha^2$ is also necessary, since adding independent noise to each query requires only $n \gtrsim |\cQ|^{1/2}/\alpha$ samples.

\medskip
Finally, we consider the sample complexity of the natural and well studied class of \emph{$k$-way marginal queries}, also known as \emph{$k$-way conjunction queries} (see e.g.~\cite{BarakChDwKaMcTa07,KasiviswanathanRuSmUl10,GuptaHaRoUl11,ThalerUlVa12,ChandrasekaranThUlWa13,DworkNiTa13}).  A $k$-way marginal query on a database $D \in (\{0,1\}^d)^n$ is specified by a set $S \subseteq [d]$, $|S| \leq k$, and a pattern $t \in \bits^{|S|}$ and asks ``What fraction of records in $D$ has each attribute $j$ in $S$ set to $t_j$?''  The number of $k$-way marginal queries on $\{0,1\}^d$ is about $2^{k} \binom{d}{k}$.  For the special case of $k=1$, the queries simply ask for the mean of each attribute, which was discussed above.  We prove that the lower bound of Theorem~\ref{thm:main1}, which applies to worst-case queries, also holds for the special case of $k$-way marginal queries when $\alpha$ is not too small.

\begin{theorem}[Informal] \label{thm:main2}
Any algorithm that takes a database $D \in (\bits^d)^n$, satisfies approximate differential privacy, and outputs an approximate answer to each of the $k$-way marginal queries to within $\pm \alpha$ (for $\alpha$ smaller than some universal constant and larger than an inverse polynomial in $d$) requires $n \geq \tilde{\Omega}(k \sqrt{d} / \alpha^2)$.
\end{theorem}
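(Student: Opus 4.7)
The plan is to derive Theorem 1.3 by strengthening Theorem 1.1 along two independent axes: from constant accuracy to accuracy $\alpha$ (gaining a factor of $1/\alpha^2$), and from estimating $d$ attribute means to answering $k$-way marginals (gaining a factor of $k$ via the composition theorem promised earlier in the introduction).

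First, I would upgrade Theorem 1.1 to a bound of $\tilde{\Omega}(\sqrt{d}/\alpha^2)$ for 1-way marginals with error $\alpha$. The argument is a rescaling/dilution reduction: given an $(\eps,\delta)$-differentially private algorithm $\cA$ achieving error $\alpha$ on $n$ samples, one constructs a private algorithm achieving constant error on $\Theta(\alpha^2 n)$ samples by mixing the input database with auxiliary uniformly random rows so that each target bias is shrunk by a factor of $\alpha$, then rescaling $\cA$'s output. Applying Theorem 1.1 to the resulting algorithm forces $\alpha^2 n \geq \tilde{\Omega}(\sqrt{d})$, i.e., $n \geq \tilde{\Omega}(\sqrt{d}/\alpha^2)$. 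Since every 1-way marginal is a $k$-way marginal with $|S|=1$, this already settles Theorem 1.3 in the case $k=1$.

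To introduce the factor of $k$, I would invoke the composition theorem the paper supplies for combining sample-complexity lower bounds. Composed with a statistical lower bound of $\Omega(\log |\cQ|/\alpha^2) = \Omega(k\log d/\alpha^2)$ obtained from a variant of Dinur and Nissim~\cite{DinurNi03} (using $|\cQ| = 2^k\binom{d}{k}$), the composition should multiply the two factors and yield $\tilde{\Omega}(\sqrt{d}\cdot k\log d/\alpha^2) = \tilde{\Omega}(k\sqrt{d}/\alpha^2)$. The subtle point is that the composition, as it would be used for Theorem 1.2, produces the bound for a worst-case query family on a somewhat larger universe; to specialize the conclusion to $k$-way marginals on $(\bits^d)^n$ I would embed the composed hard instance by reserving $O(\log d)$ of the $d$ coordinates as ``selector'' attributes indexing into the statistical sub-problem and using the remaining $\Theta(d)$ coordinates as ``payload'' attributes carrying the fingerprinting construction, so that each composed query becomes a conjunction of one payload literal with $O(\log d)$ selector literals, well within the budget of $k$ literals.

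The main obstacle is precisely this embedding step. One must verify that the composed hard distribution over databases is realizable in $(\bits^d)^n$, that the composed queries are genuine $k$-way conjunctions, and that the polynomial slack introduced by the selector coordinates (which effectively dilute the bias seen by the fingerprinting portion) does not overwhelm $\alpha$. This last point is exactly why the theorem requires $\alpha \geq 1/\poly(d)$: the reduction loses a $\poly(d)$ factor which is absorbed by enlarging $\alpha$ while still keeping it below the universal constant needed for Theorem 1.1. Handling the regime of small $k$---where there may not be enough literals for both a selector and a full payload---requires a separate, slightly weaker version of the embedding, but in that regime the target bound $k\sqrt{d}/\alpha^2$ is itself smaller and follows already from the $k=1$ case together with the $1/\alpha^2$ scaling of Step~1.
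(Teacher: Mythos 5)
Your Step~1 contains a genuine error. The ``rescaling/dilution'' reduction does not give a factor of $1/\alpha^2$; it gives only $1/\alpha$. If you pad $m$ real rows with $n-m$ uniformly random rows to form a database of size $n$, the contribution of the real rows to each $1$-way marginal is $(m/n)\cdot\mu_0$, where $\mu_0$ is the true mean over the real rows. From an $\alpha$-accurate answer you can only recover $\mu_0$ to within $\alpha\cdot(n/m)$, so to get constant error you need $m=\Theta(\alpha n)$, not $m=\Theta(\alpha^2 n)$. (With $m=\Theta(\alpha^2 n)$ the signal $\alpha^2\mu_0$ is already swamped by the error $\alpha$.) The paper makes this exact observation explicitly after Corollary~\ref{cor:lb1way}: the generic translation of a constant-error lower bound to an $\alpha$-error lower bound only picks up a factor of $1/\alpha$, and they also note that ``our proof technique does not directly yield a lower bound with any meaningful dependence on the accuracy $\alpha$.'' So your plan, once the rescaling is corrected, only yields $\tilde\Omega(k\sqrt{d}/\alpha)$ and is short by a factor of $1/\alpha$.

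In the paper, the $1/\alpha^2$ factor comes from a third, separate ingredient that your proposal is missing: a reconstruction attack of size $\Omega(1/\alpha^2)$ against $k$-way marginals due to De (building on Kasiviswanathan--Rudelson--Smith--Ullman), proved via $\ell_1$-minimization and the Euclidean-section / least-singular-value properties of the query matrix $M_{D,\conj_{k,d}}$ for a random $D$. This is fed through the composition theorem \emph{alongside} the $\Omega(k)$ VC-dimension attack, not collapsed into a single statistical $\Omega(\log|\cQ|/\alpha^2)$ bound as you propose. The paper's Theorem~\ref{thm:conjlb} composes three objects: the $\tilde\Omega(\sqrt{d})$ re-identification distribution for $1$-way marginals, the $\Omega(1/\alpha^2)$ reconstruction database for $\ell$-way marginals (for a suitable constant $\ell$), and the $\Omega(k-\ell-1)$ VC-dimension reconstruction database, using a product-database structure on $\bits^{d/3}\times\bits^{d/3}\times\bits^{d/3}$ rather than a selector/payload embedding. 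Notice also that this composition order is forced: the VC-dimension attack requires $(\alpha,0)$-accuracy (it cannot tolerate a $\beta>0$ fraction of bad answers), so it must be applied last. Your proposal leaves the source of the $1/\alpha^2$ term unaccounted for and also does not address this ordering constraint.
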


We remark that, since the number of $k$-way marginal queries is about $2^{k} \binom{d}{k}$, the sample complexity lower bound in Theoem~\ref{thm:main2} essentially matches that of Theorem~\ref{thm:main1}.  The two theorems are incomparable, since Theorem~\ref{thm:main1} applies even when $\alpha$ is exponentially small in $d$, but only applies for a worst-case family of queries.

\subsection{Our Techniques}

We now describe the main technical ingredients used to prove these results.  For concreteness, we will describe the main ideas for the case of $k$-way marginal queries.

\paragraph{Fingerprinting Codes.}
Fingerprinting codes, introduced by Boneh and Shaw~\cite{BonehSh98}, were originally designed to address the problem of watermarking copyrighted content.  Roughly speaking, a (fully-collusion-resilient) fingerprinting code is a way of generating codewords for $n$ users in such a way that any codeword can be uniquely traced back to a user.  Each legitimate copy of a piece of digital content has such a codeword hidden in it, and thus any illegal copy can be traced back to the user who copied it. Moreover, even if an arbitrary subset of the users collude to produce a copy of the content, then under a certain \emph{marking assumption}, the codeword appearing in the copy can still be traced back to one of the users who contributed to it.  The standard marking assumption is that if every colluder has the same bit $b$ in the $j$-th bit of their codeword, then the $j$-th bit of the ``combined'' codeword in the copy they produce must be also $b$.  We refer the reader to the original paper of Boneh and Shaw~\cite{BonehSh98} for the motivation behind the marking assumption and an explanation of how fingerprinting codes can be used to watermark digital content.

We show that the existence of short fingerprinting codes implies sample complexity lower bounds for $1$-way marginal queries.  Recall that a $1$-way marginal query $q_j$ is specified by an integer $j \in [d]$ and asks simply ``What fraction of records in $D$ have a $1$ in the $j$-th bit?''  Suppose a coalition of users takes their codewords and builds a database $D \in (\{0,1\}^d)^n$ where each record contains one of their codewords, and $d$ is the length of the codewords.  Consider the $1$-way marginal query $q_j(D)$.  If every user in $S$ has a bit $b$ in the $j$-th bit of their codeword, then $q_j(D) = b$.  Thus, if an algorithm answers $1$-way marginal queries on $D$ with non-trivial accuracy, its output can be used to obtain a combined codeword that satisfies the marking assumption.  By the tracing property of fingerprinting codes, we can use the combined codeword to identify one of the users in the database.  However, if we can identify one of the users from the answers, then the algorithm is not differentially private.

This argument can be formalized to show that if there is a fingerprinting code for $n$ users with codewords of length $d$, then the sample complexity of answering $1$-way marginals must be at least $n$.  The nearly-optimal construction of fingerprinting codes due to Tardos~\cite{Tardos08}, gives fingerprinting codes with codewords of length $d = \tilde{O}(n^2)$, which implies a lower bound of $n \geq \tilde{\Omega}(\sqrt{d})$ on the sample complexity required to answer $1$-way marginals queries.

\paragraph{Composition of Sample Complexity Lower Bounds.}
Suppose we want to prove a lower bound of $\tilde{\Omega}(k \sqrt{d})$ for answering $k$-way marginals up to accuracy $\pm .01$ (a special case of Theorem~\ref{thm:main2}).  Given our lower bound of $\tilde{\Omega}(\sqrt{d})$ for $1$-way marginals, and the known lower bound of $\Omega(k)$ for answering $k$-way marginals implicit in~\cite{DinurNi03,Roth10}, a natural approach is to somehow compose the two lower bounds to obtain a nearly-optimal lower bound of $\tilde{\Omega}(k \sqrt{d})$.  Our composition technique uses the idea of the $\Omega(k)$ lower bound from~\cite{DinurNi03,Roth10} to show that if we can answer $k$-way marginal queries on a large database $D$ with $n$ rows, then we can obtain the answers to the $1$-way marginal queries on a ``subdatabase'' of roughly $n/k$ rows.  Our lower bound for $1$-way marginals tell us that $n/k = \tilde{\Omega}(\sqrt{d})$, so we deduce $n = \tilde{\Omega}(k \sqrt{d})$.  

Actually, this reduction only gives accurate answers to \emph{most} of the $1$-way marginals on the subdatabase, so we need an extension of our lower bound for $1$-way marginals to differentially private algorithms that are allowed to answer a small fraction of the queries with arbitrarily large error. Proving a sample complexity lower bound for this problem requires a ``robust'' fingerprinting code whose tracing algorithm can trace codewords that have errors introduced into a small fraction of the bits.  We show how to construct such a robust fingerprinting code of length $d = \tilde{O}(n^2)$, and thus obtain the desired lower bound.  Fingerprinting codes satisfying a weaker notion of robustness were introduced by Boneh and Naor~\cite{BonehNa08,BonehKiMo10}.\footnote{In the fingerprinting codes of~\cite{BonehNa08,BonehKiMo10} the adversary is allowed to \emph{erase} a large fraction of the coordinates of the combined codeword, and must reveal which coordinates are erased.}

Theorems~\ref{thm:main1} and~\ref{thm:main2} are proven by using this composition technique repeatedly to combine our lower bound for $1$-way marginals with (variants of) several known lower bounds that capture the optimal dependence on $\log |\cQ|$ and $1/\alpha^2$.

\paragraph{Are Fingerprinting Codes Necessary to Prove Differential Privacy Lower Bounds?}
The connection between fingerprinting codes and differential privacy lower bounds extends to arbitrary families $\cQ$ of counting queries. We introduce the notion of a generalized fingerprinting code with respect to $\cQ$, where each codeword corresponds to a data universe element $x \in \univ$ and the bits of the codeword are given by $q(x)$ for each $q \in \queryset$, but is the same as an ordinary fingerprinting code otherwise. The existence of a generalized fingerprinting code with respect to $\cQ$, for $n$ users, implies a sample complexity lower bound of $n$ for privately releasing answers to $\cQ$. We also show a partial converse to the above result, which states that some sort of ``fingerprinting-code-like object'' is necessary to prove sample complexity lower bounds for answering counting queries under differential privacy.  This object has similar semantics to a generalized fingerprinting code, however the marking assumption required for tracing is slightly stronger and the probability that tracing succeeds can be significantly smaller than what is required by the standard definition of fingerprinting codes.  Our partial converse parallels the result of Dwork et al.~\cite{DworkNaReRoVa09} that shows computational hardness results for differential privacy imply a ``traitor-tracing-like object.''  We leave it as an open question to pin down precisely the relationship between fingerprinting codes and information-theoretic lower bounds in differential privacy (and also between traitor-tracing schemes and computational hardness results for differential privacy).

\subsection{Other Related Work}
\subsubsection{Previous Work}
We have mostly focused on the sample complexity as a function of the number of queries, the number of attributes $d$, and the accuracy parameter $\alpha$.  There have been several works focused on the sample complexity as a function of the specific family $\cQ$ of queries. For $(\eps, 0)$-differential privacy, Hardt and Talwar~\cite{HardtTa10} showed how to approximately characterize the sample complexity of a family $\cQ$ when the accuracy parameter $\alpha$ is sufficiently small.  Nikolov, Talwar, and Zhang~\cite{NikolovTaZh13} extended their results to give an approximate characterization for $(\eps,\delta)$-differential privacy and for the full range of accuracy parameters.  Specifically,~\cite{NikolovTaZh13} give an $(\eps, \delta)$-differentially private algorithm that answers any family of queries $\cQ$ on $\bits^{d}$ with error $\alpha$ using a number of samples that is optimal up to a factor of $\poly(d, \log |\cQ|)$ that is independent of $\alpha$.  Thus, their algorithm has sample complexity that depends optimally on $\alpha$.  However, their characterization may be loose by a factor of $\poly(d, \log |\cQ|)$. In fact, when $\alpha$ is a constant, the lower bound on the sample complexity given by their characterization is always $O(1)$, whereas their algorithm requires $\poly(d, \log |\cQ|)$ samples to give non-trivially accurate answers.  In contrast, our lower bounds are tight to within $\poly(\log d, \log \log |\cQ|, \log(1/\alpha))$ factors, and thus give meaningful lower bounds even when $\alpha$ is constant, but apply only to certain families of queries.

There have been attempts to prove optimal sample complexity lower bounds for $k$-way marginals.  In particular, when $k$ is a constant, Kasiviswanathan et al.~\cite{KasiviswanathanRuSmUl10} and De~\cite{De12} prove a lower bound of $\min\{|\cQ|^{1/2} / \alpha, 1/\alpha^2\}$ on the sample complexity. Note that when $\alpha$ is a constant, these lower bounds are $O(1)$.

There have also been attempts to explicitly and precisely determine the sample complexity of even simpler query families than $k$-way conjunctions, such as point functions and threshold functions~\cite{BeimelKaNi10,BeimelNiSt13a,BeimelNiSt13b,BunNiStVa15}.  These works show that these families can have sample complexity lower than $\tilde{O}(\sqrt{d} \log |\cQ| / \alpha^2)$.

In addition to the general computational hardness results referenced above, there are several results that show stronger hardness results for restricted types of efficient algorithms~\cite{UllmanVa11, GuptaHaRoUl11,DworkNaVa12}.

\subsubsection{Subsequent Work}
Subsequent to our work, Steinke and Ullman~\cite{SteinkeUl15a} refined our use of fingerprinting codes to prove a lower bound of $\Omega(\sqrt{d \log(1/\delta)}/\eps)$ on the number of samples required to release the mean of
each of the $d$ attributes under $(\eps, \delta)$-differential privacy when $\delta \ll 1/n$.  This lower bound is optimal up to constant factors, and improves on Theorem~\ref{thm:main0} by a factor of roughly $\sqrt{\log(1/\delta)} \cdot \log d$.  They also improve and simplify our analysis of robust fingerprinting codes.

Our fingerprinting code technique has also been used to prove lower bounds for other types of differentially private data analyses.  Namely, Dwork et al.~\cite{DworkTaThZh14} prove lower bounds for differentially private principal component analysis and Bassily, Smith, and Thakurta~\cite{BassilySmTh14} prove lower bounds for differentially private empirical risk minimization.  In order to establish lower bounds for privately releasing threshold functions, Bun et al.~\cite{BunNiStVa15} construct a fingerprinting-code-like object that yields a lower bound for the problem of releasing a value between the minimum and maximum of a dataset.

Dwork et al.~\cite{DworkSmStUlVa15} observe that the privacy attack implicit in our negative results is closely related to the influential attacks that were employed by Homer et al.~\cite{Homer+08} (and further studied in~\cite{SankararamonObJoHa09}) to violate privacy of public genetic datasets.  Using this connection, they show how to make Homer et al.'s attack robust to very general models of noise and how to make the attack work without detailed knowledge of the population the dataset represents.

A pair of works~\cite{HardtUl14, SteinkeUl15b} show that fingerprinting codes and the related traitor-tracing schemes imply both information-theoretic lower bounds and computational hardness results for the ``false discovery'' problem in adaptive data analysis.  Specifically, they show lower bounds for answering an online sequence of adaptively chosen counting queries where the database is a sample from some unknown distribution and the answers must be accurate with respect to that distribution.  These works~\cite{HardtUl14, SteinkeUl15b} effectively reverse a connection established in~\cite{DworkFeHaPiReRo15, BassilySmStUl15}, which used differentially private algorithms to obtain positive results for this problem.

Our technique for composing lower bounds in differential privacy has also found applications outside of privacy.  Specifically, Liberty et al.~\cite{LibertyMiThUl14} used this technique to prove nearly optimal lower bounds on the space required to ``sketch'' a database while approximately preserving answers to $k$-way marginal queries (called ``frequent itemset queries'' in their work).


\section{Preliminaries}

\subsection{Differential Privacy}

We define a \emph{database} $\db \in \dbset$ to be an ordered tuple of $\rows$ rows $(\rowi{1},\dots,\rowi{\rows}) \in \univ$ chosen from a \emph{data universe} $\univ$.  We say that two databases $\db, \db' \in \univ^\rows$ are \emph{adjacent} if they differ only by a single row, and we denote this by $\db \sim \db'$. In particular, we can replace the $i$th row of a database $\db$ with some fixed ``junk'' element of $\univ$ to obtain another database $\db_{-i} \sim \db$. We emphasize that if $\db$ is a database of size $\rows$, then $\db_{-i}$ is also a database of size $\rows$.
\begin{definition}[Differential Privacy~\cite{DworkMcNiSm06}]\label{def:dp} Let $\san \from \univ^n \to \cR$ be a randomized algorithm (where $n$ is a varying parameter). $\san$ is \emph{$(\eps, \delta)$-differentially private} if for every two adjacent databases $\db \sim \db'$ and every subset $S \subseteq \cR$,
$$
\prob{\san(D) \in S} \leq e^{\eps} \Pr[\san(D') \in S] + \delta.
$$
\end{definition}

\begin{lemma} \label{lem:dp-exchange}
Let $\san \from \univ^n \to \cR$ be a randomized algorithm such that for every $\db \in \univ^n$, every $i, j \in [n]$, and every subset $S \subseteq \cR$,
\[\prob{\san(\db_{-i}) \in S} \leq e^{\eps} \prob{\san(\db_{-j}) \in S} + \delta.\]
Let $\bot$ denote the fixed junk element of $\univ$. Then $\san': \univ^{n-1} \to \cR$ defined by $\san'(x_1, \dots, x_{n-1}) = \san(x_1, \dots, x_{n-1}, \bot)$ is $(2\eps, (e^\eps + 1)\delta)$-differentially private.
\end{lemma}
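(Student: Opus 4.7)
The plan is to use the fact that the last coordinate of the padded database $\san'$ produces is already $\bot$, so we can treat it as a ``removed'' slot and apply the hypothesis to swap in other removals as intermediate steps.

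Fix two adjacent databases $x, y \in \univ^{n-1}$ differing only in coordinate $i$. Let $D = (x_1,\dots,x_{n-1}, \bot)$ and $D' = (y_1,\dots,y_{n-1}, \bot)$, so that $\san'(x) = \san(D)$ and $\san'(y) = \san(D')$. The key observation is that because the $n$-th entry of $D$ is already the junk element $\bot$, we have $D_{-n} = D$, and similarly $D'_{-n} = D'$. Moreover, because $x$ and $y$ agree outside coordinate $i$, the databases $D_{-i}$ and $D'_{-i}$ are identical: both are obtained by replacing positions $i$ and $n$ with $\bot$.

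Now I would apply the hypothesis twice, once to $D$ with the pair $(n, i)$ and once to $D'$ with the pair $(i, n)$. This gives, for every measurable $S \subseteq \cR$,
\[
\prob{\san'(x) \in S} = \prob{\san(D_{-n}) \in S} \leq e^{\eps} \prob{\san(D_{-i}) \in S} + \delta,
\]
\[
\prob{\san(D'_{-i}) \in S} \leq e^{\eps} \prob{\san(D'_{-n}) \in S} + \delta = e^{\eps}\prob{\san'(y) \in S} + \delta.
\]
Since $D_{-i} = D'_{-i}$, chaining these two inequalities yields
\[
\prob{\san'(x) \in S} \leq e^{\eps}\bigl(e^{\eps} \prob{\san'(y) \in S} + \delta\bigr) + \delta = e^{2\eps}\prob{\san'(y) \in S} + (e^{\eps}+1)\delta,
\]
which is exactly the $(2\eps, (e^\eps+1)\delta)$-differential privacy guarantee.

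There is no real obstacle here beyond correctly identifying the intermediate database: the argument is a two-step triangle inequality in which the slot occupied by $\bot$ in $\san'$'s input serves as the ``reference'' coordinate used by the hypothesis. The only thing to be careful about is checking that $D_{-n}=D$ and $D_{-i} = D'_{-i}$, both of which follow immediately from how $\san'$ was defined and from the assumption that $x$ and $y$ differ only in position $i$.
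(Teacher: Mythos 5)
Your proof is correct and is essentially identical to the paper's: both arguments pad with $\bot$, use the $\bot$-occupied slot $n$ as the reference coordinate, pass through the common intermediate database $D_{-i}=D'_{-i}$ (with $\bot$ in positions $i$ and $n$), and chain two applications of the hypothesis to accumulate the $(2\eps,(e^\eps+1)\delta)$ loss.
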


\begin{proof}
Let $D = (x_1, \dots, x_{n-1})$ and $D' = (x_1, \dots, x_i', \dots, x_{n-1})$ be adjacent databases. Then for any $S \subseteq \cR$, we have
\begin{align*}
\prob{\san'(\db) \in S} &= \prob{\san(x_1, \dots, x_{n-1}, \bot) \in S} \\
&\le e^{\eps} \prob{\san(x_1, \dots, x_{i-1}, \bot, x_{i+1}, \dots, x_{n-1}, \bot) \in S} + \delta \\
&\le e^{2\eps} \prob{\san(x_1, \dots, x_{i-1}, x_i', x_{i+1}, \dots, x_{n-1}, \bot) \in S} + (e^\eps + 1)\delta \\
&= e^{2\eps} \prob{\san'(\db') \in S}  + (e^\eps + 1) \delta.
\end{align*}
\end{proof}

\subsection{Counting Queries and Accuracy}
In this paper we study algorithms that answer \emph{counting queries}.  A counting query on $\univ$ is defined by a predicate $\query \from \univ \to \bits$.  Abusing notation, we define the evaluation of the query $\query$ on a database $\db = (\rowi{1},\dots,\rowi{\rows}) \in \dbset$ to be its average value over the rows,
$$
\query(\db) = \frac{1}{\rows} \sum_{i=1}^{\rows} \query(\rowi{i}).
$$

\begin{definition}[Accuracy for Counting Queries]
Let $\queryset$ be a set of counting queries on $\univ$ and $\alpha, \beta \in [0,1]$ be parameters.  For a database $\db \in \dbset$, a sequence of answers $\answer = (\answerq{\query})_{\query \in \queryset} \in \R^{|\queryset|}$ is \emph{$(\alpha, \beta)$-accurate for $\queryset$} if
$
\left|\query(\db) - \answerq{\query}\right| \leq \alpha
$
for at least a $1-\beta$ fraction of queries $\query \in \queryset$.  

Let $\san \from \dbset \to \R^{|\queryset|}$ be a randomized algorithm.  $\san$ is \emph{$(\alpha,\beta)$-accurate for $\queryset$} if for every $\db \in \dbset$,
$$
\prob{\textrm{$\san(\db)$ is $(\alpha,\beta)$-accurate for $\queryset$}} \geq 2/3.
$$
When $\beta = 0$ we may simply write that $\answer$ or $\san$ is \emph{$\alpha$-accurate for $\queryset$}.
\end{definition}

In the definition of accuracy, we have assumed that $\san$ outputs a sequence of $|\queryset|$ real-valued answers, with $\answerq{\query}$ representing the answer to $\query$.  Since we are not concerned with the running time of the algorithm, this assumption is without loss of generality.\footnote{In certain settings, $\san$ is allowed to output a ``summary'' $z \in \cR$ for some range $\cR$.  In this case, we would also require that there exists an ``evaluator'' $\cE \from \cR \times \queryset \to \R$ that takes a summary and a query and returns an answer $\cE(z,\query) = \answerq{\query}$ that approximates $\query(\db)$.  The extra generality is used to allow $\san$ to run in less time than the number of queries it is answering.  However, since we do not bound the running time of $\san$ we can convert any such sanitizer to one that outputs a sequence of $|\queryset|$ real-valued answers simply by running the evaluator for every $\query \in \queryset$.}

An important example of a collection of counting queries is the set of \emph{$k$-way marginals}.  For all of our results it will be sufficient to consider only the set of \emph{monotone $k$-way marginals}.

\begin{definition}[Monotone $k$-way Marginals]
A \emph{(monotone) $k$-way marginal} $q_S$ over $\bits^d$ is specified by a subset $S \subseteq [d]$ of size $|S| \le k$. It takes the value $q_S(x) = 1$ if and only if $x_i = 1$ for every index $i \in S$. The collection of all (monotone) $k$-way marginals is denoted by $\kdconj{k}$.
\end{definition}

\subsection{Sample Complexity}
In this work we prove lower bounds on the sample complexity required to simultaneously achieve differential privacy and accuracy.
\begin{definition}[Sample Complexity]
Let $\queryset$ be a set of counting queries on $\univ$ and let $\alpha,\beta > 0$ be parameters, and let $\eps, \delta$ be functions of $n$.  We say that \emph{$(\queryset, \univ)$ has sample complexity $\wcsc$ for $(\alpha,\beta)$-accuracy and $(\eps, \delta)$-differential privacy} if $\wcsc$ is the least $\rows \in \N$ such that there exists an $(\eps, \delta)$-differentially private algorithm $\san \from \univ^n \to \R^{|\queryset|}$ that is $(\alpha,\beta)$-accurate for $\queryset$.
\end{definition}

We will focus on the case where $\eps = O(1)$ and $\delta = o(1/\rows)$.   This setting of the parameters is essentially the most-permissive for which $(\eps,\delta)$-differential privacy is still a meaningful privacy definition.  However, pinning down the exact dependence on $\eps$ and $\delta$ is still of interest. Regarding $\eps$, this can be done via the following standard lemma, which allows us to take $\eps = 1$ without loss of generality.
\begin{lemma} \label{lem:epsforfree}
For every set of counting queries $\queryset$, universe $\univ$, $\alpha, \beta \in [0,1], \eps \leq 1$.  $(\queryset, \univ)$ has sample complexity $\wcsc$ for $(\alpha,\beta)$-accuracy and $(1, o(1/\rows))$-differential privacy if and only if it has sample complexity $\Theta(\wcsc / \eps)$ for $(\alpha, \beta)$-accuracy and $(\eps, o(1/\rows))$-differential privacy.
\end{lemma}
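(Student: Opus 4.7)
The plan is to establish the equivalence in both directions via two standard reductions linking the privacy parameter $\eps$ to the sample size: a \emph{replication} plus group-privacy argument for the lower bound on $n$, and a \emph{privacy amplification by subsampling} argument for the upper bound on $n$.

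For the direction $n \geq \Omega(\wcsc/\eps)$, I will start with an $(\eps, o(1/n))$-differentially private $\san$ that is $(\alpha,\beta)$-accurate on $n$-row databases. Given any database $\db$ of size $m = \lfloor n\eps \rfloor$, form $\widetilde{\db}$ of size $n$ by replicating each row of $\db$ approximately $1/\eps$ times (padding with a fixed junk row if needed). Because counting queries are averages over rows, $\query(\widetilde{\db}) = \query(\db)$ for every $\query$, so $\san(\widetilde{\db})$ is $(\alpha,\beta)$-accurate for $\db$. Modifying a single row of $\db$ induces a change in at most $\lceil 1/\eps \rceil$ rows of $\widetilde{\db}$, so the standard group-privacy property upgrades the guarantee to $(1, O(\delta/\eps))$-differential privacy as a function of $\db$. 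Since $\delta = o(1/n)$ forces $O(\delta/\eps) = o(1/m)$, this yields $\wcsc \leq m = O(n \eps)$, i.e.\ $n = \Omega(\wcsc/\eps)$.

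For the direction $n \leq O(\wcsc/\eps)$, I will start with a $(1, o(1/\wcsc))$-DP algorithm $\san$ that is $(\alpha,\beta)$-accurate on $\wcsc$-row databases. On a larger database $\db$ of size $n = O(\wcsc/\eps)$, define $\san'(\db)$ to draw a uniformly random subset $S \subseteq [n]$ of size $\wcsc$ and output $\san(\db|_S)$. The standard privacy amplification by subsampling lemma shows that $\san'$ is $(\eps, O(\eps \delta))$-DP, and $O(\eps \delta) = o(1/n)$. For accuracy, $\Ex{S}{\query(\db|_S)} = \query(\db)$ for every $\query$, and Hoeffding's inequality for sampling without replacement plus a union bound over $\queryset$ shows $|\query(\db|_S) - \query(\db)| = O(\sqrt{\log|\queryset|/\wcsc})$ simultaneously for all $\query$ with high probability over $S$.

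The main technical subtlety will be the exact preservation of $\alpha$ across the subsampling reduction: the argument above loses an additive $O(\sqrt{\log|\queryset|/\wcsc})$ in accuracy, which is at most a constant fraction of $\alpha$ whenever $\wcsc = \Omega(\log|\queryset|/\alpha^2)$---a condition already forced by the statistical accuracy requirement. Starting the reduction from an algorithm with accuracy parameter $\alpha/c$ for an appropriate constant $c$, and using the fact that the sample complexity changes by only a constant factor under this rescaling, lets us absorb the loss and conclude with exact $(\alpha,\beta)$-accuracy and the desired $n = \Theta(\wcsc/\eps)$.
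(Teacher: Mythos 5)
The paper does not actually write out a proof of this lemma: it states that the $O(\wcsc/\eps)$ direction is the ``secrecy-of-the-sample'' lemma of Kasiviswanathan et al.\ and that the $\Omega(\wcsc/\eps)$ direction is folklore. Your two reductions are exactly these intended arguments, so you have identified the right route. The replication-plus-group-privacy direction is essentially sound: taking $k = \lfloor 1/\eps \rfloor$ copies of each of $m = \lfloor n/k \rfloor$ rows keeps the group-privacy parameter at $k\eps \le 1$ and the failure probability at $O(\delta/\eps) = o(1/m)$, and the only care needed is that the $< k$ junk padding rows perturb each counting query by at most $O(1/m)$, which you should either absorb explicitly or avoid by arranging $k \mid n$.

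The genuine gap is in your patch for the accuracy loss in the subsampling direction. You invoke two claims that are not theorems under the paper's definition of accuracy (which is empirical, i.e.\ relative to the input database $\db$ itself, not to a population): (i) that $\wcsc = \Omega(\log|\queryset|/\alpha^2)$ is ``forced by the statistical accuracy requirement,'' and (ii) that rescaling $\alpha$ by a constant changes the sample complexity by only a constant factor. Claim (i) fails already for $|\queryset| = O(1)$, where $\wcsc = \Theta(1/\eps\alpha)$; in that regime the subsampling error $\Theta(\sqrt{1/\wcsc}) = \Theta(\sqrt{\alpha})$ dwarfs $\alpha$, so the loss cannot be absorbed. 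Claim (ii) has no general proof: $\wcsc$ as a function of $\alpha$ is just defined as a minimum, and nothing rules out a large jump between $\alpha$ and $\alpha/c$. The honest conclusion of the subsampling argument is an $(\eps, o(1/n))$-private algorithm on $O(\wcsc/\eps)$ rows that is $(\alpha + O(\sqrt{\log|\queryset|/\wcsc}), \beta)$-accurate, i.e.\ the lemma as literally stated (exact preservation of $\alpha$) requires either an additional hypothesis on the regime of parameters or a restatement with this additive slack --- a looseness present in the folklore statement itself, which the paper sidesteps by not proving the lemma and by only using the $\eps = 1$ case in its lower bounds.
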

\ifnum\stocsubmit=1
\fi

\subsection{Re-identifiable Distributions}
All of our eventual lower bounds will take the form of a ``re-identification'' attack, in which we possess data from a large number of individuals, and identify one such individual who was included in the database.  In this attack, we choose a distribution on databases and give an adversary 1) a database $D$ drawn from that distribution and 2) either $\san(D)$ or $\san(D_{-i})$ for some row $i$, where $\san$ is an alleged sanitizer.  The adversary's goal is to identify a row of $D$ that was given to the sanitizer.  We say that the distribution is re-identifiable if there is an adversary who can identify such a row with sufficiently high confidence whenever $\san$ outputs accurate answers.  If the adversary can do so, it means that there must be a pair of adjacent databases $D \sim D_{-i}$ such that the adversary can distinguish $\san(D)$ from $\san(D_{-i})$, which means $\san$ cannot be differentially private.

\begin{definition}[Re-identifiable Distribution] \label{def:reidentifiabledist}
For a data universe $\univ$ and $\rows \in \N$, let $\cD$ be a distribution on $\rows$-row databases $\db \in \univ^\rows$.
Let $\queryset$ be a family of counting queries on $\univ$ and let $\gamma, \sec, \alpha, \beta \in [0,1]$ be parameters.  The distribution $\cD$ is \emph{$(\gamma,\sec)$-re-identifiable from $(\alpha,\beta)$-accurate answers to $\cQ$} if there exists a (possibly randomized) adversary $\adv \from \univ^\users \times \R^{|\cQ|} \to [n] \cup \{\bot\}$ such that for every randomized algorithm $\san \from \univ^\rows \to \R^{|\cQ|}$, the following both hold:
\begin{enumerate}
\item 
$
\Prob{\db \getsr \cD}{(\adv(\db, \san(\db)) = \bot) \land (\textrm{$\san(\db)$ is $(\alpha,\beta)$-accurate for $\queryset$)}} \le \gamma.
$
\item
For every $i \in [\rows]$,
$
\Prob{\db \getsr \cD}{\adv(\db,  \san(\db_{-i})) = i} \leq \sec.
$
\end{enumerate}
Here the probability is taken over the choice of $D$ and $i$ as well as the coins of $\san$ and $\adv$.  We allow $\cD$ and $\adv$ to share a common state.
\end{definition}

Note that, when row $i$ is not in the dataset, then it would be an error for $\adv$ to declare that row $i$ is in the dataset, and condition 2 requires that the probability of this error occurring is at most $\sec$.

The common state between $\cD$ and $\adv$ should be thought of as auxiliary information about the realization of $\db$ that may help $\adv$ identify a user $i$. Formally, we could model this shared state by having $\cD$ output an additional string $aux$ that is given to $\adv$ but not to $\san$.  However, we make the shared state implicit to reduce notational clutter. The need for this shared state will become apparent when we use fingerprinting codes to construct re-identifiable distributions; in the context of fingerprinting codes, the shared state represents auxiliary information about a codebook that helps the $\trace$ algorithm accuse a guilty pirate.

If $\san$ is an $(\alpha, \beta)$-accurate algorithm, then its output $\san(\db)$ will be $(\alpha, \beta)$-accurate with probability at least $2/3$.  Therefore, if $\gamma < 2/3$, we can conclude that $\prob{\adv(\db, \san(\db)) \in [\rows]} \geq 1- \gamma - 1/3 = \Omega(1)$.  In particular, there exists some $i^* \in [\rows]$ for which $\prob{\adv(\db, \san(\db)) = i^*} \geq \Omega(1/\rows)$.  However, if $\sec = o(1/\rows)$, then $\prob{\adv(\db, \san(\db_{-i^*})) = i^*} \leq \sec = o(1/\rows)$.  Thus, for this choice of $\gamma$ and $\sec$ we will obtain a contradiction to $(\eps, \delta)$-differential privacy of the post-processed algorithm $\adv(\db, \san(\cdot))$ for any $\eps = O(1)$ and $\delta = o(1/\rows)$.  Note that this conclusion holds even if $\cD$ and $\adv$ share a common state.
\ifnum\stocsubmit=1
\fi

\section{Lower Bounds via Fingerprinting Codes} \label{sec:1way}

In this section we prove that there exists a simple family of $\cols$ queries that requires $n \geq \tilde{\Omega}(\sqrt{d})$ samples for both accuracy and privacy.  Specifically, we prove that for the family of $1$-way marginals on $\cols$ bits, sample complexity $\tilde{\Omega}(\sqrt{d})$ is required to produce differentially private answers that are accurate even just to within $\pm 1/3$.  In contrast, without a privacy guarantee, $\Theta(\log d)$ samples from the population are necessary and sufficient to ensure that the answers to these queries on the database and the population are approximately the same.  The best previous lower bound for $(\eps,\delta)$-differential privacy is also $O(\log d)$, which follows from the techniques of~\cite{DinurNi03, Roth10}.

In Section~\ref{sec:fpcs} we give the relevant background on fingerprinting codes and in Section~\ref{sec:lb1way} we prove our lower bounds for $1$-way marginals.

\subsection{Fingerprinting Codes} \label{sec:fpcs}
Fingerprinting codes were introduced by Boneh and Shaw~\cite{BonehSh98} to address the problem of watermarking digital content.  A \emph{fingerprinting code} is a pair of randomized algorithms $(\gen, \trace)$.  The code generator $\gen$ outputs a \emph{codebook} $\codebook \in \bits^{\users \times \len}$.  Each row $\codewordi{i}$ of $C$ is the \emph{codeword} of user $i$.  For a subset of users $S \subseteq [\users]$, we use $\codebookS{S} \in \bits^{|S| \times \len}$ to denote the set of codewords of users in $S$.  The parameter $d$ is called the \emph{length} of the fingerprinting code.

The security property of fingerprinting codes asserts that any codeword can be ``traced'' to a user $i \in [\users]$.  Moreover, we require that the fingerprinting code is ``fully-collusion-resilient''---even if any ``coalition'' of users $S \subseteq [\users]$ gets together and ``combines'' their codewords in any way that respects certain constraints known as a \emph{marking assumption}, then the combined codeword $c'$ can be traced to a user $i \in S$.  That is, there is a tracing algorithm $\trace$ that takes as inputs the codebook and combined codeword $c'$ and outputs either a user $i \in [\users]$ or $\bot$, and we require that if $\pirateword$ satisfies the constraints, then $\trace(\codebook, \pirateword) \in S$ with high probability.  Moreover, $\trace$ should accuse an innocent user, i.e. $\trace(\codebook, \pirateword) \in [\users] \setminus S$, with very low probability.  Analogous to the definition of re-identifiable distributions (Definition~\ref{def:reidentifiabledist}), we allow $\gen$ and $\trace$ to share a common state.\footnote{As in Definition~\ref{def:reidentifiabledist}, we could model this by having $\gen$ output an additional string $aux$ that is given to $\trace$.  However, we make the shared state implicit to reduce notational clutter.}  When designing fingerprinting codes, one tries to make the marking assumption on the combined codeword as weak as possible.

The basic marking assumption is that each bit of the combined word $\pirateword$ must match the corresponding bit for some user in $S$.  Formally, for a codebook $\codebook \in \bits^{\users \times \len}$, and a coalition $S \subseteq [\users]$, we define the set of \emph{feasible codewords for $\codebookS{S}$} to be
$$
F(\codebookS{S}) = \set{\pirateword \in \bits^{\len} \mid \forall j \in [\len], \exists i \in S, \piratewordj{j} = \codewordij{i}{j}}.
$$
Observe that the combined codeword is only constrained on coordinates $j$ where all users in $S$ agree on the $j$-th bit.

We are now ready to formally define a fingerprinting code.
\begin{definition}[Fingerprinting Codes] \label{def:fpc}
For any $\users, \len \in \N$, $\sec \in (0,1]$, a pair of algorithms $(\gen, \trace)$ is an \emph{$(\users,\len)$-fingerprinting code with security $\sec$} if $\gen$ outputs a codebook $\codebook \in \bits^{\users \times \len}$ and for every (possibly randomized) adversary $\fpadv$, and every coalition $S \subseteq [\users]$, if we set $\pirateword \getsr \fpadv(\codebookS{S})$, then
\begin{enumerate}
\item 
$
\prob{\pirateword \in F(\codebookS{S}) \land \trace(\codebook, \pirateword) = \bot} \leq \sec,
$
\item 
$
\prob{\trace(\codebook, \pirateword) \in [\users] \setminus S} \leq \sec,
$
\end{enumerate}
where the probability is taken over the coins of $\gen, \trace$, and $\fpadv$.  The algorithms $\gen$ and $\trace$ may share a common state.
\end{definition}

We remark that our proof of Theorem~\ref{thm:fpctolb}, showing how to construct re-identifiable distributions from a fingerprinting codes, will only require collusion resilience against coalitions $S$ of size $|S| \ge n-1$. Our choice to state Definition~\ref{def:fpc} using resilience against arbitrary coalitions is more consistent with the literature on fingerprinting codes.

Tardos~\cite{Tardos08} constructed a family of fingerprinting codes with a nearly optimal number of users $\users$ for a given length $d$.
\begin{theorem}[\cite{Tardos08}] \label{thm:fpc}
For every $\len \in \N$, and $\sec \in [0,1]$, there exists an $(\users, \len)$-fingerprinting code with security $\sec$ for
$$
\users = \users(\len, \sec) = \tilde{\Omega}(\sqrt{\len / \log(1/\sec)}).
$$
\end{theorem}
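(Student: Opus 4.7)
The plan is to reproduce Tardos' construction of a nearly-optimal fingerprinting code and verify its security by concentration arguments. Fix a length $\len$ and user bound $\users$ with $\len = \Theta(\users^2 \log(\users/\sec))$. Set $t = \Theta(1/\users)$ and let $\pi$ be the distribution on $[t,1-t]$ with density proportional to $1/\sqrt{p(1-p)}$, i.e.\ a truncated arcsine distribution. The generator $\fpgen$ first samples $p_1,\dots,p_\len$ i.i.d.\ from $\pi$, then generates codeword bits $\codewordij{i}{j} \sim \mathrm{Bernoulli}(p_j)$ independently for each $i \in [\users]$ and $j \in [\len]$. The biases $(p_j)_j$ are retained as shared state for use by $\fptrace$.

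For tracing, define $\sigma(p) = \sqrt{(1-p)/p}$ and $\tau(p) = \sqrt{p/(1-p)}$, and to each user $i$ assign the score $S_i = \sum_{j=1}^\len \piratewordj{j} U_{ij}$, where $U_{ij} = \sigma(p_j)$ if $\codewordij{i}{j}=1$ and $U_{ij} = -\tau(p_j)$ if $\codewordij{i}{j}=0$. The tracer $\fptrace$ outputs the smallest $i$ with $S_i \geq Z$ for a threshold $Z = \Theta(\users \sqrt{\log(\users/\sec)})$; if no such $i$ exists it outputs $\bot$.

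The analysis proceeds in two main parts. First, for any innocent user $i \notin S$, the codeword $\codewordi{i}$ is independent of $\pirateword$, and $\Ex{p_j}{U_{ij} \mid p_j} = p_j \sigma(p_j) - (1-p_j)\tau(p_j) = 0$. Hence $S_i$ is a sum of bounded independent mean-zero contributions (the truncation $t = \Theta(1/n)$ ensures $|U_{ij}| \leq O(\sqrt{\users})$), and a Hoeffding-type bound gives $\prob{S_i \geq Z} \leq \sec/\users$, which after a union bound yields the false-accusation condition (2). Second, one shows that whenever $\pirateword \in F(\codebookS{S})$, the total coalition score satisfies $\sum_{i \in S} S_i \geq \users Z$ with probability at least $1-\sec$. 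Conditional on $p_j$ and the coalition's bits, $\Ex{}{\sum_{i \in S}\piratewordj{j} U_{ij} \mid p_j} = \piratewordj{j}\bigl(n^{(j)}_1 \sigma(p_j) - n^{(j)}_0 \tau(p_j)\bigr)$ where $n^{(j)}_b$ counts coalition members with bit $b$ in column $j$. Tardos' key lemma shows that after averaging over $p_j \sim \pi$, the expected per-column contribution is bounded below by a universal positive constant, no matter how the coalition mixes its bits in that column (this is where the arcsine density is essential). Summing and concentrating gives $\sum_{i\in S} S_i = \Omega(\len)$, and for $\len = \Omega(\users^2 \log(\users/\sec))$ this exceeds $\users Z$, so by averaging some $i \in S$ exceeds $Z$ and is accused, establishing condition (1).

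The main obstacle is the per-column lower bound in the second part: one must argue that no matter how a (possibly adaptive) coalition chooses $\piratewordj{j}$ as a function of its own bits, the expected score contribution against the arcsine-weighted bias is bounded from below by a constant. This is the technical heart of Tardos' construction, and I would quote it from \cite{Tardos08} rather than rederive it. Solving $\len = \Theta(\users^2 \log(\users/\sec))$ for $\users$ and absorbing $\log \users$ into the $\tilde{\Omega}$ notation yields $\users = \tilde{\Omega}(\sqrt{\len/\log(1/\sec)})$, as desired.
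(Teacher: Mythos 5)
The paper does not actually prove this theorem; it is imported wholesale from \cite{Tardos08}, and the construction you describe is exactly the one the paper later restates in Section~6 (Figure~10): truncated arcsine biases with $t=\Theta(1/\users)$, the score $S_i=\sum_j \piratewordj{j}U_{ij}$ with $U_{ij}\in\{q_j,-1/q_j\}$, soundness from the fact that $\Ex{}{U_{ij}\mid p_j}=0$ for users outside the coalition, and completeness from Tardos' per-column lower bound on the expected coalition score. Your architecture is therefore the right one, and black-boxing the per-column lemma is reasonable for a cited result.

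There is, however, one concrete parameter error that breaks the soundness step as you wrote it: you set the threshold $Z=\Theta(\users\sqrt{\log(\users/\sec)})$, but it must be $Z=\Theta(\users\log(\users/\sec))$ (the paper uses $Z=20\users\log(\users/\sec)$ with accusation at $Z/2$). For an innocent user $i$, each term $\piratewordj{j}U_{ij}$ has conditional variance at most $\Ex{}{U_{ij}^2\mid p_j}=p_j\cdot\frac{1-p_j}{p_j}+(1-p_j)\cdot\frac{p_j}{1-p_j}=1$, so $S_i$ has variance up to $\len=\Theta(\users^2\log(\users/\sec))$ and hence standard deviation $\Theta(\users\sqrt{\log(\users/\sec)})$ --- which is exactly your threshold. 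A Hoeffding/Bernstein bound at one standard deviation gives only a constant tail probability, not the $\sec/\users$ you claim, so condition~(2) would fail; to get $\exp(-Z^2/2\len)\le\sec/\users$ you need $Z\gtrsim\sqrt{2\len\log(\users/\sec)}=\Theta(\users\log(\users/\sec))$. The completeness side still works after this correction, since the per-column constant gives $\sum_{i\in S}S_i=\Omega(\len)=\Omega(\users^2\log(\users/\sec))$, which exceeds $\users Z$ for a suitable choice of the leading constants (this is precisely why Tardos takes $\len=100\users^2\log(\users/\sec)$ against $Z=20\users\log(\users/\sec)$). With that fix your outline is a faithful sketch of Tardos' proof.
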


As we will see in the next subsection, fingerprinting codes satisfying Definition~\ref{def:fpc} will imply lower bounds on the sample complexity for releasing $1$-way marginals with $(\alpha, 0)$-accuracy (accuracy for every query).  In order to prove sample-complexity lower bounds for $(\alpha, \beta)$-accuracy with $\beta > 0$, we will need fingerprinting codes satisfying a stronger security property.  Specifically, we will expand the feasible set $F(\codebookS{S})$ to include all codewords that satisfy most feasibility constraints, and require that even codewords in this expanded set can be traced.  Formally, for any $\rob \in [0,1]$, we define
$$
F_{\rob}(\codebookS{S}) = \set{\pirateword \in \bits^{\len} \mid \Prob{j \getsr [d]}{\exists i \in S, \piratewordj{j} = \codewordij{i}{j}} \ge 1 - \rob}.
$$
Observe that $F_{0}(\codebookS{S}) = F(\codebookS{S})$.

\begin{definition}[Error-Robust Fingerprinting Codes] \label{def:rfpc}
For any $\users, \len \in \N$, $\sec, \rob \in [0,1]$, a pair of algorithms $(\gen, \trace)$ is an \emph{$(\users,\len)$-fingerprinting code with security $\sec$ robust to a $\rob$ fraction of errors} if $\gen$ outputs a codebook $\codebook \in \bits^{\users \times \len}$ and for every (possibly randomized) adversary $\fpadv$, and every coalition $S \subseteq [\users]$, if we set $\pirateword \getsr \fpadv(\codebookS{S})$, then
\begin{enumerate}
\item 
$
\prob{\pirateword \in F_{\rob}(\codebookS{S}) \land \trace(\codebook, \pirateword) = \bot} \leq \sec,
$
\item 
$
\prob{\trace(\codebook, \pirateword) \in [\users] \setminus S} \leq \sec,
$
\end{enumerate}
where the probability is taken over the coins of $\gen, \trace$, and $\fpadv$.  The algorithms $\gen$ and $\trace$ may share a common state.
\end{definition}

In Section~\ref{chap:rfpc} we show how to construct error-robust fingerprinting codes with a nearly-optimal number of users that are tolerant to a constant fraction of errors.
\begin{theorem} \label{thm:rfpc}
For every $\len \in \N$, and $\sec \in (0,1]$, there exists an $(\users, \len)$-fingerprinting code with security $\sec$ robust to a $1/75$ fraction of errors for
$$
\users = \users(\len, \sec) = \tilde{\Omega}(\sqrt{\len / \log(1/\sec)}).
$$
\end{theorem}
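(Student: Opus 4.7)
The plan is to build an error-robust fingerprinting code by concatenating Tardos's construction (Theorem~\ref{thm:fpc}) with a simple repetition code. Fix a small constant $t$, set $\ell_0 = \ell/t$, and instantiate a standard Tardos code $(\gen_0, \trace_0)$ of length $\ell_0$, supporting $\users = \tilde{\Omega}(\sqrt{\ell_0/\log(1/\sec)})$ users with security $\sec/2$. The robust codebook $\codebook \in \bits^{\users \times \ell}$ is obtained by taking Tardos's codebook $\codebook^T$ and repeating each bit $t$ times: user $i$'s robust codeword is the concatenation $(\codebook^T_{i1})^t (\codebook^T_{i2})^t \cdots (\codebook^T_{i\ell_0})^t$. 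A uniformly random permutation of the $\ell$ columns may be applied and kept as part of the shared state between $\gen$ and $\trace$.

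To trace a purported codeword $\pirateword \in \bits^\ell$, the tracer first undoes the permutation, then partitions the result into $\ell_0$ blocks of size $t$, decodes each block to a single bit by majority vote, producing a Tardos codeword $\pirateword^T \in \bits^{\ell_0}$, and finally runs $\trace_0$ on $\pirateword^T$.

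The analysis splits into two steps. First, I will show that if $\pirateword \in F_{\rob}(\codebookS{S})$ for $\rob = 1/75$, then the decoded codeword $\pirateword^T$ is ``mostly feasible'' for the Tardos codebook, failing Tardos feasibility at only a small constant fraction of positions. Indeed, majority decoding of the $j$-th block can produce an infeasible Tardos bit only when the coalition $S$ unanimously agrees on Tardos bit $j$ and strictly more than half of the $t$ copies in the block disagree with that bit. Each such flip costs the adversary at least $\lceil t/2 \rceil + 1$ units of its budget of $\rob \ell = \rob t \ell_0$ infeasibility violations, so the number of Tardos positions decoded incorrectly is at most $\rob t \ell_0 / \lceil t/2 + 1 \rceil$, a constant fraction of $\ell_0$ that is bounded above by $2\rob = 2/75$.

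The main technical obstacle is the second step: showing that Tardos tracing still succeeds on $\pirateword^T$, which is only mostly feasible, whereas Theorem~\ref{thm:fpc} guarantees success only for codewords in $F(\codebookS{S}^T)$. To address this, I will re-examine Tardos's tracing algorithm, whose scoring function for each user is an additive sum across coordinates of bounded i.i.d.\ contributions determined by the secret column parameters $p_j$. Adversarially flipping a small constant fraction of coordinates perturbs each user's score by an amount controlled by a sum of independent bounded random variables, so by the standard Hoeffding-type concentration arguments underlying Tardos's own analysis, the guilty-versus-innocent score gap still exceeds the accusation threshold provided we boost the inner length $\ell_0$ and the inner security parameter by constant factors to absorb the signal loss. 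Combining with the first step yields total tracing success probability $1 - \sec$, and since the repetition factor $t$ and all security-boosting factors are constants, the resulting code supports $\users = \tilde{\Omega}(\sqrt{\ell/\log(1/\sec)})$ users as claimed.
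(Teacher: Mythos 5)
Your construction cannot work, and the failure is at the very first step. In the repeated code, a column is marked (all users agree) exactly when the underlying Tardos column $j$ is marked, and such columns remain identifiable by inspection from $\codebookS{S}$: unanimity is visible to the adversary. The random permutation therefore hides nothing. Now count: Tardos' code of length $\ell_0 = \tilde{O}(\users^2)$ has only $m = \tilde{O}(\users^{3/2})$ marked columns (Lemma~\ref{lem:manymarkedcols}), so the repeated code has $tm$ marked columns out of $t\ell_0$ total, a vanishing $\tilde{O}(1/\sqrt{\users})$ fraction. The adversary's budget of $\rob t \ell_0 = (1/75)\,t\ell_0$ errors therefore dwarfs the total number $tm$ of marked columns. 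A trivially feasible adversary can flip \emph{every} marked column (e.g.\ output the all-$0$ word, which lies in $F_\rob$ once $\users$ is large). After majority decoding you then hand $\trace_0$ the all-$0$ Tardos word, for which every score $S_i$ is $0$, and tracing necessarily fails. Your bound that ``the number of Tardos positions decoded incorrectly is at most $2\rob\ell_0$'' is true but vacuous, because $m \ll 2\rob\ell_0$; it does not prevent the adversary from corrupting all marked positions.

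This is precisely the obstruction the paper identifies at the start of Section~\ref{chap:rfpc}: tolerating a $\rob$ fraction of errors among \emph{all} $\len$ columns requires the code to have at least $\rob\len$ marked columns, and no stretching or repetition of Tardos' code changes its marked-column density. The paper's remedy is structurally different from a repetition code. First it shows Tardos is ``weakly robust'': tracing survives if errors hit only a small constant fraction of the \emph{marked} columns (Lemma~\ref{lem:weakrobusttardos}), a guarantee your Step~5 would essentially need as well, but it is meaningful only when the error count is small relative to $m$, not to $\len$. Then it appends $\Theta(\len)$ dummy $0$-marked and $\Theta(\len)$ dummy $1$-marked columns before permuting (Lemma~\ref{lem:fpcreduction}). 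This boosts the marked fraction to a constant, and, crucially, makes real marked columns statistically indistinguishable from fake ones, so that any $\rob'\len'$ errors the adversary places among marked columns land almost entirely on fakes; a hypergeometric concentration bound shows the real marked columns absorb only a small constant fraction of their own number in errors, which weak robustness can tolerate. If you want to salvage your approach, you must change it so that (i) the number of marked columns becomes $\Theta(\len)$ and (ii) real marked columns are camouflaged among dummies --- at which point you have reconstructed the paper's reduction.
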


Boneh and Naor~\cite{BonehNa08} introduced a different notion of fingerprinting codes robust to adversarial ``erasures''.  In their definition, the adversary is allowed to output a string in $\{0,1,?\}^d$, and in order to trace they require that the fraction of ?~symbols is bounded away from $1$ and that any non-? symbols respect the basic feasibility constraint.  For this definition, constructions with nearly-optimal length $d = \tilde{O}(n^2)$, robust to a $1 - o(1)$ fraction of erasures are known~\cite{BonehKiMo10}.  In contrast, our codes are robust to adversarial ``errors.''  Robustness to a $\rob$ fraction of errors can be seen to imply robustness to nearly a $2\rob$ fraction of erasures but the converse is false.  Thus for corresponding levels of robustness our definition is strictly more stringent.  Unfortunately we don't currently know how to design a code tolerant to a $1/2-o(1)$ fraction of errors, so our Theorem~\ref{thm:rfpc} does not subsume prior results on robust fingerprinting codes.
\subsection{Lower Bounds for $1$-Way Marginals} \label{sec:lb1way}

We are now ready to state and prove the main result of this section, namely that there is a distribution on databases $D \in (\bits^{\cols})^\rows$, for $\rows = \tilde{\Omega}(\sqrt{d})$, that is re-identifiable from accurate answers to $1$-way marginals.

\begin{theorem} \label{thm:fpctolb}
For every $\users, \len \in \N$, and $\sec \in [0,1]$ if there exists an $(\users,\len)$-fingerprinting code with security $\sec$, robust to a $\rob$ fraction of errors, then there exists a distribution on $\users$-row databases $D \in (\bits^\cols)^{\users}$ that is $(\sec, \sec)$-re-identifiable from $(1/3, \beta)$-accurate answers to $\kdconj{1}$.

In particular, if $\sec = o(1/\users)$, then  \ifnum\stocsubmit=1
\fi
there is no algorithm $\san \from (\bits^{\cols})^\users \to \R^{|\kdconj{1}|}$ that is $(O(1), o(1/\users))$-differentially private and $(1/3, \rob)$-accurate for $\kdconj{1}$.
\end{theorem}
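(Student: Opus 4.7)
The plan is to define the database distribution $\cD$ and the re-identification adversary $\adv$ directly from the fingerprinting code $(\gen, \trace)$. Let $\cD$ be the distribution that runs $\gen$ to produce a codebook $\codebook \in \bits^{\users \times \len}$ and outputs the database $\db = \codebook$, viewing each codeword $\codewordi{i}$ as the $i$-th row of $\db$. (The shared state between $\gen$ and $\trace$ becomes the shared state between $\cD$ and $\adv$, which is allowed by Definition~\ref{def:reidentifiabledist}.) Let $\adv$ take as input $\db$ and a vector of answers $\answer = (\answerq{1}, \dots, \answerq{d})$, round coordinatewise to $\pirateword \in \bits^\len$ with $\piratewordj{j} = 1$ iff $\answerq{j} \geq 1/2$, and output $\trace(\db, \pirateword)$ (which, by construction of $\trace$, returns an element of $[n] \cup \{\bot\}$).

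First I would verify condition (1) of re-identifiability. Suppose the event that $\san(\db)$ is $(1/3, \rob)$-accurate for $\kdconj{1}$ occurs. For every index $j$ such that $|\answerq{j} - q_j(D)| \le 1/3$, if all users in the coalition $S = [n]$ agree on their $j$-th bit with common value $b \in \{0,1\}$, then $q_j(D) = b$, so $\answerq{j} \in [b - 1/3, b + 1/3]$ and $\piratewordj{j} = b$; otherwise both bits match some user's codeword and the feasibility constraint is trivially satisfied. Hence at most a $\rob$ fraction of coordinates (the inaccurate queries) can violate feasibility, so $\pirateword \in F_\rob(\codebookS{[n]})$. The robust-FPC security property then gives $\prob{\trace(\codebook, \pirateword) = \bot \land \pirateword \in F_\rob(\codebookS{[n]})} \le \sec$, which yields condition (1) with parameter $\sec$.

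Next I would verify condition (2). Fix any $i \in [\users]$ and consider the FPC adversary $\fpadv^{(i)}$, associated with the coalition $S = [\users] \setminus \{i\}$, defined as follows: on input $\codebookS{S}$, form the database $\db_{-i}$ by placing each codeword $\codewordi{j}$ ($j \neq i$) in row $j$ and the fixed junk element $\bot \in \univ$ in row $i$, run $\san(\db_{-i})$, round its output to $\pirateword$, and return $\pirateword$. Since $\fpadv^{(i)}$ sees only $\codebookS{S}$, it is a valid FPC adversary; its output $\pirateword$ is distributed exactly as the rounded version of $\san(\db_{-i})$ in our experiment. Applying the second security property of the robust fingerprinting code (with coalition $S$, so $[n] \setminus S = \{i\}$) yields $\Prob{}{\trace(\codebook, \pirateword) = i} \le \sec$, which is precisely condition (2) for index $i$.

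The main (mild) obstacle is making sure the coalition is chosen correctly in each step: in condition (1) it must be $S = [n]$ so that $F_\rob$ is the correct feasible set, while in condition (2) it must be $S = [n] \setminus \{i\}$ so that $[n] \setminus S = \{i\}$ and the ``innocent user'' clause of FPC security implies the desired bound. The final ``in particular'' clause then follows immediately by applying Lemma~\ref{lem:reidenttodp} with $\gamma = \sec = o(1/n)$, which rules out any $(O(1), o(1/n))$-differentially private $(1/3, \rob)$-accurate sanitizer for $\kdconj{1}$.
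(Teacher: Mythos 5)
Your proof is correct and follows essentially the same approach as the paper: define $\cD$ to be the output of $\gen$, define $\adv$ to round the released answers to $\bits^\len$ and run $\trace$, derive condition (1) from accuracy $\Rightarrow$ $\overline{a}\in F_\rob$ together with the completeness property of the code, and derive condition (2) from the soundness property with coalition $[n]\setminus\{i\}$. If anything, your handling of condition (2) is slightly more explicit than the paper's, since you spell out how the fingerprinting adversary $\fpadv^{(i)}$ pads $\codebookS{S}$ with the junk row before invoking $\san$; the paper leaves that bookkeeping implicit.
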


By combining Theorem~\ref{thm:fpctolb} with Theorem~\ref{thm:fpc} we obtain a sample complexity lower bound for $1$-way marginals, and thereby establish Theorem~\ref{thm:main0} in the introduction.
\begin{corollary} \label{cor:lb1way}
For every $\cols \in \N$, the family of $1$-way marginals on $\bits^\cols$ has sample complexity at least 
$
\tilde{\Omega}(\sqrt{\cols})
$
for $(1/3,1/75)$-accuracy and $(O(1), o(1/\rows))$-differential privacy.
\end{corollary}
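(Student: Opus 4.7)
The plan is to combine the two preceding theorems with appropriately chosen parameters; the corollary is essentially immediate from them once the right numerical choices are made. Given $\cols$, I would instantiate Theorem~\ref{thm:rfpc} with robustness parameter $\rob = 1/75$ and a security parameter $\sec$ to be fixed below. This yields an $(\users, \cols)$-fingerprinting code robust to a $1/75$ fraction of errors, where $\users = \tilde{\Omega}(\sqrt{\cols/\log(1/\sec)})$.

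Next, I would choose $\sec$ small enough to apply the ``in particular'' clause of Theorem~\ref{thm:fpctolb}, i.e.\ any $\sec = o(1/\users)$; the concrete choice $\sec = 1/\users^2$ suffices. Since $\users \leq \cols$, this forces $\log(1/\sec) = O(\log \cols)$, which is only a polylogarithmic factor and is absorbed into the $\tilde{\Omega}(\cdot)$, leaving an $(\users, \cols)$-fingerprinting code with security $o(1/\users)$ and robustness $1/75$ for $\users = \tilde{\Omega}(\sqrt{\cols})$.

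Feeding this code into Theorem~\ref{thm:fpctolb} with $\rob = 1/75$, the ``in particular'' conclusion states that no algorithm $\san \from (\bits^{\cols})^{\users} \to \R^{|\kdconj{1}|}$ can be simultaneously $(O(1), o(1/\users))$-differentially private and $(1/3, 1/75)$-accurate for $\kdconj{1}$. By the definition of sample complexity, this means the sample complexity of $\kdconj{1}$ for $(1/3, 1/75)$-accuracy and $(O(1), o(1/\rows))$-differential privacy must strictly exceed our chosen $\users$, yielding the claimed bound of $\tilde{\Omega}(\sqrt{\cols})$.

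There is no real obstacle here, since all of the hard work is done by Theorems~\ref{thm:rfpc} and~\ref{thm:fpctolb}; the only point requiring care is the mildly circular relationship between $\users$ and $\sec$. This is resolved by first setting $\users = c\sqrt{\cols}/\polylog(\cols)$ for a sufficiently small constant $c$ and polylogarithmic factor, then defining $\sec = 1/\users^2$, and finally verifying that the inequality $\users \leq \tilde{\Omega}(\sqrt{\cols/\log(1/\sec)})$ guaranteed by Theorem~\ref{thm:rfpc} is satisfied with this choice, which it is since $\log(1/\sec) = 2\log \users = O(\log \cols)$.
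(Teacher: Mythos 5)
Your proposal is correct and matches the paper's intended proof, which simply combines Theorem~\ref{thm:fpctolb} with the robust code of Theorem~\ref{thm:rfpc}. You handle the coupling between $\users$ and $\sec$ cleanly; an alternative and slightly simpler choice is to fix $\sec = 1/d$ \emph{first}, apply Theorem~\ref{thm:rfpc} to get $\users = \tilde{\Omega}(\sqrt{d/\log d}) = \tilde{\Omega}(\sqrt{d})$, and then observe $\sec = 1/d = o(1/\users)$ since $\users = O(\sqrt{d})$, avoiding the apparent circularity entirely. You also correctly identified that the robust code (Theorem~\ref{thm:rfpc}) rather than the plain Tardos code (Theorem~\ref{thm:fpc}) is what is needed to get $\beta = 1/75 > 0$ in the corollary.
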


\begin{proof}[Proof of Theorem~\ref{thm:fpctolb}]
Let $(\gen, \trace)$ be the promised fingerprinting code.
We define the re-identifiable distribution $\cD$ to simply be the output distribution of the code generator, $\gen$.  And we define the privacy adversary $\adv$ to take the answers $\answer = \san(\db) \in [0,1]^{|\kdconj{1}|}$, obtain $\overline{\answer} \in \bits^{|\kdconj{1}|}$ by rounding each entry of $\answer$ to $\{0,1\}$, run the tracing algorithm $\trace$ on the rounded answers $\overline{\answer}$, and return its output.  The shared state of $\cD$ and $\adv$ will be the shared state of $\gen$ and $\trace$.

Now we will verify that $\cD$ is $(\sec,\sec)$-re-identifiable.  First, suppose that $\san(\db)$ outputs answers $\answer = (\answerq{\queryj{j}})_{j \in [\cols]}$ that are $(1/3,\rob)$-accurate for $1$-way marginals.  That is, there is a set $G \subseteq [\cols]$ such that $|G| \geq (1-\rob)d$ and for every $j \in G$, the answer $\answerq{\queryj{j}}$ estimates the fraction of rows having a $1$ in column $j$ to within $1/3$.  Let $\overline{\answer}_{\queryj{j}}$ be $\answerq{\queryj{j}}$ rounded to the nearest value in $\bits$.  Let $j$ be a column in $G$.  If column $j$ has all $1$'s, then $\answerq{\queryj{j}} \geq 2/3$, and $\overline{\answer}_{\queryj{j}} = 1$.  Similarly, if column $j$ has all $0$'s,  then $\answerq{\queryj{j}} \leq 1/3$, and $\overline{\answer}_{\queryj{j}} = 0$.  Therefore, we have
\begin{equation} \label{eq:3-1}
\textrm{$\answer$ is $(1/3,\rob)$-accurate} \Longrightarrow \overline{\answer} \in F_{\rob}(\db).
\end{equation}
By security of the fingerprinting code (Definition~\ref{def:rfpc}), we have
\begin{equation} \label{eq:3-2}
\prob{\overline{\answer} \in F_{\rob}(\db) \land \trace(\db, \overline{\answer}) = \bot} \leq \sec.
\end{equation}
Combining~\eqref{eq:3-1} and~\eqref{eq:3-2} implies that
$$
\prob{\textrm{$\san(\db)$ is $(1/3,\rob)$-accurate} \land \trace(\db, \overline{\answer}) = \bot } \leq \sec.
$$
But the event $\trace(\db, \overline{\answer}) = \bot$ is exactly the same as $\adv(\db, \san(\db)) = \bot$, and thus we have established the first condition necessary for $\cD$ to be $(\sec, \sec)$-re-identifiable.

The second condition for re-identifiability follows directly from the soundness of the fingerprinting code, which asserts that for every adversary $\fpadv$, in particular for $\san$, it holds that
$$
\prob{\trace(\db, \fpadv(\db_{-i})) = i} \leq \sec.
$$
This completes the proof.
\end{proof}

\begin{remark}
Corollary~\ref{cor:lb1way} implies a lower bound of $\tilde{\Omega}(\sqrt{d})$ for any family $\queryset$ on a data universe $\univ$ in which we can ``embed'' the $1$-way marginals on $\bits^{\cols}$ in the sense that there exists $\query_1,\dots,\query_d \in \queryset$ such that for every string $x \in \bits^d$ there is an $x' \in \bits^d$ such that $(\query_1(x'),\dots,\query_d(x')) = x$.  (The maximum such $d$ is actually the VC dimension of $\univ$ when we view each element $x \in \univ$ as defining a mapping $\query \mapsto \query(x)$.  See Definition~\ref{def:vcdim}.)
\end{remark}

Our proof technique does not directly yield a lower bound with any meaningful dependence on the accuracy $\alpha$. Since the privacy adversary $\adv$ simply runs the tracing algorithm on the rounded answers it is given, it is not able to leverage subconstant accuracy to gain an advantage in re-identification. However, Lemma~\ref{lem:alphaforfree} lets us generically translate our lower bound for constant accuracy into a lower bound depending linearly on $1/\alpha$. For 1-way marginals, we get an essentially tight sample complexity lower bound of $\tilde{\Omega}(\sqrt{d}/\alpha)$ for $(\alpha, \beta)$-accuracy.

\begin{corollary} \label{cor:lb1wayalpha}
For every $\cols \in \N$, the family of $1$-way marginals on $\bits^\cols$ has sample complexity at least 
$
\tilde{\Omega}(\sqrt{\cols}/\alpha)
$
for $(\alpha,1/75)$-accuracy and $(O(1), o(1/\rows))$-differential privacy.
\end{corollary}

\subsubsection{Minimax Lower Bounds for Statistical Inference}

Using the additional structure of Tardos' fingerprinting code, and our robust fingerprinting codes, we can prove minimax lower bounds for an ``inference version'' of the problem computing the $1$-way marginals of a product distribution.

For any $d \in \N$, and any \emph{marginals} $p = (p_1,\dots,p_d) \in [0,1]^d,$ let $\cD_{p}$ denote the product distribution over strings $x \in \bits^d$ where each coordinate $x_i$ is an independent draw from a Bernoulli random variable with mean $p_i$ (i.e.~$x_i$ is set to $1$ with probability $p_i$ and set to $0$ otherwise).  We use $\cD_{p}^{\otimes n}$ to denote $n$ independent draws from $\cD_{p}.$  We say that a vector $q \in [0,1]^d$ is $(\alpha, \beta)$-accurate for $p$ if
$$
\Prob{i \getsr [d]}{| q_i - p_i | \leq \alpha} \geq 1 - \beta.
$$
We can now formally define the problem of inferring the marginals $p$ as follows.
\begin{definition}
Let $\alpha, \beta \in [0,1]$ be parameters.  An algorithm $\san \from (\bits^d)^n \to \R^d$ \emph{$(\alpha, \beta)$-accurately infers the marginals of a product distribution} if for every vector of marginals $p \in [0,1]^d$,
$$
\Prob{\db \getsr \cD_{p}^{\otimes n},\, \textrm{$\san$'s coins}}{\textrm{$\san(\db)$ is $(\alpha, \beta)$-accurate for $p$}} \geq 2/3.
$$
\end{definition}

Our lower bound can thus be stated as follows,
\begin{theorem}
Suppose there is a function $n = n(d)$ such that for every $d \in \N$, there exists an algorithm $\san \from (\bits^d)^n \to \R^d$ that satisfies $(O(1), o(1/n))$-differential privacy and $(1/3, 1/75)$-accurately infers the marginals of a product distribution.  Then $n = \tilde{\Omega}(\sqrt{d}).$
\end{theorem}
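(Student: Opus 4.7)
The plan is to reduce this statistical minimax lower bound to the fingerprinting-code argument of Theorem~\ref{thm:fpctolb} and Corollary~\ref{cor:lb1way}, exploiting the fact that Tardos-style constructions (including the error-robust code of Theorem~\ref{thm:rfpc}) generate their codebooks by first sampling biases $p = (p_1, \ldots, p_d)$ from a specific prior $\nu$ on $[0,1]^d$ and then drawing each codeword independently from the product distribution $\cD_p$.  In other words, a Tardos codebook is distributed exactly as $n$ i.i.d.\ samples from a product distribution (averaged over the prior), which is precisely the input format an inference algorithm is designed to handle.  Thus the inference algorithm's output, suitably rounded, can play the role of the combined codeword produced by a fingerprinting adversary.

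More precisely, I would fix the robust $(n,d)$-fingerprinting code $(\gen, \trace)$ from Theorem~\ref{thm:rfpc} with security $\sec = o(1/n)$, and let $\nu$ denote the corresponding prior on biases.  The re-identifiable distribution $\cD$ on $(\bits^d)^n$ is then defined by sampling $p \sim \nu$ (kept as shared state) followed by $D \sim \cD_p^{\otimes n}$.  Given the hypothesized inference algorithm $\san = M$, the privacy adversary $\adv(D, M(D))$ rounds $\hat p := M(D)$ coordinatewise to $c' \in \bits^d$ by setting $c'_j = 1$ if $\hat p_j \ge 1/2$ and $c'_j = 0$ otherwise, and outputs $\trace(D, c')$.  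The second condition in Definition~\ref{def:reidentifiabledist}---that $\Pr[\adv(D, M(D_{-i})) = i]$ is at most $\sec$---follows directly from the fingerprinting code's soundness, viewing the map $D_{-i} \mapsto c'$ as a fingerprinting adversary restricted to the coalition $S = [n] \setminus \{i\}$ (with the absent row padded by the fixed junk element inside the adversary).

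The main step is to verify the first condition: whenever $M(D)$ is $(1/3, 1/75)$-accurate for $p$---an event of probability at least $2/3$---the rounded codeword $c'$ lies in $F_{1/75}(D)$, so that tracing is forced to output some $i \in [n]$ rather than $\bot$.  I would bound the number of infeasible columns $j$ (those with no row of $D$ matching $c'_j$) by separating accurate coordinates, on which $|\hat p_j - p_j| \le 1/3$, from inaccurate ones.  The inaccurate coordinates contribute at most $d/75$ infeasibilities, so it suffices to show that accurate coordinates contribute none with high probability.  A short case analysis on whether $p_j$ lies in $[0, 1/6)$, $[1/6, 5/6]$, or $(5/6, 1]$ shows that an accurate coordinate $j$ is infeasible only when column $j$ happens to be constant with a bit incompatible with the sign of $p_j - 1/2$, and this conditional probability is at most $2(5/6)^n$.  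A union bound over the $d$ coordinates yields failure probability at most $2d \cdot (5/6)^n = o(1)$ once $n = \Omega(\log d)$.  The main technical point---and the place one needs to be careful---is that this case analysis must cooperate with Tardos' prior, which concentrates near $\{0, 1\}$; that is exactly the regime where constant columns are common, but it is also the regime where the inference algorithm's accuracy pins down $c'_j$ to the dominant bit, so the two effects cancel.

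Putting it together, $\cD$ is $(\gamma, \sec)$-re-identifiable with $\gamma = o(1)$ (in particular $\gamma < 1/3$) and $\sec = o(1/n)$, so Lemma~\ref{lem:reidenttodp} rules out any $(O(1), o(1/n))$-differentially private algorithm that $(1/3, 1/75)$-accurately infers product-distribution marginals from $n$ samples.  Plugging in the parameter bound from Theorem~\ref{thm:rfpc}, $n = \tilde\Omega\bigl(\sqrt{d/\log(1/\sec)}\bigr) = \tilde\Omega(\sqrt d)$, gives the claimed lower bound.
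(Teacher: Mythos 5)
Your proposal is correct and follows essentially the same route as the paper's proof sketch: both exploit that Tardos-style codebooks (including the robust variant) are generated by sampling $p$ from a prior and then drawing $n$ i.i.d.\ rows from $\cD_{p}$, and both then show that accuracy for $p$ --- rather than for the empirical column means --- still yields a feasible combined codeword that the tracer can handle, after which re-identifiability and Lemma~\ref{lem:reidenttodp} finish the job. Your explicit case analysis on $p_j$ is a concrete instantiation of the paper's observation that accuracy for $p$ and for the $1$-way marginals of $D$ coincide with high probability, and it inherits exactly the caveat the paper itself acknowledges: the union bound requires $n = \Omega(\log d)$, with the remaining small-$n$ regime handled in the paper only by citing Steinke and Ullman.
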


\begin{proof}[Proof Sketch]
The proof has the same general structure that we used to prove Theorem~\ref{thm:fpctolb}. Here, we describe additional observations about the structure of the fingerprinting codes used in that proof (see Section~\ref{chap:rfpc} for a description of Tardos' fingerprinting code) that allow it to carry over to the inference version of computing $1$-way marginals.

First, in Tardos' (non-robust) fingerprinting code, the codebook $\db$ is chosen by first sampling marginals $p \in [0,1]^d$ from an appropriate distribution and then sampling $\db$ from $\cD_{p}^{\otimes n}.$  The robust fingerprinting codes we construct in Section~\ref{chap:rfpc} also have this property.\footnote{To generate a codebook $\db'$ for our robust fingerprinting code, we sample a codebook $\db$ from Tardos' fingerprinting code and then insert additional columns of all $1$'s or all $0$'s to $\db$ in random locations.  Equivalently, we can obtain a codebook $\db'$ by appending $1$'s and $0$'s in random locations of $p$ to obtain a vector $p'$ and then sampling $\db'$ from $\cD_{p'}^{\otimes n}.$}  Thus the instances used to prove Theorem~\ref{thm:fpctolb} indeed consist of independent samples from a product distribution, which is what the inference problem assumes.

Next, recall that the proof of Theorem~\ref{thm:fpctolb} shows that any string that is $(\alpha, \beta)$-accurate for the $1$-way marginals of $\db$ can be traced successfully.  It is moreover the case that any string that is $(\alpha, \beta)$-accurate for the marginals $p$ can also be traced successfully.  This is because the rows of $\db$ are sampled independently from $\cD_{p}$, so accuracy for the $1$-way marginals of $\db$ and accuracy for $p$ coincide with high probability, at least when $n = \omega(\log d)$:

\begin{claim}
Let $p \in [0, 1]^d$ and let $\db \getsr \cD_{p}^{\otimes n}$. Let $a \in [0, 1]^d$ denote the exact $1$-way marginals of $\db$. Then for every $\alpha, \eta > 0$, and $n = \Omega(\log (d/\eta)/\alpha^2)$, we have $\|a - p\|_\infty \le \alpha$ with probability at least $1 - \eta$ over the choice of $\db$.
\end{claim}
We remark that Steinke and Ullman~\cite{SteinkeUl15a} showed that accuracy with respect to the marginals $p$ actually suffices to trace regardless of the value of $n$.

These two observations suffice to show that, when $n$ is too small, a differentially private algorithm cannot be accurate for $p$ with high probability over the choices of both $p$ and $\db$.  Thus, for every differentially private algorithm, there exists some $p$ such that the algorithm is not accurate with high probability over the choice of $\db$, which means that the algorithm does not accurately infer the marginals of an arbitrary product distribution.
\end{proof}

\subsection{Fingerprinting Codes for General Query Families}

In this section, we generalize the connection between fingerprinting codes and sample complexity lower bounds for arbitrary sets of queries.  We show that a generalized fingerprinting code with respect to any family of counting queries $\cQ$ yields a sample complexity lower bound for $\cQ$, which is analogous to our lower bound for $1$-way marginals (Theorem~\ref{thm:fpctolb}). We then argue that some type of fingerprinting code is necessary to prove any sample complexity lower bound by exhibiting a tight connection between such lower bounds and a weak variant of our generalized fingerprinting codes.

We begin by defining our generalization of fingerprinting codes. Fix a finite data universe $\cX$ and a set of counting queries $\cQ$ over $\cX$. A generalized fingerprinting code with respect to the family $\cQ$ consists of a pair of randomized algorithms $(\gen, \trace)$.  The code generation algorithm $\gen$ produces a codebook $C \in \cX^n$. Each row $c_i$ of $C$ is the codeword corresponding to user $i$. A coalition $S \subseteq [n]$ of pirates receives the subset $C_S = \{c_i : i\in S\}$ of codewords, and produces an \emph{answer vector} $a \in [0, 1]^{|\cQ|}$. We replace the traditional marking condition on the pirates with the generalized constraint that they output a \emph{feasible answer vector}. A natural way to define feasibility for answer vectors is to require a condition similar to $(\alpha, \beta)$-accuracy, i.e.~an answer vector $a$ is feasible if $|a_q - q(C_S)| \le \alpha$ for all but a $\beta$ fraction of queries $q \in \cQ$. We thus define a generalized set of feasible answer vectors by
$$
F_{\alpha, \rob}(\codebookS{S}) = \set{a \in [0, 1]^{|\cQ|} \mid \Prob{q \getsr \cQ}{|a_q - q(C_S)| \le \alpha} \ge 1 - \rob}.
$$
When $\alpha = 1-1/n$, the generalized set of feasible answer vectors captures the traditional marking assumption by rounding each entry of a feasible answer vector to $0$ or $1$.\footnote{An equivalent way to view a codebook is as a set of $n$ codewords $C \in (\{0, 1\}^{|\cQ|})^{n}$, where each user's codeword is $c_i = (q(x))_{q \in \cQ}$ for some $x \in \cX$. Notice that the case where $\cQ$ is the class of $1$-way marginals places no constraints on the structure of a codeword, i.e.~a codeword can be any binary string. With this viewpoint, the goal of the pirates is to output an answer vector $a \in [0, 1]^{|\cQ|}$ with $|a_q - \frac{1}{|S|} \sum_{i \in S}(c_i)_q| \le \alpha$ for all but a $\beta$ fraction of the queries $q \in \cQ$.}


\begin{definition} \label{def:general-fpc}
A pair of algorithms $(\gen, \trace)$ is an $(n, \cQ)$-\emph{fingerprinting code} for $(\alpha, \beta)$-accuracy with security $(\gamma, \sec)$ if $\gen$ outputs a codebook $C \in \cX^n$ and for every (possibly randomized) adversary $\fpadv$, and every coalition $S \subseteq [n]$ with $|S| \ge n-1$, if we set $a \getsr \fpadv(C_S)$, then
\begin{enumerate}
\item 
$
\prob{a \in F_{\alpha, \rob}(\codebookS{S}) \land \trace(C, a) = \bot} \leq \gamma,
$
\item 
$
\prob{\trace(C, a) \in [\users] \setminus S} \leq \sec,
$
\end{enumerate}
where the probability is taken over the coins of $\gen, \trace$, and $\fpadv$.  The algorithms $\gen$ and $\trace$ may share a common state.
\end{definition}

The security properties of Definition \ref{def:general-fpc} differ from those of an ordinary fingerprinting code in two ways so as to enable a clean statement of a composition theorem for generalized fingerprinting codes (Theorem \ref{thm:composition-fpc}). First, we use two separate security parameters $\gamma, \sec$ for the different types of tracing errors, as in the definition of re-identifiable distributions. Second, security only needs to hold for coalitions of size $n-1$ or $n$.  However, this condition implies security for coalitions of arbitrary size with an increased false accusation probability of $n \sec$.

As in Theorem \ref{thm:fpctolb}, the existence of a generalized $(n, \cQ)$-fingerprinting code implies a sample complexity lower bound of $n$ for privately releasing answers to $\cQ$, with essentially the same proof.

\begin{theorem} \label{thm:general-fpctolb}
For every $n \in \N$ and $\gamma, \sec \in [0,1)$, if there exists an $(n, \cQ)$-fingerprinting code for $(\alpha,\beta)$-accuracy with security $(\gamma, \sec)$, then there exists a distribution on $n$-row databases $D \in \cX^n$ that is $(\gamma, \sec)$-re-identifiable from $(\alpha, \beta)$-accurate answers to $\cQ$.

In particular, if $\gamma \le 1/3$ and $\sec = o(1/\users)$, then there is no algorithm $\san \from \cX^n \to [0, 1]^{|\cQ|}$ that is $(O(1), o(1/\users))$-differentially private and $(\alpha, \rob)$-accurate for $\cQ$.
\end{theorem}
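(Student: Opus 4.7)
My plan is to adapt the proof of Theorem~\ref{thm:fpctolb} essentially line-for-line, since the generalized fingerprinting code was defined precisely to make this connection work for arbitrary query families. Let $(\gen, \trace)$ be the given $(n, \cQ)$-fingerprinting code. I would take the re-identifiable distribution $\cD$ to simply be the output distribution of $\gen$, viewing each codebook $C \in \cX^n$ as a database $D$, and take the privacy adversary to be $\adv(D, a) := \trace(D, a)$, with $\cD$ and $\adv$ inheriting the shared state of $\gen$ and $\trace$. No rounding step is needed here (unlike in Theorem~\ref{thm:fpctolb}) because the feasible set $F_{\alpha,\beta}$ already lives in $[0,1]^{|\cQ|}$ rather than $\{0,1\}^{|\cQ|}$.

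The first re-identifiability condition requires that accuracy of $\san$ implies non-failure of tracing. I would identify the full database $D$ with the coalition $S = [n]$, so that $q(D) = q(C_S)$ for every $q \in \cQ$. Then $(\alpha,\beta)$-accuracy of $\san(D)$ for $\cQ$ says exactly that $|\san(D)_q - q(C_S)| \le \alpha$ for at least a $1-\beta$ fraction of queries, i.e.\ $\san(D) \in F_{\alpha,\beta}(C_{[n]})$. Instantiating the fingerprinting-code adversary $\fpadv$ as $\san$ itself on the full coalition, security condition~1 of Definition~\ref{def:general-fpc} gives
\[
\Pr[\san(D) \in F_{\alpha,\beta}(D) \land \trace(D, \san(D)) = \bot] \le \gamma,
\]
which, since $\adv = \trace$, is the required bound on $\Pr[\adv(D, \san(D)) = \bot \land \san(D)\text{ accurate}]$.

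For the second condition, fix $i \in [n]$ and consider the coalition $S = [n] \setminus \{i\}$ of size $n-1$ (which is exactly the regime covered by Definition~\ref{def:general-fpc}). I would define a fingerprinting-code adversary $\fpadv_i$ that, on input $C_S$, constructs $D_{-i}$ by inserting the fixed junk element in position $i$ and then outputs $\san(D_{-i})$. Security condition~2 applied to $\fpadv_i$ yields
\[
\Pr[\trace(D, \san(D_{-i})) \in [n] \setminus S] = \Pr[\adv(D, \san(D_{-i})) = i] \le \sec,
\]
which is precisely the second clause of Definition~\ref{def:reidentifiabledist}. The ``in particular'' statement then drops out by invoking Lemma~\ref{lem:reidenttodp} with $\gamma = 1/3$ and $\sec = o(1/n)$.

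There is no real mathematical obstacle; the only care needed is bookkeeping between the two worlds: making sure the coalition $S = [n]$ corresponds to the full database (so the first condition uses security against a full-coalition adversary) and that removing row $i$ from $D$ corresponds to the coalition $S = [n] \setminus \{i\}$ of size exactly $n-1$, matching the $|S| \ge n-1$ restriction in Definition~\ref{def:general-fpc}. This restriction is what makes the generalized definition strong enough to drive re-identification while still being weak enough to eventually prove a converse.
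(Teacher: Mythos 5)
Your proposal is correct and is exactly the argument the paper intends: the paper gives no separate proof of Theorem~\ref{thm:general-fpctolb}, stating only that it follows ``with essentially the same proof'' as Theorem~\ref{thm:fpctolb}, and your line-for-line adaptation (taking $\cD = \gen$, $\adv = \trace$ with no rounding, identifying $(\alpha,\beta)$-accuracy with membership in $F_{\alpha,\beta}(C_{[n]})$, and using the coalition $S = [n]\setminus\{i\}$ of size $n-1$ for the second condition) is precisely that proof. The bookkeeping points you flag, in particular that the $|S| \ge n-1$ restriction in Definition~\ref{def:general-fpc} is exactly what the two re-identifiability conditions require, are the right ones.
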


We now turn to investigate whether a converse to Theorem \ref{thm:general-fpctolb} holds. We show that a sample complexity lower bound for a family of queries $\cQ$ is essentially equivalent to the existence of a weak type of fingerprinting code, where the tracing procedure depends on the family $\cQ$ and the tracing error probabilities satisfy certain affine constraint. It remains an interesting open question to determine the precise relationship between privacy lower bounds and our notion of generalized fingerprinting codes.

\begin{definition} \label{def:weak-fpc}
A pair of algorithms $(\gen, \trace)$ is an $(n, \cQ)$-\emph{weak fingerprinting code} for $(\alpha, \beta)$-accuracy with security $(\eps, \delta)$ if $\gen$ outputs a codebook $C \in \cX^n$ and for every (possibly randomized) adversary $\fpadv$ that outputs a feasible answer vector with probability $2/3$, and every coalition $S \subseteq [n]$ with $|S| \ge n-1$, if we set $a \getsr \fpadv(C_S)$, then
\[\Pr[\trace(C, a) \ne \bot] > e^{\eps} n \cdot \Pr[\trace(C, a) \in [n] \setminus S] + \delta,\]
where the probabilities are taken over the coins of $\gen$, $\trace$, and $\fpadv$. The algorithms $\gen$ and $\trace$ may share a common state.
\end{definition}

That is, we require the false accusation probability $\Pr[\trace(C, a) \in [n] \setminus S]$ to be much smaller than the total probability of accusing any user. Note that a tracing algorithm that accuses a random user with probability $p$ will falsely accuse a user with probability $p/n$ when $|S| = n-1$; however, this does not satisfy Definition \ref{def:weak-fpc} because we require the gap between the two probabilities to be at least a factor of $e^\eps n$.

Observe that taking $\sec < (1-\delta)/2e^\eps n$ in Definition \ref{def:general-fpc} yields an $(n, \cQ)$-weak fingerprinting code with security $(\eps, \delta)$. However, Definition \ref{def:weak-fpc} is weaker than Definition \ref{def:general-fpc} in a few important ways. First, security only holds against pirates with a failure probability of at most $1/3$. Second, while Definition \ref{def:general-fpc} requires completeness error $\Pr[\trace(C, a) = \bot] < \sec$, a weak fingerprinting code allows $\Pr[\trace(C, a) = \bot] = 1 - o(1)$ as long as $\Pr[\trace(C, a) \in [n]\setminus S]$ is sufficiently small. 

The following theorem shows that the existence of an $(n, \cQ)$-weak fingerprinting code is essentially equivalent to a sample complexity lower bound of $n$ against $\cQ$.

\begin{theorem} \label{thm:weak-fpc}
For every $n \in \N$, if there exists an $(n, \cQ)$-weak fingerprinting code for $(\alpha, \beta)$-accuracy with security $(\eps, \delta)$, then there exists a distribution on $n$-row databases $D \in \cX^n$ such that no $(\eps/2, \delta/(2e^{\eps/2}n))$-differentially private algorithm $\cA : \cX^n \to \R^{|\cQ|}$ outputs $(\alpha, \beta)$-accurate answers to $\cQ$.

Conversely, let $\eps \le 3$ and suppose there is no $(\eps, \delta)$-differentially private $\cA: \cX^n \to \R^{|\cQ|}$ that gives $(\alpha, \beta)$-accurate answers to $\cQ$ with probability at least $1/2$. Then there exists an $(m = \lceil n / \eps \rceil, \cQ)$-weak fingerprinting code for $(\alpha - \alpha', \beta)$-accuracy with security $(\eps/6, \delta/(e^{\eps/3} + e^{5\eps/6}))$, for $\alpha' = \tilde{O}(\sqrt{\eps \vcdim(\cQ)/ n})$.
\end{theorem}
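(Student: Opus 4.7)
The theorem asserts an equivalence, so the proof has two directions: weak FPC $\Rightarrow$ DP lower bound, and DP lower bound $\Rightarrow$ weak FPC. The forward direction mimics the reduction of Theorem~\ref{thm:fpctolb} but uses the weaker relative guarantee of Definition~\ref{def:weak-fpc} by summing over all possible accused users. The converse extracts a tracing algorithm from the DP violation that any accurate sanitizer must exhibit, using privacy amplification by sub-sampling and the ``exchange-privacy'' Lemma~\ref{lem:dp-exchange}.

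For the forward direction, I would proceed by contradiction: suppose $\san \from \cX^n \to \R^{|\cQ|}$ is both $(\eps/2, \delta_0)$-DP and $(\alpha, \beta)$-accurate, with $\delta_0 = \delta/(2 e^{\eps/2} n)$. Take the database distribution $\cD$ to be the output of $\gen$, and track the quantities $p = \Pr_{C \gets \gen}[\trace(C, \san(C)) \neq \bot]$, $p_i = \Pr[\trace(C, \san(C)) = i]$, and $q_i = \Pr[\trace(C, \san(D_{-i})) = i]$. First, by accuracy, $\san(C)$ is feasible for $C_{[n]}$ with probability $\geq 2/3$, so applying the weak FPC with $S = [n]$ yields $p > \delta$. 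Second, for each $i$ the ``pirate'' $\fpadv_i(C_S) = \san(D_{-i})$ (which pads $C_{[n]\setminus\{i\}}$ with junk at position $i$) is feasible for $C_S$ up to a $1/n$ slack, since $|q(D_{-i}) - q(C_S)| \le 1/n$; this slack can be absorbed into the accuracy parameters. Applying the weak FPC to each $\fpadv_i$ and summing gives $\sum_i \Pr[\trace(C, \san(D_{-i})) \neq \bot] > e^\eps n \sum_i q_i + n\delta$. Differential privacy, applied as post-processing to the events $\{\trace(C, y) \neq \bot\}$ and $\{\trace(C, y) = i\}$ (using the shared randomness of $\gen$ and $\trace$), delivers both $\Pr[\trace(C, \san(D_{-i})) \neq \bot] \le e^{\eps/2} p + \delta_0$ and $q_i \ge (p_i - \delta_0)/e^{\eps/2}$. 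Substituting these into the summed inequality and simplifying reduces to $\delta_0(1 + e^{\eps/2} n) > \delta$, but a direct calculation shows $\delta_0(1 + e^{\eps/2} n) = \delta_0 + \delta/2 < \delta$ whenever $e^{\eps/2}n \ge 1$, a contradiction.

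For the converse, the plan is to construct the weak FPC by setting $\gen$ to sample an $m$-row codebook $C$ consisting of $m = \lceil n/\eps \rceil$ independent draws from a distribution on $\cX$ chosen adversarially for the DP problem, and obtaining $\trace$ from the distinguisher that must exist against any $(\eps, \delta)$-DP algorithm on $n$-row databases. The key reduction is the following: given a candidate ``pirate'' $\fpadv \from \cX^m \to \R^{|\cQ|}$, define an $n$-row mechanism by independently subsampling rows from an $n$-element input database and running $\fpadv$ on the $m$-size subsample; standard privacy amplification by sub-sampling shows that if $\fpadv$ were $(\eps/6, \delta/(e^{\eps/3}+e^{5\eps/6}))$-DP, the induced mechanism would be $(\eps, \delta)$-DP. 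Since by hypothesis no accurate mechanism of the latter type exists, $\fpadv$ must either fail accuracy or exhibit an ``exchange-privacy'' violation between $D_{-i}$ and $D_{-j}$ for some pair $i, j$; by the contrapositive of Lemma~\ref{lem:dp-exchange}, this violation yields adjacent databases and an event that distinguishes them, which can be packaged as a tracing test identifying $i$. Uniform convergence for $\cQ$ on random samples of size $m$, controlled by $\vcdim(\cQ)$, guarantees that $(\alpha - \alpha', \beta)$-accuracy with respect to the codebook implies $(\alpha, \beta)$-accuracy with respect to the population for $\alpha' = \tilde{O}(\sqrt{\eps \vcdim(\cQ)/n})$, which is how $\alpha'$ enters the statement.

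The main obstacle is the converse: chaining privacy amplification, Lemma~\ref{lem:dp-exchange}, and a VC-based generalization bound while tracking constants closely enough to hit the exact $(\eps/6, \delta/(e^{\eps/3} + e^{5\eps/6}))$ security parameters is delicate, especially since the tracing algorithm must be defined uniformly without knowledge of which DP violation the pirate exhibits. The forward direction has a minor subtlety in the $1/n$ feasibility slack introduced by padding $D_{-i}$ with the junk element, but this is subsumed by a marginal weakening of the accuracy parameter and does not affect the main contradiction argument.
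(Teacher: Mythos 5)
Your forward direction is essentially the paper's argument: take $\cD = \gen$, run the hypothetical accurate DP algorithm as the pirate, and play the weak-FPC security guarantee against the DP guarantee on $\trace(C,\cdot)$ viewed as post-processing. The paper picks a single worst user $i^*$ with $p_{i^*}\ge p/n$ rather than summing over all $i$, but your summed version gives the same contradiction $\delta_0(1+e^{\eps/2}n)\le\delta$, and your handling of the $O(1/n)$ feasibility slack from the junk row is if anything more careful than the paper's. This half is fine.

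The converse, however, has a genuine gap, and it is exactly the point you flag as ``the main obstacle'' without resolving. The hypothesis gives you: \emph{for every} accurate pirate $\cA$ on $m$ rows, \emph{there exists} a witness $(x,T,i)$ violating a one-sided privacy condition (this is the contrapositive of the paper's Lemma~\ref{lem:one-sided}, which is built from exactly the ingredients you list: subsampling/secrecy-of-the-sample, Lemma~\ref{lem:dp-exchange}, and the VC uniform-convergence bound). But a weak fingerprinting code requires a \emph{single, fixed} pair $(\gen,\trace)$ that defeats \emph{every} pirate. The paper bridges this quantifier reversal by discretizing the output range to the finite grid $I_m^{|\cQ|}$ (justified because exact counting-query answers lie on a $1/m$-grid) and then invoking the min-max theorem for the two-player zero-sum game whose payoff is $\Pr_{j}[\cA(x_{-j})\in T]-e^{\eps}\,\mathbb{I}(\cA(x_{-i})\in T)$; this yields a fixed distribution over triples $(x,T,i)$ that is hard for all accurate randomized pirates simultaneously. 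From that distribution, $\gen$ outputs a random permutation of $x$ and $\trace$ accuses $\pi(i)$ iff $a\in T$; the random permutation is what makes the event $\{j=i\}$ independent of $\{\trace(C,a)=i\}$ and produces the factor $1/m$ separating $\Pr[\trace\ne\bot]$ from the false-accusation probability. None of the discretization, the duality step, or the permutation trick appears in your sketch, and without them no uniform $\trace$ is defined. A second, related issue: you phrase the amplification step as ``if $\fpadv$ were $(\eps/6,\cdot)$-DP, the induced mechanism would be $(\eps,\delta)$-DP,'' but the failure of weak tracing does not hand you full pairwise DP of the pirate --- it only gives the one-sided condition comparing the \emph{average} over removed rows $\Prob{j\getsr[m]}{\cA(x_{-j})\in T}$ against $e^{\eps}\Prob{}{\cA(x_{-i})\in T}+\delta$. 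Lemma~\ref{lem:one-sided} is stated for precisely this weaker hypothesis for that reason, and the reduction does not go through if one instead assumes the pirate is genuinely differentially private.
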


\begin{proof}

The forward direction follows the ideas of Lemma \ref{lem:reidenttodp} and Theorem \ref{thm:fpctolb}. Suppose for the sake of contradiction that there exists an $(\eps', \delta')$-differentially private $\cA: \cX^n \to \R^{|\cQ|}$ that is $(\alpha, \beta)$-accurate for $\cQ$. Define a pirate strategy $\fpadv$ for coalitions of size $|S| \ge n-1$ by running $\cA$ on its input $C_S$ (possibly padded to size $n$ by a junk row). Since $\cA$ is $(\alpha, \beta)$-accurate, with probability at least $2/3$ it produces an answer vector $a$ such that $|a - q(C_S)| \le \alpha$ for all but a $\beta$ fraction of $q \in \cQ$. Hence, $\fpadv$ outputs a feasible answer vector with probability $2/3$. Define
\[p = \Pr_{\substack{C \getsr \gen \\ \text{coins}(\fpadv), \text{coins}(\trace)}}[\trace(C, \fpadv(C)) \ne \bot].\]
Then there exists an $i^*$ such that $\Pr[\trace(C, \fpadv(C)) = i^*] \ge p/n$. By differential privacy,
\[\Pr[\trace(C, \fpadv(C_{-i^*})) = i^*] \ge e^{-\eps'} \cdot \left(\frac{p}{n} - \delta'\right).\]
 On the other hand, by the security of the weak fingerprinting code and differential privacy,
\begin{align*}
e^\eps \cdot n \cdot \Pr[\trace(C, \fpadv(C_{-i^*}) = i^*] &< \Pr[\trace(C, \fpadv(C_{-i^*}) \ne \bot] - \delta\\
&\le e^{\eps'} p + \delta' - \delta.
\end{align*}
This yields a contradiction whenever $\eps' \le \eps/2$ and $\delta' \le \delta / (1 + e^{\eps/2} n)$.

We now show the converse direction, i.e. that the high sample complexity of $(\cQ, \univ)$ implies the existence of a weak fingerprinting code. We begin with a technical lemma which shows that the high sample complexity of $\cQ$ also rules out mechanisms that satisfy only a one-sided constraint on the probability of any event under the replacement of one row:

\begin{lemma} \label{lem:one-sided}
Let $\eps \le 1/2$. Let $\cA$ be an $(\alpha, \beta)$-accurate algorithm for $\cQ$ on databases $\db \in \univ^m$. Suppose we have that for all databases $\db \in \univ^m$, all $i \in [m]$, and all measurable $T \subseteq \text{Range}(\cA)$ that
\[\Prob{\substack{j \getsr [m] \\ \text{coins}(\cA)}}{\cA(\db_{-j}) \in T} \le e^{\eps}\Prob{\text{coins}(\cA)}{\cA(\db_{-i}) \in T} + \delta.\]
Let $d = \vcdim(\cQ)$ be the VC-dimension of $\cQ$ and let
\[\alpha' = \left(\frac{8}{m} \cdot \left(\ln 24 + d \cdot \ln\left(\frac{2em}{d}\right) \right)\right)^{1/2} + \frac{\eps}{m}.\]
Then there exists a $(6\eps, (e^{2\eps} + e^{5\eps})\delta)$-differentially private algorithm $\cB$ on databases of size $n = \lceil m / \eps \rceil$ that gives $(\alpha + \alpha', \beta)$-accurate answers to $\cQ$ on any database $\db' \in \univ^n$ with probability at least $1/2$ .
\end{lemma}

\begin{proof}
On input a database $\db' \in \univ^{n}$, consider the algorithm $\cB'$ that samples a random subset $\db$ consisting of $m$ rows from $\db'$ (without replacement) and returns $\cA(\db)$. Then by our hypothesis on $\cA$, for every $i \in [n]$ and every measurable $T \subseteq \text{Range}(\cB) = \text{Range}(\cA)$ we have
\begin{align*}
\Prob{\substack{j \getsr [n] \\ \text{coins}(\cB')}}{\cB'(\db'_{-j}) \in T}
\le e^{\eps}\Prob{\text{coins}(\cB')}{\cB'(\db'_{-i}) \in T} + \delta. \\
\end{align*}
On the other hand, a ``secrecy-of-the-sample'' argument \cite{KasiviswanathanLeNiRaSm07} enables us to obtain the reverse inequality. For a row $k \in [n]$, consider the following two experiments:
\begin{description}
\item{Experiment 1:} Sample a random subset $\db$ of $m$ rows from $\db'_{-k}$.
\item{Experiment 2:} Sample $j \getsr [n]$, and then sample a random subset $\db$ of $m$ rows from $\db'_{-j}$.
\end{description}
Any database $\db$ sampleable under Experiment 1 appears with probability $1/{n \choose m}$, but appears with probability at least
\[\frac{n - m}{n} \cdot \frac{1}{{n \choose m}} \ge (1 - \eps) \cdot \frac{1}{{n  \choose m}}\]
 under Experiment 2. Therefore,
\[\Prob{\substack{j \getsr [n] \\ \text{coins}(\cB)}}{\cB(\db'_{-j}) \in T} \ge e^{-2\eps}\Prob{\text{coins}(\cB)}{\cB(\db'_{-k}) \in T}.\]
 Combining the two inequalities shows that for every database $\db' \in \univ^n$ and every $i, k \in [n]$, 
\[\Prob{\text{coins}(\cB')}{\cB'(\db'_{-k}) \in T} \le e^{3\eps}\Prob{\text{coins}(\cB')}{\cB'(\db'_{-i}) \in T} + e^{2\eps}\delta.\]
By Lemma \ref{lem:dp-exchange}, the algorithm $\cB(\db'_1, \dots, \db'_{n-1}) = \cB'(\db'_1, \dots, \db'_{n-1}, \bot)$ is $(6\eps, (e^{2\eps} + e^{5\eps})\delta)$-differentially private.

Finally, uniform convergence of the sampling error of $\cB'$ implies that it remains an accurate algorithm, and hence so is $\cB$. In particular, when $\db$ is a random sample of $m$ rows from $\db'$ and $d$ is the VC-dimension of $\cQ$, we have~\cite{AnthonyBa09}:
\[\Pr[\exists q \in \cQ : |q(\db) - q(\db')| > \alpha'] \le 4 \cdot \left(\frac{2em}{d}\right)^d \cdot \exp\left(-\frac{(\alpha')^2 m}{8}\right).\]
Taking $\alpha'$ as in the theorem statement makes the total failure probability of $\cB$ at most $1/2$.
\end{proof}

Now we proceed to complete the proof of Theorem \ref{thm:weak-fpc}. Suppose $(\cQ, \cX)$ has sample complexity greater than $n$ for $(\alpha + \alpha', \beta)$-accuracy (with failure probability $1/2$) and $(6\eps, (e^{2\eps} + e^{5\eps})\delta)$-differential privacy. By Lemma \ref{lem:one-sided}, for every $(\alpha, \beta)$-accurate mechanism $\cA$ for $\cQ$ there exists a database $\db \in \univ^{m}$ with $m = \lfloor n\eps \rfloor$, a set $T$, and an index $i$ such that
\begin{equation} \label{eqn:one-sided}
\Prob{\substack{j \getsr [m] \\ \text{coins}(\cA)}}{\cA(\db_{-j}) \in T} > e^{\eps}\Prob{\text{coins}(\cA)}{\cA(\db_{-i}) \in T} + \delta.
\end{equation}
We now argue that it is without loss of generality to restrict our attention to mechanisms $\cA$ whose range is the finite set $I_m^{|\cQ|} = \{0, \frac{1}{2m}, \frac{1}{m}, \dots, 1-\frac{1}{2m}, 1\}^{|\cQ|}$. To see this, note that the exact answer to any counting query $q$ on a database $\db \in \univ^m$ is in the set $\{0, \frac{1}{m}, \frac{2}{m}, \dots, 1 - \frac{1}{m}, 1\}$. Therefore, if an answer $a \in [0, 1]$ satisfies $|a - q(\db)| \le \alpha$, then the value 
\[\bar{a} = \frac{1}{2m} \cdot \left(\lceil (a - \alpha)m \rceil + \lfloor (a + \alpha)m \rfloor\right)\]
is a point in $I_m$ that also satisfies $|\bar{a} - q(\db)| \le \alpha$.  Thus, we will henceforth assume that the mechanism's output lies in this finite range.

We now apply the min-max theorem from game theory (or equivalently, linear programming duality), to exhibit a fixed distribution on $(\db, T, i)$ for which Inequality (\ref{eqn:one-sided}) holds. Specifically, consider a two-player zero-sum game in which Player 1 chooses a triple $(\db, T, i)$, where $\db \in \univ^m$, $T \subseteq I_m^{|\cQ|}$, and $i \in [m]$, and Player 2 chooses a randomized function $\cA : \univ^m \to I_m^{|\cQ|}$ that is $(\alpha, \beta)$-accurate for $\cQ$. Let the payoff to Player 1 be
\[\Pr_{j \getsr [m]}[\cA(\db_{-j}) \in T] - e^{\eps} \mathbb{ I} (\cA(\db_{-i}) \in T).\]
By inequality (\ref{eqn:one-sided}), the value of this game is greater than $\delta$. So by the min-max theorem there exists a mixed strategy for Player 1 that achieves a payoff greater than $\delta$ against any mixed strategy for Player 2. (Note that we can apply the min-max theorem because we have assumed that the mechanism's output lies in a finite range.)  That is, there exists a distribution $\cD$ over triples $(\db, T, i)$ such that for any randomized algorithm $\cA : \univ^m \to I_m^{|\cQ|}$ that takes any $\db$ to a feasible vector in $F_{\alpha, \beta}(\db)$ with probability at least $2/3$,

\begin{equation} \label{eqn:weakfpc-dist}
\Prob{\substack{j \getsr [m] \\ \text{coins}(\cA) \\ (\db, T, i) \getsr \cD}}{\cA(\db_{-j}) \in T} > e^{\eps} \cdot \Prob{\substack{\text{coins}(\cA) \\ (\db, T, i) \getsr \cD}}{\cA(\db_{-i}) \in T} + \delta.
\end{equation}

Now consider the following code: $\gen$ samples a database $\db$, a set $T$, and an index $i$ according to the promised distribution $\cD$. The codebook $C$ is $(\db_{\pi(1)}, \dots, \db_{\pi(m)})$ where $\pi: [m] \to [m]$ is a random permutation. On input an answer vector $a$, the algorithm $\trace$ checks whether $a \in T$. If it is, then $\trace$ outputs $\pi(i)$, and otherwise outputs $\bot$.

To analyze the security of this code, fix a coalition $S$ of $m-1$ users using a pirate strategy $\fpadv$. Because the codebook is a random permutation of the rows of $\db$, it is equivalent to analyze the original database $\db$ and a random coalition of $m-1$ users. Thus the part of the codebook $C_S$ given to the pirates is a random set of $m - 1$ rows from $\db$, i.e. $\db_{-j}$ for a random $j \in [m]$ with the junk row at index $j$ removed. The condition that $\fpadv$ outputs a feasible answer vector is equivalent to $a = \fpadv(C_S)$ being an $(\alpha, \beta)$-accurate answer vector. Therefore, letting $\cA : \univ^m \to I_m^{|\cQ|}$ be the algorithm that runs $\fpadv$ on its input with the junk row removed, we have
\[\Pr_{\gen, \trace, \fpadv}[\trace(C, a) \ne \bot] = \Pr_{\substack{\text{coins}(\fpadv) \\ (\db, T, i) \getsr \cD, \pi}}[\fpadv(C_S) \in T] = \Pr_{\substack{j \getsr [m], \text{coins}(\cA) \\ (\db, T, i) \getsr \cD}}[\cA(\db_{-j}) \in T].\]
On the other hand, the probability that $\trace$ outputs the user $j$ not in the coalition is
\begin{align*}
\Pr_{\gen, \trace, \fpadv}[\trace(C, a) = i] &= \Pr_{\substack{j \getsr [m], \text{coins}(\fpadv) \\ (\db, T, i) \getsr \cD, \pi}}[\trace(C, a) = i \land j = i] \\
&= \frac{1}{m} \cdot \Pr_{\text{coins}(\cA), (\db, T, i) \getsr \cD}[\cA(\db_{-i}) \in T],
\end{align*}
because the events $\{j = i\}$ and $\{\trace(C, a) = i\}$ are independent. Thus by (\ref{eqn:weakfpc-dist}),
\[\Pr[\trace(a) \ne \bot] > e^{\eps} m \cdot \Pr[\trace(a) \in [m] \setminus S] + \delta,\]
where both probabilities are taken over the coins of $\gen, \trace$, and $\fpadv$.
\end{proof}

\section{A Composition Theorem for Sample Complexity} \label{sec:comp}

\newcommand{\multsofm}{\set{0,1/\alphsize,\dots,(\alphsize - 1)/\alphsize,1}}

In this section we state and prove a composition theorem for sample complexity lower bounds.  At a high-level the composition theorem starts with two pairs, $(\queryset, \univ)$ and $(\queryset', \univ')$, for which we know sample-complexity lower bounds of $\recrows$ and $\privrows$ respectively, and attempts to prove a sample-complexity lower bound of $\recrows \cdot \privrows$ for a related family of queries on a related data universe.  

Specifically, our sample-complexity lower bound will apply to the ``product'' of $\queryset$ and $\queryset'$, defined on $\univ \times \univ'$.  We define the product $\queryset \land \queryset'$ to be
$$
\queryset \land \queryset' = \{q \land q' \from (x, x') \mapsto q(x) \land q'(x') \mid q \in \queryset, q \in \queryset'\}.
$$
Since $q,q'$ are boolean-valued, their conjunction can also be written $q(x) q'(x')$.  

We now begin to describe how we can prove a sample complexity lower bound for $\queryset \land \queryset'$.  First, we describe a certain product operation on databases.  Let $\recdb \in \univ^{\recrows}$, $\recdb = (\recrow_1,\dots,\recrow_{\recrows})$, be a database.  Let $\privdb_1,\dots,\privdb_\recrows \in (\univ')^{\privrows}$ where $\privdb_i = (\privrow_{i1},\dots,\privrow_{i\privrows})$ be $\recrows$ databases.  We define the product database $\compdb = \recdb \times (\privdb_1,\dots,\privdb_{\recrows}) \in (\recuniv \times \privuniv)^{\recrows \cdot \privrows}$ as follows:  For every $i = 1,\dots,\recrows, j = 1,\dots,\privrows$, let the $(i,j)$-th row of $\compdb$ be $x^*_{(i,j)} = (\recrow_i, \privrow_{ij})$.  Note that we index the rows of $\compdb$ by $(i,j)$.  We will sometimes refer to $\privdb_1,\dots,\privdb_{\recrows}$ as the ``subdatabases'' of $\compdb$.  

The key property of these databases is that we can use a query $q \land q' \in \queryset \land \queryset'$ to compute a ``subset-sum'' of the vector $s_{q'} = (q'(D'_1),\dots,q'(D'_n))$ consisting of the answers to $q'$ on each of the $n$ subdatabases.  That is, for every $q \in \queryset$ and $q' \in \queryset'$,
\begin{equation} \label{eqn:subsetsum}
(q \land q')(D^*) 
= \frac{1}{\recrows \cdot \privrows} \sum_{i=1}^{\recrows} \sum_{j=1}^{\privrows} (q \land q')(x^*_{(i,j)})
= \frac{1}{n} \sum_{i=1}^{n} q(x_i) q'(D'_i).
\end{equation}

Thus, every approximate answer $\answer_{\recquery \land \privquery}$ to a query $\recquery \land \privquery$ places a subset-sum constraint on the vector $s_{\privquery}$.  (Namely, $\answer_{\recquery \land \privquery} \approx \frac{1}{\recrows} \sum_{i=1}^{\recrows} \recquery(\recrow_i) \privquery(\privdb_i)$)
If the database $D$ and family $\queryset$ are chosen appropriately, and the answers are sufficiently accurate, then we will be able to reconstruct a good approximation to $s_{q'}$.  Indeed, this sort of ``reconstruction attack'' is the core of many lower bounds for differential privacy, starting with the work of Dinur and Nissim~\cite{DinurNi03}.  The setting they consider is essentially the special case of what we have just described where $D'_1,\dots,D'_n$ are each just a single bit ($\univ' = \bits$, and $\queryset'$ contains only the identity query).  In Section~\ref{sec:applications} we will discuss choices of $D$ and $\queryset$ that allow for this reconstruction.  

We now state the formal notion of reconstruction attack that we want $D$ and $\queryset$ to satisfy.
\begin{definition} [Reconstruction Attacks] \label{def:recattack}
Let $\queryset$ be a family of counting queries over a data universe $\univ$.  Let $\rows \in \N$ and $\alpha', \alpha, \beta \in [0,1]$ be parameters.  Let $\db = (\row_1,\dots,\row_{\rows}) \in \univ^\rows$ be a database.  Suppose there is an adversary $\recadv_{D} : \R^{|\queryset|} \to [0,1]^\rows$ with the following property:  For every vector $s \in [0,1]^\rows$ and every sequence $\answer = (\answerq{\query})_{\query \in \queryset} \in \R^{|\queryset|}$ such that
$$
\left|\answerq{\query} - \frac{1}{n}\sum_{i = 1}^\rows q(\row_i)s_i\right| < \alpha
$$
for at least a $1-\beta$ fraction of queries $\query \in \queryset$,
$\recadv_{\recdb}(\answer)$ outputs a vector $t \in [0,1]^{\rows}$ such that
$$
\frac{1}{n} \sum_{i=1}^{\rows} | t_i - s_i | \leq \alpha'.
$$
Then we say that $D \in \univ^\rows$ \emph{enables an $\alpha'$-reconstruction attack from $(\alpha, \beta)$-accurate answers to $\queryset$}.
\mnote{We kept the phrasing as ``from accurate answers'' instead of changing it to ``from correlations''}
\end{definition}

A reconstruction attack itself implies a sample-complexity lower bound, as in~\cite{DinurNi03}.  However, we show how to obtain stronger sample complexity lower bounds from the reconstruction attack by applying it to a product database $D^*$ to obtain accurate answers to queries on its subdatabases.  For each query $q' \in \queryset'$, we run the adversary promised by the reconstruction attack on the approximate answers given to queries of the form $(q \land q') \in \queryset \land \{q'\}$.  As discussed above, answers to these queries will approximate subset sums of the vector $s_{q'} = (q'(\privdb_1),\dots,q'(\privdb_\recrows))$.  When the reconstruction attack is given these approximate answers, it returns a vector $t_{q'} = (t_{q',1},\dots,t_{q', \recrows})$ such that $t_{q',i} \approx s_{q',i} = q'(\privdb_i)$ on average over $i$.  Running the reconstruction attack for every query $\privquery$ gives us a collection $t = (t_{q',i})_{q' \in \queryset', i \in [\recrows]}$ where $t_{q',i} \approx q'(\privdb_i)$ on average over both $q'$ and $i$.  By an application of Markov's inequality, for most of the subdatabases $\privdb_i$, we have that $t_{q',i} \approx q'(\privdb_i)$ on average over the choice of $q' \in \queryset'$.  For each $i$ such that this guarantee holds, another application of Markov's inequality shows that for most queries $q' \in \queryset'$ we have $t_{q', i} \approx q'(\privdb_i)$, which is our definition of $(\alpha, \beta)$-accuracy (later enabling us to apply a re-identification adversary for $\privqueryset$).

The algorithm we have described for obtaining accurate answers on the subdatabases is formalized in Figure~\ref{fig:compadv0}.
\begin{figure}[ht]
\begin{framed}
\begin{algorithmic}
\STATE{Let $a = (a_{q \land q'})_{q \in \queryset, q'\in\queryset'}$ be an answer vector.}
\STATE{Let $\recadv_{\recdb} \from \R^{|\recqueryset|} \to [0,1]^{\recrows}$ be a reconstruction attack.}
\STATE{For each $q' \in \queryset'$}
\INDSTATE[1]{Let $(t_{q', 1}, \dots, t_{q', n}) = \recadv_{\recdb}((a_{q \land q'})_{q \in \queryset})$}
\STATE{Output $(t_{\privquery, i})_{\privquery \in \privqueryset, i \in [n]}$.}
\end{algorithmic}
\end{framed}
\vspace{-6mm}
\caption{The reconstruction $\cR_{\recdb}^*(a)$.}
\label{fig:compadv0}
\end{figure}

We are now in a position to state the main lemma that enables our composition technique.  The lemma says that if we are given accurate answers to $\queryset \land \queryset'$ on $\compdb$ and the database $\recdb \in \recuniv^{\recrows}$ enables a reconstruction attack from accurate answers to $\recqueryset$, then we can obtain accurate answers to $\privqueryset$ on most of the subdatabases $\privdb_1,\dots,\privdb_{\recrows} \in (\univ')^{\privrows}$.
\begin{lemma} \label{lem:comp0}
Let $\recdb \in \recuniv^{\recrows}$ and  $\privdb_1,\dots,\privdb_n \in (\univ')^{\privrows}$ be databases and $D^* \in (\univ \times \univ')^{\recrows \cdot \privrows}$ be as above.  Let $\answer = (\answer_{\recquery \land \privquery})_{\recquery \in \recqueryset, \privquery \in \privqueryset} \in \R^{|\recqueryset \land \privqueryset|}$. Let $\alpha', \alpha, \beta \in [0,1]$ be parameters.  Suppose that for some parameter $c > 1$, the database $D$ enables an $\alpha'$-reconstruction attack from $(\alpha, c\beta)$-accurate answers to $\queryset$.  Then if $(t_{q', i})_{q' \in \queryset', i \in [\recrows]} = \cR_{\recdb}^*(a)$ (Figure~\ref{fig:compadv0}),
\begin{align*}
&\textrm{$a$ is $(\alpha, \beta)$-accurate for $\queryset \land \queryset'$ on $D^*$} \\
&\Longrightarrow{} \Prob{i \getsr [n]}{\textrm{$(t_{q', i})_{q' \in \queryset'}$ is $(6c\alpha', 2/c)$-accurate for $\queryset'$ on $D_i$}} \geq 5/6.
\end{align*}
\end{lemma}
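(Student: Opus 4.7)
My plan is to unpack the definition of accuracy on $D^*$ using the subset-sum identity given in equation (\ref{eqn:subsetsum}), then cascade two Markov-type arguments: one to pass from global accuracy across $\queryset \land \queryset'$ to per-$\privquery$ accuracy on subset-sum queries, and a second double-counting argument to pass from per-$\privquery$ reconstruction guarantees on the vectors $s_{\privquery} = (\privquery(\privdb_1),\dots,\privquery(\privdb_{\recrows}))$ to per-$i$ accuracy on the subdatabases. The key trick in the second step, which I expect to be where the constants matter, is not to blindly apply Markov to the average error over $(\privquery, i)$ pairs (which only gives a weaker conclusion), but to use the fact that the fraction of ``bad'' $\privquery$'s is bounded independently, so that any ``bad'' index $i$ must fail on a large fraction of the \emph{good} $\privquery$'s, not just of all $\privquery$'s.

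First I would fix $\answer$ to be $(\alpha, \beta)$-accurate for $\queryset \land \queryset'$. Calling $\privquery \in \queryset'$ ``good'' if the projected answer vector $(\answer_{\recquery \land \privquery})_{\recquery \in \queryset}$ is $(\alpha, c\beta)$-accurate as an estimator for $\frac{1}{n}\sum_i \recquery(\recrow_i) \cdot s_{\privquery, i}$, I would apply Markov to the bad-pair indicator over $\queryset \times \queryset'$ to conclude that the fraction of good $\privquery$ is at least $1 - 1/c$. By hypothesis, for every good $\privquery$ the reconstruction attack $\recadv_\recdb$ outputs a vector $t_{\privquery} \in [0,1]^{\recrows}$ satisfying $\frac{1}{\recrows}\sum_i |t_{\privquery, i} - s_{\privquery, i}| \leq \alpha'$. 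A second application of Markov, for each good $\privquery$ separately, then says that the set $B_{\privquery} = \{i : |t_{\privquery, i} - s_{\privquery, i}| > 6c\alpha'\}$ has size at most $\recrows/(6c)$.

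Now comes the delicate part. Say that an index $i$ is \emph{bad} if $(t_{\privquery, i})_{\privquery \in \queryset'}$ fails to be $(6c\alpha', 2/c)$-accurate for $\queryset'$ on $\privdb_i$, meaning more than $2/c$ fraction of $\privquery$ have $|t_{\privquery, i} - s_{\privquery, i}| > 6c\alpha'$. Since at most $1/c$ fraction of $\privquery$ are not good, bad $i$'s must have more than $(2/c - 1/c) = 1/c$ fraction of \emph{good} $\privquery$'s with $i \in B_{\privquery}$. Double-counting pairs $(\privquery, i)$ with $\privquery$ good and $i \in B_{\privquery}$ gives
\[
(\text{number of bad } i) \cdot \frac{|\queryset'|}{c} \;\leq\; \sum_{\privquery \text{ good}} |B_{\privquery}| \;\leq\; |\queryset'| \cdot \frac{\recrows}{6c},
\]
so at most $\recrows/6$ of the indices are bad. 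This yields the desired $\Pr_{i \getsr [\recrows]}[\text{$i$ is good}] \geq 5/6$.

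The main obstacle I anticipate is precisely the trap in the second Markov step: the naive bound on $\frac{1}{|\queryset'|\cdot\recrows}\sum_{\privquery, i} |t_{\privquery,i} - s_{\privquery,i}|$ (using $\alpha'$ for good $\privquery$ and the trivial $1$ for bad $\privquery$) gives an average error of $\approx \alpha' + 1/c$, which through a single Markov is too weak to hit both accuracy parameters $6c\alpha'$ and $2/c$ simultaneously with probability $5/6$. Treating the two sources of failure separately (the $1/c$ fraction of non-good $\privquery$ and the $1/(6c)$ per-good-$\privquery$ failure in $i$) and only invoking the good $\privquery$ in the double-counting above is what makes the constants $(6c\alpha', 2/c, 5/6)$ all consistent.
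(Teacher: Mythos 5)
Your proof is correct, and it reaches the same constants as the paper via a slightly different arrangement of the averaging. The paper's proof first defines $\goodqueryset$ exactly as you do (Markov over pairs gives $|\goodqueryset| \ge (1-1/c)|\queryset'|$), but then it applies Markov \emph{over $i$} to the quantity $\Ex{q' \in \goodqueryset}{|t_{q',i} - q'(\privdb_i)|}$ (whose average over $i$ is at most $\alpha'$), concluding that for at least a $5/6$ fraction of indices $i$ this conditional expectation is $\le 6\alpha'$; and for each such $i$ a second Markov over $q' \in \goodqueryset$ gives that at most a $1/c$ fraction of good $q'$ have $|t_{q',i}-q'(\privdb_i)| > 6c\alpha'$, which together with the $\le 1/c$ fraction of non-good $q'$ yields $(6c\alpha', 2/c)$-accuracy. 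You instead apply Markov over $i$ separately for each good $q'$ to bound $|B_{q'}| \le n/(6c)$, and then close with a double-counting argument over pairs $(q', i)$ with $q'$ good and $i \in B_{q'}$. The two arguments are logically equivalent -- both are two-step averaging arguments with the same decomposition into good $q'$ and per-index failures -- and you have correctly identified and avoided the trap of applying a single Markov to the joint average error, which would waste a factor and not hit both $6c\alpha'$ and $2/c$ simultaneously. The double-counting form arguably makes the bookkeeping of where each factor of $c$ and $6$ is spent slightly more transparent, while the paper's nested-Markov form is more symmetric with its first step.
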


The additional bookkeeping in the proof is to handle the case where $a$ is only accurate for most queries.  In this case the reconstruction attack may fail completely for certain queries $q' \in \queryset'$ and we need to account for this additional source of error.

\begin{proof}[Proof of Lemma~\ref{lem:comp0}]
Assume the answer vector $\answer = (\answer_{q \land q'})_{q \in \queryset, q' \in \queryset'}$ is $(\alpha,\beta)$-accurate for $\recqueryset \land \privqueryset$ on $\compdb = \recdb \times (\privdb_1,\dots,\privdb_{\recrows})$.  By assumption, $\recdb$ enables a reconstruction attack $\recadv_{\recdb}$ that succeeds in reconstructing an approximation to $s_{\query'} = (\query'(\privdb_1),\dots,q'(\privdb_{\recrows}))$ when given $(\alpha, c\beta)$-accurate answers for the family of queries $\queryset \land \{q'\}$.  Consider the set of $q'$ on which the reconstruction attack succeeds, i.e.
$$
\goodqueryset = \set{q' \mid \textrm{$(\answer_{q \land q'})_{q \in \queryset}$ is $(\alpha, c\beta)$-accurate for $\queryset \land \{q'\}$}}.
$$
Since $a$ is $(\alpha, \beta)$-accurate, an application of Markov's inequality shows that $$\prob{q' \in \goodqueryset} \geq 1-1/c.$$  Thus, $|\goodqueryset| \geq (1-1/c)|\queryset'|$.

Recall that, by~\eqref{eqn:subsetsum}, we can interpret answers to $\queryset \land \queryset'$ as subset sums of answers to the subdatabases, so for every $q' \in \goodqueryset$,
$$
\left|\answer_{\recquery \land \privquery} - \frac{1}{\recrows}\sum_{i=1}^\recrows \recquery(\recrow_i) q'(\privdb_i) \right| < \alpha
$$
for at least a $1-c \beta$ fraction of queries $\recquery \land \privquery \in \recqueryset \land \{\privquery\}$.  Since $D$ enables a reconstruction attack from $(\alpha, c\beta)$-accurate answers to $\queryset$, by Definition \ref{def:recattack}, $\recadv_{\recdb}((\answer_{q \land q'})_{q \in \queryset})$ recovers a vector $t_{\privquery} \in [0,1]^{\recrows}$ such that
$$
\frac{1}{ \recrows} \sum_{i = 1}^\recrows \left| t_{q', i} - q'(\privdb_i) \right| < \alpha'.
$$
Since this holds for every $q' \in \goodqueryset$, we have
\begin{align*}
&\Ex{q' \getsr \goodqueryset, i \getsr [\recrows]}{|t_{q', i} - q'(\privdb_i)|} \leq \alpha' \\
\Longrightarrow &\Prob{i \getsr [\recrows]}{ \Ex{q' \in \goodqueryset}{|t_{q',i} - q'(\privdb_i)|} \leq 6\alpha' } \geq 5/6 \tag{Markov} \\
\Longrightarrow &\Prob{i \getsr [\recrows]}{ \textrm{$|t_{q',i} - q'(\privdb_i)| \leq 6c\alpha'$ for at least a $1-1/c$ fraction of $q' \in \goodqueryset$} } \geq 5/6 \tag{Markov} \\
\Longrightarrow &\Prob{i \getsr [\recrows]}{ \textrm{$|t_{q',i} - q'(\privdb_i)| \leq 6c\alpha'$ for at least a $1-2/c$ fraction of $q' \in \privqueryset$} } \geq 5/6 \tag{since $|\goodqueryset| \geq (1-1/c)|\privqueryset|$}
\end{align*}
The statement inside the final probability is precisely that $(t_{q',i})_{q' \in \queryset'}$ is $(6c\alpha', 2/c)$-accurate for $\queryset'$ on $\privdb_i$.  This completes the proof of the lemma.
\end{proof}

We now explain how the main lemma allows us to prove a composition theorem for sample complexity lower bounds.  We start with a query family $\queryset$ on a database $\recdb \in \recuniv^\recrows$ that enables a reconstruction attack, and a distribution $\privdist$ over databases in $(\privuniv)^{\privrows}$ that is re-identifiable from answers to a family $\queryset'$. We show how to combine these objects to form a re-identifiable distribution $\compdist$ for queries $\queryset \land \queryset'$ over $(\recuniv \times \privuniv)^{\recrows \cdot \privrows}$, yielding a sample complexity lower bound of $\recrows \cdot \privrows$.

A sample from $\compdist$ consists of $\compdb = \recdb \times (\privdb_1,\dots,\privdb_{\recrows})$ where each subdatabase $\privdb_i$ is an independent sample from from $\privdist$. The main lemma above shows that if there is an algorithm $\san$ that is accurate for $\queryset \land \queryset'$ on $\compdb$, then an adversary can reconstruct accurate answers to $\queryset'$ on most of the subdatabases $\privdb_1, \dots, \privdb_{\recrows}$. Since these subdatabases are drawn from a re-identifiable distribution, the adversary can the re-identify a member of one of the subdatabases $\privdb_i$.  Since the identified member of $\privdb_i$ is also a member of $\compdb$, we will have a re-identification attack against $\compdb$ as well.

We are now ready to formalize our composition theorem.
\begin{theorem} \label{thm:composition}
Let $\recqueryset$ be a family of counting queries on $\recuniv$, and let $\privqueryset$ be a family of counting queries on $\privuniv$.  Let $\gamma, \sec, \alpha', \alpha, \beta \in [0,1]$ be parameters.  Assume that for some parameters $c > 1$, $\gamma, \sec, \alpha', \alpha, \beta \in [0,1]$, the following both hold:
\begin{enumerate}
\item There exists a database $\recdb \in \recuniv^{\recrows}$ that enables an $\alpha'$-reconstruction attack from $(\alpha, c \beta)$-accurate answers to $\recqueryset$.
\item There is a distribution $\privdist$ on databases $\db \in (\privuniv)^{\privrows}$ that is $(\gamma,\sec)$-re-identifiable from $(6c \alpha', 2/c)$-accurate answers to $\privqueryset$.
\end{enumerate}
Then there is a distribution on databases $\compdb \in (\recuniv \times \privuniv)^{\recrows \cdot \privrows}$ that is $(\gamma+1/6, \sec)$-re-identifiable from $(\alpha, \beta)$-accurate answers to $\recqueryset \land \privqueryset$.

\end{theorem}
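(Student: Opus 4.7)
The plan is to combine the reconstruction attack on $\recdb$ with the re-identification adversary $\privadv$ for $\privdist$ in the natural way suggested by Lemma~\ref{lem:comp0}. First, I would define the composed distribution $\compdist$ as follows: sample $\privdb_1, \ldots, \privdb_\recrows$ i.i.d.\ from $\privdist$ (sharing state with $\privadv$ in the obvious coordinate-wise way), and output the product $\compdb = \recdb \times (\privdb_1, \ldots, \privdb_\recrows)$, where $\recdb$ is the fixed reconstruction-enabling database from hypothesis (1). I would then define the composed adversary $\compadv$ on input $(\compdb, a)$ to (i) run $\cR^*_{\recdb}(a)$ to obtain reconstructed answer vectors $(t_{\privquery, i})_{\privquery \in \privqueryset,\, i \in [\recrows]}$, (ii) pick an index $i \getsr [\recrows]$ uniformly at random, (iii) feed $(\privdb_i, (t_{\privquery, i})_{\privquery \in \privqueryset})$ to $\privadv$, and (iv) output $(i, \privadv(\cdots))$ or $\bot$ if $\privadv$ aborts.

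For the completeness condition, assume $a = \san(\compdb)$ is $(\alpha,\beta)$-accurate for $\recqueryset \land \privqueryset$. By Lemma~\ref{lem:comp0}, applied with the parameters guaranteed by hypothesis~(1), with probability at least $5/6$ over the uniform choice of $i \in [\recrows]$ the reconstructed vector $(t_{\privquery, i})_{\privquery \in \privqueryset}$ is $(6c\alpha', 2/c)$-accurate for $\privqueryset$ on $\privdb_i$. Conditioned on this event (and even without conditioning, by the union bound below), I would apply the $(\gamma, \sec)$-re-identifiability of $\privdist$ to the derived randomized algorithm $\privdb \mapsto (t_{\privquery, i})_{\privquery}$, which is a bona fide mechanism acting on $\privdb = \privdb_i$ once we fold the randomness of $\san$, of $\cR^*_\recdb$, and of the other independent $\privdb_{i'}$'s into its coins. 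Condition~(1) of re-identifiability then bounds the $\bot$-probability by $\gamma$ when the reconstructed answers are accurate, so the overall $\bot$-and-accurate probability is at most $\gamma + 1/6$.

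For the soundness condition, fix a target row $(i^*, j^*)$. Since $\compadv$ outputs $(i^*, j^*)$ only when it selects $i = i^*$ (probability $1/\recrows$) and $\privadv$ returns $j^*$, we have
\[
\Pr\!\bigl[\compadv(\compdb, \san(\compdb_{-(i^*,j^*)})) = (i^*,j^*)\bigr] = \frac{1}{\recrows}\,\Pr\!\bigl[\privadv(\privdb_{i^*}, (t_{\privquery, i^*})_{\privquery}) = j^*\bigr].
\]
The key structural point is that replacing the $(i^*, j^*)$-row of $\compdb$ by junk $(\bot,\bot) \in \univ \times \univ'$ is equivalent to replacing the $j^*$-th row of $\privdb_{i^*}$ by junk inside the product, so $\san(\compdb_{-(i^*,j^*)})$ (and hence the induced $(t_{\privquery, i^*})_{\privquery}$) can be realized as the output of a single randomized algorithm $\san''$ applied to $\privdb_{i^*, -j^*}$, where $\san''$'s internal coins sample the other $\privdb_{i}$'s and the coins of $\san$ and $\cR^*_\recdb$. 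Condition~(2) of the re-identifiability of $\privdist$ now bounds the inner probability by $\sec$, yielding total soundness at most $\sec/\recrows \le \sec$.

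The main technical subtlety, which I expect to be the only place that deserves careful verification, is the ``change of viewpoint'' step in both arguments: we need to cleanly repackage the complicated process (sample other subdatabases, run $\san$ on the joint database, run $\cR^*_\recdb$, extract the $i$-th column) as a single randomized mechanism on $\privdb_i$ so that Definition~\ref{def:reidentifiabledist} can be invoked. Once this is done, and the junk-row commutation with the product $\recdb \times (\cdot)$ is observed, the proof reduces to the two estimates above plus a union bound, and no nontrivial calculation remains.
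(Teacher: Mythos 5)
Your proposal is correct and follows essentially the same route as the paper: the same product distribution $\compdist$, the same adversary (reconstruct via $\cR^*_{\recdb}$, pick a random subdatabase index, hand the reconstructed column to $\privadv$), Lemma~\ref{lem:comp0} plus a union bound for the $\gamma+1/6$ completeness term, and the same repackaging of the whole pipeline as a single mechanism on $\privdb_i$ (the paper realizes this by planting its input at a uniformly random slot $i^*$ and invoking exchangeability of the i.i.d.\ subdatabases, which is equivalent to your ``fold the other coins in'' step). Your soundness argument is in fact slightly more explicit than the paper's one-line Claim~\ref{clm:comp2}, and the junk-row commutation issue you flag is harmless since the reduction mechanism may simply overwrite that row itself.
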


\begin{proof}
Let $\recdb = (\row_1,\dots, \row_\rows) \in \recuniv^{\recrows}$ be the database that enables a reconstruction attack (Definition~\ref{def:recattack}).   Let $\privdist$ be the promised re-identifiable distribution on databases $D \in (\privuniv)^{\privrows}$ and $\reidadv \from (\privuniv)^{\privrows}\times \R^{|\privqueryset|} \to [\privrows] \cup \{\bot\}$ be the promised adversary (Definition~\ref{def:reidentifiabledist}).


In Figure~\ref{fig:compdist}, we define a distribution $\compdist$ on databases $\privdb \in (\recuniv \times \privuniv)^{\recrows \cdot \privrows}$.  In Figure~\ref{fig:compadv}, we define an adversary $\compadv \from (\recuniv \times \privuniv)^{\recrows \cdot \privrows} \times \R^{|\queryset \land \privqueryset|}$ for a re-identification attack.  The shared state of $\compdist$ and $\compadv$ will be the shared state of $\privdist$ and $\privadv$.  The next two claims show that $\compdist$ satisfies the two properties necessary to be a $(\gamma + 1/6, \sec)$-re-identifiable distribution (Definition \ref{def:reidentifiabledist}).
\begin{figure}[ht]
\begin{framed}
\begin{algorithmic}
\STATE{Let $\recdb = (\recrow_1, \dots, \recrow_\recrows) \in \recuniv^{\recrows}$ be a database that enables reconstruction.}
\STATE{Let $\privdist$ on $(\privuniv)^{\privrows}$ be a re-identifiable distribution.}
\STATE{For $i = 1, \dots, \recrows$, choose $\privdb_i \getsr \privdist$ (independently)}
\STATE{Output $\compdb = \recdb \times (\privdb_1,\dots,\privdb_\recrows) \in (\recuniv \times \privuniv)^{\recrows \cdot \privrows}$}
\end{algorithmic}
\end{framed}
\vspace{-6mm}
\caption{The new distribution $\compdist$.}
\label{fig:compdist}
\end{figure}

\begin{figure}[ht]
\begin{framed}
\begin{algorithmic}
\STATE{Let $\compdb = \recdb \times (\privdb_1,\dots,\privdb_{\recrows})$.}
\STATE{Run $\cR_{\recdb}^*(\san(\compdb))$ (Figure~\ref{fig:compadv0}) to reconstruct a set of approximate answers $(t_{\privquery, i})_{\privquery \in \privqueryset, i \in [\recrows]}$.}
\STATE{Choose a random $i \getsr [\recrows]$.}
\STATE{Output $\privadv(\privdb_i, (t_{\privquery, i})_{\privquery \in \privqueryset})$.}
\end{algorithmic}
\end{framed}
\vspace{-6mm}
\caption{The privacy adversary $\compadv(\compdb, \san(\compdb))$.}
\label{fig:compadv}
\end{figure}

\begin{claim} \label{clm:comp1}
$$
\Prob{\compdb \getsr \compdist \atop \textrm{coins$(\san)$}, \textrm{coins$(\compadv)$}}{(\compadv(\compdb, \san(\db^*)) = \bot) \land (\textrm{$\san(\compdb)$ is $(\alpha,\beta)$-accurate for $\queryset \land \queryset'$})} \leq \gamma + 1/6.
$$
\end{claim}
\begin{proof}[Proof of Claim~\ref{clm:comp1}]  Assume that $\san(\compdb)$ is $(\alpha, \beta)$-accurate for $\recqueryset \land \privqueryset$.  By Lemma~\ref{lem:comp0}, we have
\begin{align} \label{eq:comp2}
\Prob{i \getsr [n] \atop \textrm{coins$(\san)$}, \textrm{coins$(\compadv)$}}{(\textrm{$\san(\compdb)$ is $(\alpha,\beta)$-accurate for $\queryset \land \queryset'$}) \atop \land (\textrm{$(t_{q', i})_{q' \in \queryset'}$ is not $(6c\alpha', 2/c)$-accurate for $\queryset'$ on $D_i$})} \leq 1/6.
\end{align}
\newcommand{\cG}{\mathcal{G}}

By construction of $\compadv$,
\begin{align}
&\Prob{\compdb \getsr \compdist}{(\compadv(\compdb, \san(\db^*)) = \bot) \land (\textrm{$\san(\compdb)$ is $(\alpha,\beta)$-accurate for $\queryset \land \queryset'$})} \notag \\
={} &\Prob{\compdb \getsr \compdist \atop i \getsr [\recrows]}{(\privadv(\privdb_i, (t_{q',i})_{q' \in \privqueryset}) = \bot) \land (\textrm{$\san(\compdb)$ is $(\alpha,\beta)$-accurate for $\queryset \land \queryset'$})} \notag \\
\leq{} &\Prob{\compdb \getsr \compdist \atop i \getsr [\recrows]}{(\privadv(\privdb_i, (t_{q',i})_{q' \in \privqueryset}) = \bot) \land (\textrm{$(t_{q',i})$ is $(6c \alpha',2/c)$-accurate for $\queryset'$})} + \frac{1}{6} \label{eq:comp3}
\end{align}
where the last inequality is by~\eqref{eq:comp2}.  Thus, it suffices to prove that
\begin{equation} \label{eq:comp1}
\Prob{\compdb \getsr \compdist \atop i \getsr [\recrows]}{(\privadv(\privdb_i, (t_{q',i})_{q' \in \privqueryset}) = \bot) \land (\textrm{$(t_{q',i})$ is $(6c \alpha',2/c)$-accurate for $\queryset'$})}  \leq \gamma
\end{equation}
We prove this inequality by giving a reduction to the re-identifiability of $\privdist$. Consider the following sanitizer $\san'$:  On input $\privdb \getsr \privdist$, $\san'$ first chooses a random index $i^* \getsr [\recrows]$. Next, it samples $\privdb_1,\dots,\privdb_{i^*-1}, \privdb_{i^*+1},\dots,\privdb_{\recrows} \getsr \privdist$ independently, and sets $\privdb_{i^*} = \privdb$.  Finally, it runs $\san$ on $\compdb = \recdb \times (\privdb_1,\dots,\privdb_{\recrows})$ and then runs the reconstruction attack $\cR^*$ to recover answers $(t_{\privquery, i})_{\privquery \in \privqueryset, i \in [\recrows]}$ and outputs $(t_{\privquery, i^*})_{\privquery \in \privqueryset}$.

Notice that since $\privdb_1,\dots,\privdb_{\recrows}$ are all i.i.d. samples from $\privdist$, their joint distribution is independent of the choice of $i^*$.  Specifically, in the view of $\compadv$, we could have chosen $i^*$ after seeing its output on $\compdb$.  Therefore, the following random variables are identically distributed:
\begin{enumerate}
\item $(t_{q', i})_{q' \in \queryset'}$, where $(t_{q', i})_{q' \in \queryset', i \in [n]}$ is the output of $\cR_{\recdb}^*(\san(\compdb))$ on $\compdb \getsr \compdist$, and $i \getsr [\recrows]$.
\item $\san'(D')$ where $D' \getsr \privdist$.
\end{enumerate}
Thus we have
\begin{align*}
&\Prob{\compdb \getsr \compdist \atop i \getsr [\recrows]}{(\privadv(\privdb_i, (t_{q',i})_{q' \in \privqueryset}) = \bot) \land (\textrm{$(t_{q',i})$ is $(6c \alpha',2/c)$-accurate for $\queryset'$})} \\
={} &\Prob{\privdb \getsr \privdist}{(\privadv(\privdb, \san'(\privdb)) = \bot) \land (\textrm{$\san'(\privdb)$ is $(6c \alpha',2/c)$-accurate for $\queryset'$})} \leq \gamma
\end{align*}
where the last inequality follows because $\privdist$ is a $(\gamma, \sec)$-re-identifiable from $(6c \alpha', 2/c)$-accurate answers to $\privqueryset$.  Thus we have established~\eqref{eq:comp1}.  Combining~\eqref{eq:comp3} and~\eqref{eq:comp1} completes the proof of the claim.

\end{proof}

The next claim follows directly from the definition of $\compadv$ and the fact that $\privdist$ is $(\gamma,\sec)$-re-identifiable.
\begin{claim} \label{clm:comp2}
For every $(i,j) \in [\recrows] \times [\privrows]$,
$$
\Prob{\db \getsr \compdist}{\compadv(\db, \san(\db_{-(i,j)})) = (i,j)} \leq \sec.
$$
\end{claim}

Combining Claims~\ref{clm:comp1} and~\ref{clm:comp2} suffices to prove that $\compdist$ is $(\gamma + 1/6, \sec)$-re-identifiable from $(\alpha, \beta)$-accurate answers to $\recqueryset \land \privqueryset$, completing the proof of the theorem.
\end{proof}

The proof of Theorem \ref{thm:composition} also yields a composition theorem for generalized fingerprinting codes. Specifically, Theorem \ref{thm:composition-fpc} below shows how to combine a reconstruction attack for a query family $\recqueryset$ on a database $D \in \recuniv^\recrows$ with a $(\privrows, \privqueryset)$-generalized fingerprinting code to obtain a $(\recrows \cdot \privrows, \recqueryset \land \privqueryset)$-generalized fingerprinting code.

\begin{theorem} \label{thm:composition-fpc}
Let $\recqueryset$ be a family of counting queries on $\recuniv$, and let $\privqueryset$ be a family of counting queries on $\privuniv$.  Let $\gamma, \sec, \alpha', \alpha, \beta \in [0,1]$ be parameters.  Assume that for some parameters $c > 1$, $\gamma, \sec, \alpha', \alpha, \beta \in [0,1]$, the following both hold:
\begin{enumerate}
\item There exists a database $\recdb \in \recuniv^{\recrows}$ that enables an $\alpha'$-reconstruction attack from $(\alpha, c \beta)$-accurate answers to $\recqueryset$.
\item There exists a $(\privrows, \privqueryset)$-generalized fingerprinting code for $(6c\alpha', 2/c)$-accuracy with security $(\gamma, \sec)$.
\end{enumerate}
Then there is a $(\recrows \cdot \privrows, \recqueryset \land \privqueryset)$-generalized fingerprinting code for $(\alpha, \beta)$-accuracy with security $(\gamma + 1/6, \sec)$.
\end{theorem}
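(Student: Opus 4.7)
The plan is to mimic the proof of Theorem~\ref{thm:composition} nearly verbatim, replacing the re-identifiable distribution $\privdist$ by the inner fingerprinting code's generator and the re-identification adversary $\privadv$ by the inner tracing algorithm. Let $(\gen', \trace')$ denote the promised $(\privrows, \privqueryset)$-generalized fingerprinting code, and let $\recdb \in \recuniv^{\recrows}$ be the promised reconstruction-enabling database with attack $\recadv_{\recdb}$. Define $\gen^*$ to sample $C_i \getsr \gen'$ independently for $i = 1, \dots, \recrows$ and output the codebook $\compdb = \recdb \times (C_1, \dots, C_{\recrows}) \in (\recuniv \times \privuniv)^{\recrows \cdot \privrows}$, indexing rows by pairs $(i,j) \in [\recrows] \times [\privrows]$. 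Define $\trace^*$ on input $(\compdb, a)$ to run the reconstruction routine $\cR^*_{\recdb}(a)$ of Figure~\ref{fig:compadv0} to obtain $(t_{\privquery, i})_{\privquery \in \privqueryset, i \in [\recrows]}$, sample $i \getsr [\recrows]$ uniformly, compute $j \leftarrow \trace'(C_i, (t_{\privquery, i})_{\privquery})$, and output $(i, j)$ if $j \neq \bot$ and $\bot$ otherwise. The shared state of $\gen^*$ and $\trace^*$ is inherited from $\gen'$ and $\trace'$.

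For the completeness property, fix a pirate $\fpadv^*$ and a coalition $S \subseteq [\recrows \cdot \privrows]$ with $|S| \ge \recrows\privrows - 1$, and suppose the answer vector $a = \fpadv^*(\compdb_S)$ lies in $F_{\alpha, \beta}(\compdb_S)$. Applying Lemma~\ref{lem:comp0} yields that for at least a $5/6$ fraction of indices $i \in [\recrows]$, the recovered $(t_{\privquery, i})_{\privquery}$ is $(6c\alpha', 2/c)$-accurate for $\privqueryset$ on the full block $C_i$; together with the fact that $S_i := \{j : (i,j) \in S\}$ has size at least $\privrows - 1$, this also places $(t_{\privquery, i})_{\privquery}$ in $F_{6c\alpha', 2/c}(C_{i, S_i})$ up to a harmless $1/\privrows$ perturbation that can be absorbed into the accuracy parameter. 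A reduction paralleling Claim~\ref{clm:comp1} then bounds the total probability of $\trace^*$ outputting $\bot$ while $a$ is feasible by $\gamma + 1/6$: given an inner-code pirate's input $C'_{S'}$, one samples $i^* \getsr [\recrows]$, sets $C_{i^*} = C'$, samples the remaining $C_i \getsr \gen'$ and $\recdb$, runs $\fpadv^*$ on the induced $\compdb_S$, and outputs $(t_{\privquery, i^*})_{\privquery}$. By the i.i.d.~structure of $\gen^*$, this pirate's output distribution matches that of $(t_{\privquery, i})_{\privquery}$ in the original experiment for a uniformly random $i$, so completeness for $\trace^*$ reduces to completeness for $\trace'$.

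For the soundness property, fix $(i^*, j^*) \notin S$. Since the index $i$ chosen by $\trace^*$ is independent of $\compdb$ and $a$, we have $\Pr[\trace^*(\compdb, a) = (i^*, j^*)] = (1/\recrows) \cdot \Pr[\trace'(C_{i^*}, (t_{\privquery, i^*})_{\privquery}) = j^*]$. Running the same reduction with $i^*$ held fixed rather than sampled produces an inner-code pirate whose output has exactly the distribution of $(t_{\privquery, i^*})_{\privquery}$; since $j^* \notin S_{i^*}$ (as $(i^*, j^*) \notin S$) and $|S_{i^*}| \ge \privrows - 1$, the inner code's soundness bounds this probability by $\sec$. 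Summing over $(i^*, j^*) \notin S$ gives $\Pr[\trace^*(\compdb, a) \in [\recrows \privrows] \setminus S] \le (1/\recrows) \cdot \sum_{i^*} \sec = \sec$, as required. The main technical obstacle is the slight mismatch between the reconstruction's accuracy guarantee on the full block $C_i$ and the inner code's feasibility requirement on $C_{i, S_i}$; this is the $1/\privrows$ perturbation above, which is harmless in any parameter regime where the composition yields a non-trivial sample complexity lower bound.
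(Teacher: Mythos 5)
Your proposal follows exactly the route the paper intends: the paper offers no separate argument for Theorem~\ref{thm:composition-fpc}, remarking only that the proof of Theorem~\ref{thm:composition} ``also yields'' it, and your substitution of $\gen'$ for the re-identifiable distribution $\privdist$ and of $\trace'$ for the re-identification adversary $\privadv$ is the natural translation, with $\gen^*$, $\trace^*$, and the two reductions matching Claims~\ref{clm:comp1} and~\ref{clm:comp2} step for step. You also correctly surface the one genuine subtlety the paper glosses over: in Definition~\ref{def:general-fpc}, feasibility is measured against the coalition's partial codebook $C_S$ rather than the full codebook, so when $|S| = \recrows\privrows - 1$ the reconstruction attack receives query values perturbed by $O(1/(\recrows\privrows))$ at the outer level, and when the sampled block $i^*$ happens to be the one missing a row, the inner feasibility check is likewise off by $O(1/\privrows)$. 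As you say, this is absorbed by an inconsequential shift in the accuracy parameters $\alpha$ and $\alpha'$, although a fully precise statement of the theorem ought to allow that slack explicitly.

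Two small precision points. First, in your completeness reduction it is worth being explicit that when $i^* = i_0$ (the block containing the missing row $(i_0,j_0)$), the inner pirate must be invoked with coalition $S' = [\privrows]\setminus\{j_0\}$, while for $i^* \ne i_0$ it uses the full coalition $S'=[\privrows]$; both are covered by the inner code's guarantee, but conditioning on which case occurs cleans up the argument. Second, in the soundness bound you write $(1/\recrows)\sum_{i^*}\sec = \sec$, summing over all $\recrows$ indices, but since $|S|\ge \recrows\privrows - 1$ there is at most one pair $(i^*,j^*)\notin S$, so the sum has a single term and the correct bound is the stronger $\sec/\recrows \le \sec$. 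Your conclusion is right; the intermediate step simply overcounts.
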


\section{Applications of the Composition Theorem} \label{sec:applications}

In this section we show how to use our composition theorem (Section~\ref{sec:comp}) to combine our new lower bounds for $1$-way marginal queries from Section~\ref{sec:1way} with (variants of) known lower bounds from the literature to obtain our main results.  In Section~\ref{sec:lbkway} we prove a lower bound for $k$-way marginal queries when $\alpha$ is not too small (at least inverse polynomial in $d$), thereby proving Theorem~\ref{thm:main1} in the introduction.  Then in Section~\ref{sec:lbarbitrary} we obtain a similar lower bound for arbitrary counting queries that allows $\alpha$ to take a wider range of parameters.. 

\subsection{Lower Bounds for $k$-Way Marginals} \label{sec:lbkway}
In this section, we carry out the composition of sample complexity lower bounds for $k$-way marginals as described in the introduction (Theorem~\ref{thm:main1}).  Recall that we obtain our new $\tilde{\Omega}(k\sqrt{d}/\alpha^2)$ lower bound by combining three lower bounds:
\begin{enumerate}
\item Our re-identification based $\tilde{\Omega}(\sqrt{d})$ lower bound for $1$-way marginals (Section~\ref{sec:lb1way}),
\item A known reconstruction-based lower bound of $\Omega(k)$ for $k$-way marginals.
\item A known reconstruction-based lower bound of $\Omega(1/\alpha^2)$ for $k$-way marginals.
\end{enumerate}
The lower bound of $\Omega(k)$ for $k$-way marginals is a special case of a lower bound of $\Omega(\vcdim(\queryset))$ due to~\cite{Roth10} and based on~\cite{DinurNi03}, where $\vcdim(\queryset)$ is the \emph{Vapnik-Chervonenkis (VC) dimension} of $\queryset$.  The lower bound of $\Omega(1/\alpha^2)$ for $k$-way marginals is due to~\cite{KasiviswanathanRuSmUl10, De12}.

To apply our composition theorem, we need to formulate these reconstruction attack in the language of Definition \ref{def:recattack}. In particular, we observe that these reconstruction attacks readily generalize to allow us to reconstruct fractional vectors $s \in [0, 1]^n$, instead of just boolean vectors as in \cite{DinurNi03, Roth10}.

\subsubsection{The $\Omega(k)$ Lower Bound}
First we state and prove that the linear dependence on $k$ is necessary.
\begin{definition}[VC Dimension of Counting Queries] \label{def:vcdim}
Let $\queryset$ be a collection of counting queries over a data universe $\univ$. We say a set $\{x_1, \dots, x_k\} \subseteq \univ$ is \emph{shattered} by $\queryset$ if for every string $v \in \{0, 1\}^k$, there exists a query $\query \in \queryset$ such that $(q(x_1), \dots, q(x_k)) = (v_1, \dots, v_k)$. The \emph{VC-Dimension} of $\queryset$ denoted $\vcdim(\queryset)$ is the cardinality of the largest subset of $\univ$ that is shattered by $\queryset$.
\end{definition}
\begin{fact} \label{fact:vcconj}
The set of $k$-way conjunctions $\kdconj{k}$ over any data universe $\bits^d$ with $d \ge k$ has VC-dimension $\vcdim(\kdconj{k}) \ge k$.\footnote{More precisely, $\vcdim(\kdconj{k}) \geq k \log_2 (\lfloor d/k \rfloor),$ but we use the simpler bound $\vcdim(\kdconj{k}) \geq k$ to simplify calculation, since our ultimate lower bounds are already suboptimal by $\polylog(d)$ factors for other reasons.}
\end{fact}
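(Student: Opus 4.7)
The plan is to exhibit an explicit set of $k$ points in $\bits^d$ that is shattered by $\kdconj{k}$. Since $d \geq k$, we can work inside the first $k$ coordinates and ignore the rest. For each $i \in [k]$, define $x_i \in \bits^d$ to be the string whose $i$-th coordinate is $0$ and whose every other coordinate is $1$. We claim that $\set{x_1, \ldots, x_k}$ is shattered by $\kdconj{k}$.

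The key observation is that for a monotone conjunction $q_S$ with $S \subseteq [d]$, we have $q_S(x_i) = 1$ if and only if $x_i$ has a $1$ in every coordinate of $S$, which (by construction of $x_i$) happens if and only if $i \notin S$. Given any target pattern $v \in \bits^k$, the natural choice is therefore to take $S = \set{i \in [k] : v_i = 0}$, which satisfies $|S| \leq k$ and hence $q_S \in \kdconj{k}$. With this choice, $q_S(x_i) = 1$ iff $i \notin S$ iff $v_i = 1$, so $(q_S(x_1), \ldots, q_S(x_k)) = v$ as required.

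This exhibits $k$ points shattered by $\kdconj{k}$, establishing $\vcdim(\kdconj{k}) \geq k$. There is no real obstacle here beyond picking the shattering set; the witnesses $x_i$ are designed so that membership of $i$ in $S$ exactly toggles the value of $q_S(x_i)$, which is precisely what shattering requires.
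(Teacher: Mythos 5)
Your proof is correct and follows exactly the same approach as the paper: you witness shattering with $x_i$ equal to the all-ones string with a $0$ in position $i$, and for a target $v \in \bits^k$ you take the conjunction over $\set{i : v_i = 0}$. The only cosmetic difference is that the paper names this query $q_v$ while you call it $q_S$ with $S = \set{i : v_i = 0}$.
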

\begin{proof}
For each $i = 1, \dots, k$, let $x_i = (1, 1, \dots, 0, \dots, 1)$ where the zero is at the $i$-th index. We will show that $\{x_1, \dots, x_k\}$ is shattered by $\kdconj{k}$. For a string $v \in \bits^k$, let the query $q_v(x)$ take the conjunction of the bits of $x$ at indices set to $0$ in $v$. Then $q_v(x_i) = 1$ iff $v_i = 1$, so $(q_v(x_1), \dots, q_v(x_k)) = (v_1, \dots, v_k)$.
\end{proof}
\begin{lemma}[Variant of~\cite{DinurNi03,Roth10}] \label{lem:vcrec}
Let $\queryset$ be a collection of counting queries over a data universe $\univ$ and let $n = \vcdim(\queryset)$. Then there is a database $D \in \univ^n$ which enables a $4\alpha$-reconstruction attack from $(\alpha, 0)$-accurate answers to $\cQ$.
\end{lemma}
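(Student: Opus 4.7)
The plan is to instantiate the classical Dinur-Nissim reconstruction attack, using the shattering property granted by the VC-dimension hypothesis, and then to verify that the attack generalizes from Boolean ``secret bits'' $s \in \{0,1\}^n$ to arbitrary fractional vectors $s \in [0,1]^n$.

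First I would fix the database. By Definition~\ref{def:vcdim}, there exist elements $x_1,\dots,x_n \in \univ$ that are shattered by $\queryset$, meaning that for every subset $S \subseteq [n]$ there is a query $q_S \in \queryset$ with $q_S(x_i) = 1$ iff $i \in S$. Take $D = (x_1,\dots,x_n)$. The key feature of this database is that for any unknown $s \in [0,1]^n$, the quantity $\frac{1}{n}\sum_{i=1}^n q_S(x_i)\, s_i$ equals the subset sum $\frac{1}{n}\sum_{i \in S} s_i$, so $(\alpha,0)$-accurate answers to $\queryset$ yield $\alpha$-approximations to every subset sum of $s$.

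Next I would define the adversary $\recadv_D$: on input an answer vector $a$, output any $t \in [0,1]^n$ such that $|\frac{1}{n}\sum_{i \in S} t_i - a_{q_S}| \leq \alpha$ simultaneously for every $S \subseteq [n]$. The true vector $s$ is one such feasible choice, so this program is always solvable. (We do not care about efficiency, only existence.)

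The one step requiring a short calculation is the $\ell_1$ error bound. Let $S^+ = \{i : t_i > s_i\}$ and $S^- = \{i : t_i \leq s_i\}$. Applying the feasibility constraint at $S = S^+$ together with the accuracy hypothesis on $q_{S^+}$ gives
\[
\left| \frac{1}{n} \sum_{i \in S^+} t_i - \frac{1}{n} \sum_{i \in S^+} s_i \right| \leq |{\tfrac{1}{n}\sum_{i \in S^+} t_i - a_{q_{S^+}}}| + |a_{q_{S^+}} - \tfrac{1}{n}\sum_{i \in S^+} s_i| \leq 2\alpha,
\]
and the analogous bound of $2\alpha$ holds for $S^-$. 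Since $t_i - s_i$ has constant sign on each of $S^+$ and $S^-$, summing these two inequalities yields
\[
\frac{1}{n}\sum_{i=1}^n |t_i - s_i| = \frac{1}{n}\sum_{i \in S^+}(t_i - s_i) + \frac{1}{n}\sum_{i \in S^-}(s_i - t_i) \leq 4\alpha,
\]
which is exactly the $4\alpha$-reconstruction guarantee of Definition~\ref{def:recattack}. No real obstacle arises: the only subtle point (and the reason we deviate slightly from \cite{DinurNi03,Roth10}) is that $s$ is fractional rather than Boolean, but because the definition of reconstruction asks for average $\ell_1$-error rather than Hamming distance, the sign-partitioning argument goes through verbatim.
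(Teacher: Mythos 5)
Your proof is correct and follows essentially the same route as the paper's: pick a shattered set as the database, let the adversary brute-force search for a consistent $t \in [0,1]^n$, and bound the $\ell_1$ error by partitioning indices according to the sign of $t_i - s_i$ and applying the triangle inequality via the two shattering queries $q_{S^+}$ and $q_{S^-}$. The only cosmetic difference is that you impose the feasibility constraints only at the $2^n$ shattering queries while the paper imposes them over all of $\queryset$; both contain $s$ as a feasible point, and both suffice for the $4\alpha$ bound.
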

\begin{proof}
Let $\{x_1, \dots, x_n\}$ be shattered by $\queryset$, and consider the database $D = (x_1, \dots, x_n)$. Let $s \in [0,1]^n$ be an arbitrary string to be reconstructed and let $\answer = (\answer_{\query})_{\query \in \queryset}$ be $(\alpha, 0)$-accurate answers. That is, for every $\query \in \queryset$
$$
\left| \answer_\query - \frac{1}{\rows} \sum_{i=1}^{\rows} q(x_i) s_i \right| \leq \alpha
$$
Consider the brute-force reconstruction attack $\recadv$ defined in Figure \ref{fig:vcrec}. Notice that, since $\answer$ is $(\alpha, 0)$-accurate, $\recadv$ always finds a suitable vector $t$. Namely, the original database $s$ satisfies the constraints.
\begin{figure}[ht]
\begin{framed}
\begin{algorithmic}
\STATE{\textbf{Input:} Queries $\cQ$, and $(a_q)_{q \in \cQ}$ that are $(\alpha,0)$-accurate for $s$.}
\STATE{Find any $t \in [0,1]^n$ such that
$$
\left| a_q - \frac{1}{n} \sum_{i=1}^n \query(\row_i)t_i\right| \leq \alpha \quad \forall \query \in \cQ.
$$ }
\STATE{\textbf{Output:} $t$.}
\end{algorithmic}
\end{framed}
\vspace{-6mm}
\caption{The reconstruction adversary $\recadv(\db, \answer)$.}
\label{fig:vcrec}
\end{figure}
We will show that the reconstructed vector $t$ satisfies
$$
\frac{1}{n} \sum_{i=1}^{n} |t_i - s_i| \le 4\alpha.
$$
Let $T$ be the set of coordinates on which $t_i > s_i$ and let $S$ be the set of coordinates where $s_i > t_i$. Note that
\[\sum_{i=1}^{n} |t_i - s_i| = \sum_{i \in T} (t_i - s_i) + \sum_{i \in S} (s_i - t_i).\]
We will show that absolute values of the sums over $T$ and $S$ are each at most $2\alpha$. Since $\{x_1, \dots, x_n\}$ is shattered by $\queryset$, there is a query $\query \in \queryset$ such that $\query(x_i) = 1$ iff $i \in T$. Therefore, by the definitions of $t$ and $(\alpha, 0)$-accuracy,
\[\left|a_\query - \frac{1}{n} \sum_{i = 1}^n q(x_i)t_i\right| = \left|a_\query - \frac{1}{n} \sum_{i \in T} t_i \right|\le \alpha \quad \text{and} \quad \left|a_\query - \frac{1}{n} \sum_{i \in T} s_i \right|\le \alpha,\]
so by the triangle inequality, $\frac{1}{n}\sum_{i \in T} (t_i - s_i) \le 2\alpha$. An identical argument shows that $\frac{1}{n}\sum_{i \in S} (s_i - t_i) \le 2\alpha$, proving that $t$ is an accurate reconstruction.
\end{proof}

\subsubsection{The $\Omega(1/\alpha^2)$ Lower Bound for $k$-Way Marginals}
We can now state in our terminology the lower bound of De from~\cite{De12} (building on~\cite{KasiviswanathanRuSmUl10}) showing that the inverse-quadratic dependence on $\alpha$ is necessary.
\jnote{Being forced to use some constant $k$ is a bit annoying since it adds lots of quantifiers.  It would be nicer if we could say that it holds for some fixed function $k = \omega_d(1).$}
\begin{theorem} [Restatement of~\cite{De12}] \label{thm:conjrec}
Let $k$ be any constant, $d \geq k$ be any integer, and let $\alpha \geq 1/d^{.499k}$ be a sufficiently small parameter\footnote{The constant $.499$ was chosen for simplicity, and can be replaced with any constant strictly smaller than $.5$.} (i.e.~bounded by an absolute constant).  There exists a constant $\beta = \beta(k) > 0$ such that for every $\alpha' > 0,$ there exists a database $D \in (\bits^d)^n$ with $n = \Omega_{\alpha', k}(1/\alpha^2)$ such that $D$ enables an $\alpha'$-reconstruction attack from $(\alpha, \beta)$-accurate answers to the $k$-way marginals $\kdconj{k}$.
\end{theorem}

Although the above theorem is a simple extension of De's lower bound, we sketch a proof for completeness, and refer the interested reader to~\cite{De12} for a more detailed analysis.
\begin{proof}[Proof Sketch]
The reconstruction attack uses the ``$\ell_1$-minimization'' algorithm, which is shown in Figure~\ref{fig:l1rec}.
\begin{figure}[h!]
\begin{framed}
\begin{algorithmic}
\STATE{\textbf{Input:} Queries $\cQ$, $D = (x_1,\dots,x_n) \in \bits^{n \times d},$ and $a = (a_q)_{q \in \cQ}.$}
\STATE{Let $t \in [0,1]^n$ be
$$
\argmin_{t \in [0,1]^n} \sum_{q \in \cQ} \left| a_q - \frac{1}{n} \sum_{i=1}^{n} q(x_i) t_i \right| 
$$}
\STATE{\textbf{Output:} $t$.}
\end{algorithmic}
\end{framed}
\vspace{-6mm}
\caption{The reconstruction adversary $\recadv_{\cQ}(\db, \answer)$.}
\label{fig:l1rec}
\end{figure}
To prove that the reconstruction attack succeeds, we will show that there exists a database $D = (x_1,\dots,x_n) \in \bits^{n \times d}$ such that for any $s \in [0,1]^n$, if $\answer$ satisfies
$$
\Prob{q \in \kdconj{k}}{\left|a_q - \frac{1}{n}\sum_{i=1}^{n} q(x_i) s_i \right| \leq \alpha} \geq 1-\beta,
$$
(i.e.~$\answer$ has $(\alpha, \beta)$-accurate answers)
then $\recadv_{\kdconj{k}}(\db, \answer)$ returns a vector $t$ such that $\| t - s \|_1 \leq \alpha' \cdot n.$  Henceforth we refer to such an $a$ simply as $(\alpha, \beta)$-accurate for $\kdconj{k}$ on $(D, s),$ as a shorthand.  The above guarantee must hold for suitable choices of $n, \beta,$ and $\alpha'$ to satisfy the theorem.


We will argue that the reconstruction succeeds in two steps.  First, we show that reconstruction succeeds if $D$ is ``nice.''  Second, we show that there exists ``nice'' $D$ that has the dimensions promised by the theorem.

To explain what we mean by a ``nice'' database $D$, for any $D = (x_1,\dots,x_n) \in \bits^{n \times d}$ and family of queries $\cQ$ on $\bits^d$, we define the matrix $M = M_{D, \cQ} \in \bits^{n \times |\cQ|},$ as
$
M(i, q) = q(x_i).
$

De analyzes this reconstruction attack in terms of certain properties of the matrix $M.$  Before stating the conclusion, we will need to define the notion of a Euclidean section.  Informally, a matrix $M$ is a Euclidean Section if its rowspace\footnote{For a matrix $M$ with rows $M_1,\dots,M_n$, the rowspace of $M$ is $\set{a = \sum_{i=1}^{n} c_i M_i \mid c_1,\dots,c_n \in \mathbb{R}}$.} contains only vectors that are ``spread out.''
\begin{definition}[Euclidean Section]
A matrix $M \in \bits^{n \times m}$ is a \emph{$\delta$-Euclidean section} if for every vector $a$ in the rowspace of $M$ we have $\sqrt{m} \cdot \| a \|_2 \geq \| a \|_1 \geq \delta \sqrt{m} \cdot \| a \|_2.$
\end{definition}

\begin{lemma}[\cite{De12}] \label{lem:l1analysis}
Let $D$ be a database and $\cQ$ be a set of queries such that $M_{D, \cQ} \in \bits^{n \times |\cQ|}$ is a $\delta$-Euclidean section and the least singular value of $M_{D, \cQ}$ is $\sigma$.  Let $s \in [0,1]^n$ be arbitrary.  There exists $\beta = \beta(\delta) > 0$ such that if $a$ are $(\alpha, \beta)$-accurate answers for $\cQ$ on $(D,s)$, and $t = \recadv_{\cQ}(\db, \answer)$, then $t$ satisfies
$$
\| s - t \|_1 \leq \gamma n
$$
for $\gamma = O( \alpha \sqrt{n |\cQ|} / \sigma).$  The constant hidden in the $O(\cdot)$ notation depends only on $\delta.$
\end{lemma}

Thus, it suffices to find database $D$ such that the matrix $M_{D, \kdconj{k}}$ is a Euclidean section (for some fixed constant $\delta > 0$) and has no ``small'' singular values.  A result of Rudelson~\cite{Rudelson11} (strengthening that of Kasiviswanathan et al.~\cite{KasiviswanathanRuSmUl10}) guarantees that such a database exists.
\begin{lemma}[\cite{Rudelson11}] \label{lem:hadamardanalysis}
Let $k \in \N$ be any constant.  Let $d, n \in \N$ be such that $d^k \geq n \log n.$  Let $D \in \bits^{n \times d}$ be a uniform random matrix.  Then with probability at least $9/10$, the matrix $M_{D, \kdconj{k}}$ defined above has least singular value at least $\sigma = \Omega(d^{k/2})$ (where the hidden constant in the $\Omega(\cdot)$ may depend on $k$) and is a $\delta$-Euclidean section for some constant $\delta > 0$ that depends only on $k$.\footnote{Rudelson actually proves these statements about a related matrix $M_{D, \cQ}$ where $\cQ \subseteq \kdconj{k}.$  Since, for the $\cQ$ he considers, $|\cQ| \geq |\kdconj{k}|/(2k)^k,$ these statements can easily be seen to hold for the matrix $M_{D, \kdconj{k}}$ itself.  Specifically, adding this many more columns to the matrix $M_{D, \cQ}$ cannot decrease its least singular value (since $M_{D, \cQ}$ already has more columns than rows), and can only decrease the Euclidean section parameter $\delta$ by a factor of at most $(2k)^{k}.$}

In particular, there exists a database $D \in \bits^{n \times d}$ such that the Hadamard product $M$ satisfies the two properties above.
\end{lemma}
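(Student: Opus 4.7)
The plan is to analyze the random matrix $M = M_{D, \kdconj{k}}$, whose rows are indexed by the records $x_i$ and columns by the monotone $k$-way marginals $q_S$, and to establish the singular value bound and the Euclidean section property separately via standard random matrix machinery. The two conditions can then be combined by a union bound to yield the claimed $9/10$ success probability.

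For the least singular value bound, I would first compute $\Expectation[MM^T]$ entrywise: because the rows $x_i$ of $D$ are drawn uniformly and independently from $\bits^d$, a direct calculation gives $\Expectation[MM^T]_{ii} = \sum_{s=0}^k \binom{d}{s} 2^{-s}$ and $\Expectation[MM^T]_{ij} = \sum_{s=0}^k \binom{d}{s} 4^{-s}$ for $i \neq j$. Hence $\Expectation[MM^T] = \alpha I + \beta (J-I)$, where $\alpha = \Theta_k(d^k)$ is dominated by the $s=k$ contribution and $\beta$ is smaller by a factor of $2^k$; in particular $\lambda_{\min}(\Expectation[MM^T]) = \alpha - \beta = \Theta_k(d^k)$. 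The remaining task is to show spectral concentration $\|MM^T - \Expectation[MM^T]\|_{\mathrm{op}} \leq \alpha/2$. Viewing $M^T M = \sum_{i=1}^n M_i M_i^T$ as a sum of i.i.d.\ rank-one matrices (one per row of $D$) and applying the matrix Bernstein inequality, or equivalently Rudelson's noncommutative Khintchine bound (which is sharper for sums of this Bernoulli-type form), yields a spectral deviation of $\tilde O_k(d^{k/2}\sqrt{n})$. Under the hypothesis $d^k \geq n \log n$ this is $o(d^k)$, so $\lambda_{\min}(MM^T) \geq \alpha/2 = \Omega_k(d^k)$, and hence $\sigma_{\min}(M) = \Omega_k(d^{k/2})$, with probability at least $19/20$.

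For the Euclidean section property, note that any vector in the rowspace of $M$ takes the form $v = M^T u$ for some $u \in \R^n$; the upper bound $\|v\|_1 \leq \sqrt{|\queryset|}\,\|v\|_2$ is just Cauchy--Schwarz, so it suffices to establish the matching lower bound with a constant $\delta > 0$ depending only on $k$. I would do this by a net argument over the unit sphere of $\R^n$: for each fixed $u$, the entries $v_S = \langle u, (q_S(x_i))_i \rangle$ are sums of independent bounded random variables (across $i$), so subgaussian concentration shows that $v$ is ``spread'' in the sense that no small set of coordinates contributes a constant fraction of $\|v\|_2^2$. A union bound over a net of size $\exp(O(n \log d))$ promotes this to uniformity in $u$, and the hypothesis $d^k \geq n \log n$ ensures the failure probability is at most $1/20$. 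Spreadness plus a standard conversion lemma (showing $\|v\|_1 \geq \delta \sqrt{|\queryset|}\,\|v\|_2$ whenever $v$ has bounded ``spikiness'') completes this step.

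The main obstacle is the spectral concentration bound for $MM^T$: although the rows of $M$ are i.i.d., its columns are highly correlated (they are all low-degree monomials in the same $nd$ random bits), so naive invocations of matrix Bernstein lose polylogarithmic factors and do not exploit the full sharpness of the hypothesis $d^k \geq n \log n$. Obtaining the tight $\tilde O_k(d^{k/2}\sqrt{n})$ deviation is the technical heart of Rudelson's argument in \cite{Rudelson11} and requires the decoupling together with the noncommutative Khintchine framework; we simply cite the result. Once the spectral concentration is in hand, the Euclidean section step is a more routine concentration-plus-net argument.
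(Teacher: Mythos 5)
The paper does not actually prove this lemma: it is imported wholesale from Rudelson~\cite{Rudelson11}, and the only argument the paper supplies is the footnote reducing from the subfamily $\cQ \subseteq \kdconj{k}$ that Rudelson analyzes to the full family $\kdconj{k}$ (adding columns to an already-wide matrix cannot decrease its least singular value, and degrades the Euclidean-section constant by at most $(2k)^k$). Your proposal also ends by citing \cite{Rudelson11} for ``the technical heart,'' so in substance both rest on the same external result. The problem is that the parts of your sketch that go beyond the citation contain two concrete errors, so it cannot stand as an independent proof.

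First, your singular-value argument controls the wrong Gram matrix. Since $M$ is wide ($n$ rows, $|\kdconj{k}| = \Theta_k(d^k) \geq n\log n$ columns), the least singular value is $\sigma_{\min}(M)^2 = \lambda_{\min}(MM^T)$ for the $n \times n$ Gram matrix of the rows --- consistent with your computation $\Expectation[MM^T] = (\alpha-\beta)I + \beta J$. But the sum of i.i.d.\ rank-one matrices you propose to feed to matrix Bernstein, $M^TM = \sum_i M_iM_i^T$, is the $|\cQ|\times|\cQ|$ Gram matrix spanned by the rows; its least eigenvalue is $0$, and its expectation $n\,\Expectation[M_1M_1^T]$ has a spectrum entirely different from that of $\Expectation[MM^T]$, so concentration of $M^TM$ does not transfer to $\lambda_{\min}(MM^T)$. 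The object one must control, $MM^T - \Expectation[MM^T]$, has off-diagonal entries $\sum_S q_S(x_i)q_S(x_{i'}) - \beta$ that are chaoses in the independent rows and admit no decomposition into independent rank-one summands; taming their operator norm is precisely the decoupling argument of \cite{Rudelson11}, not a matrix-Bernstein application that merely ``loses polylogarithmic factors.'' Second, the Euclidean section property requires a \emph{lower} bound $\|M^Tu\|_1 \geq \delta\sqrt{|\cQ|}\,\|M^Tu\|_2$ uniformly in $u$; ``subgaussian concentration shows that $v$ is spread'' gives upper-tail control, whereas lower-bounding $\|v\|_1$ needs anticoncentration (small-ball or Paley--Zygmund-type estimates for the highly correlated coordinates $v_S$), which is again the substance of Rudelson's proof. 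The honest reading of your proposal is therefore the same as the paper's: cite \cite{Rudelson11}; the surrounding sketch should be repaired or dropped.
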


Using the above lemma, we can now complete the proof.  Fix any constant $k \in \N$.  Let $\alpha, d, n$ be any parameters such that $d \geq k$, $\alpha \geq 1/d^{.499k}$, and $d^{k} \geq n \log n$.  The precise value of $n$ will be determined later.  Let $D \in \bits^{n \times d}$ be the database promised by Lemma~\ref{lem:hadamardanalysis}.  Let $\beta = \beta(k) > 0$ be a parameter to be chosen later.  Let $\alpha' > 0$ be the desired accuracy of the reconstruction attack.

Now fix any $s \in [0,1]^n$ and let $a \in [0,1]^{|\kdconj{k}|}$ be $(\alpha, \beta)$-accurate answers to $\kdconj{k}$ on $(D,s)$.  Now, if we let $t = \cB_{\kdconj{k}}(D, a)$, by Lemma~\ref{lem:l1analysis}, provided that $\beta$ is smaller than some constant that depends only on $\delta$, which in turn depends only on $k$, we will have $\| s - t \|_1 \leq \gamma \cdot n$ for 
$$
\gamma = O\Bigg( \frac{\alpha \sqrt{n |\cQ|}}{\sigma} \Bigg) = O\Bigg( \frac{\alpha \sqrt{n} (d/k)^{k/2}}{d^{k/2}} \Bigg) = O(\alpha \sqrt{n}).
$$
Note that by Lemma~\ref{lem:l1analysis}, the hidden constant in the $O(\cdot)$ notation depends only on the parameter $\delta$ such that $M_{D, \kdconj{k}}$ is a $\delta$-Euclidean section.  By Lemma~\ref{lem:hadamardanalysis}, the parameter $\delta$ depends only on $k$.  Thus $\gamma = O(\alpha \sqrt{n})$ where the hidden constant depends only on $k$.  Now, we can choose $n = \Omega(1/\alpha^2)$ such that $\gamma \leq \alpha'.$ The hidden constant in the $\Omega(\cdot)$ will depend only on $k$ and $\alpha',$ as required by the theorem.  Note that, since we have assumed $\alpha \geq 1/d^{.499k}$, we have $n \log n = \tilde{O}(d^{.998k})$, and so we can define $n = \Omega_{k, \alpha'}(1/\alpha^2)$ while ensuring that $d^{k} \geq n \log n.$  Similarly, we required that $\beta$ is smaller than some constant that depends only on $\delta$, which in turn depends only on $k$.  Thus, we can set $\beta = \beta(k) > 0$ to be some sufficiently small constant depending only on $k$, as required by the theorem.  This completes our sketch of the proof.\end{proof}

\subsubsection{Putting Together the Lower Bound} \label{sec:puttingtogetherconj}
Now we show how to combine the various attacks to prove Theorem~\ref{thm:main1} in the introduction.  We obtain our lower bound by applying two rounds of composition. In the first round, we compose the reconstruction attack of Theorem~\ref{thm:conjrec} described above with the re-identifiable distribution for $1$-way marginals. We then take the resulting re-identifiable distribution and apply a second round of composition using the reconstruction attack based on the VC-dimension of $k$-way marginals.

We remark that it is necessary to apply the two rounds of composition in this order. In particular, we cannot prove Theorem~\ref{thm:main2} by composing first with the VC-dimension-based reconstruction attack.  Our composition theorem requires a re-identifiable distribution from $(\alpha, \beta)$-accurate answers for $\beta > 0$, whereas the reconstruction attack described in Lemma~\ref{lem:vcrec} requires $(\alpha, 0)$-accurate answers, and the reconstruction can fail if some queries have error much larger than $\alpha$.  The resulting re-identifiable distribution obtained from composing with this reconstruction attack will also require $(\alpha, 0)$-accurate answers, and thus cannot be composed further.

This limitation of Lemma~\ref{lem:vcrec} is inherent, because a sample complexity upper bound of $\tilde{O}(\sqrt{d} / \alpha^2)$ can be achieved for answering any family of queries $\cQ$ with $(\alpha, \beta)$-accuracy (for any constant $\beta > 0$).  Notice that this sample complexity is independent of $\mathit{VC}(\cQ)$.

We can now formally state and prove our sample-complexity lower bound for $k$-way marginals, thereby establishing Theorem~\ref{thm:main2} in the introduction.
\jnote{This magic parameter $\ell$ is a bit annoying.}
\begin{theorem} \label{thm:conjlb}
For every constant $\ell \in \N$, every $k, d \in \N$, $\ell + 2 \le k \le d$, and every sufficiently small (i.e.~bounded by an absolute constant) $\alpha \geq 1/d^{.499\ell}$, there is an
$$
n = n(k, d, \alpha) = \tilde{\Omega}\Bigg( \frac{k \sqrt{d}}{\alpha^2} \Bigg)
$$
such that there exists a distribution on $\rows$-row databases $D \in (\bits^d)^\rows$ that is $(1/2, o(1/\rows))$-re-identifiable from $(\alpha, 0)$-accurate answers to the $k$-way marginals $\kdconj{k}$.

\end{theorem}

\begin{proof}
We begin with the following two attacks:
\begin{enumerate}
\item By combining Theorem~\ref{thm:fpctolb} and Theorem~\ref{thm:rfpc}, there exists a distribution on databases $\privdb \in (\bits^{d/3})^{n_{d}}$ that is $(\gamma = 1/6, \sec = o(1/n_{d}n_{\alpha}n_{k}))$-re-identifiable from $(6c\alpha' = 1/3, 2/c = 1/75)$ accurate answers to the $1$-way marginals $\mathcal{M}_{1,d/3}$ for $n_d = \tilde{\Omega}(\sqrt{d} / \log (n_{d} n_{\alpha} n_{k})).$  Here $n_{\alpha}$ and $n_{k}$ are set below (the subscript corresponds to the primary parameter that each of the $n$'s will depend on).
\item By Theorem~\ref{thm:conjrec} (with $\alpha' = 1/2700$ and $k = \ell$), there is a constant $\beta > 0$ such that for any $7200 \alpha / \beta  \geq 1/d^{.499\ell}$ there exists a database $\recdb \in (\bits^{d/3})^{n_\alpha}$, for $n_{\alpha} = \tilde{\Omega}(1/\alpha^2)$ that enables a $(1/2700)$-reconstruction attack from $(7200 \alpha / \beta, \beta)$-accurate answers to $\mathcal{M}_{\ell, d/3}.$
\end{enumerate}
Applying Theorem~\ref{thm:composition} (with parameter $c=150$), we obtain item 1' below.  We then bring in another reconstruction attack for the composition theorem.
\begin{enumerate}[label=\arabic*'.]
\item There exists a distribution on databases in $(\bits^{2d/3})^{n_{d} n_{\alpha}}$ that is $(1/3, o(1/n_{d} n_{\alpha} n_{k}))$-re-identifiable from $(6c'\alpha' = 7200 \alpha / \beta, 2/c' = \beta/150)$-accurate answers to $\mathcal{M}_{\ell, d/3} \land \mathcal{M}_{1,d/3} \subset \mathcal{M}_{\ell + 1, 2d/3}$ (By applying Theorem~\ref{thm:composition} to 1 and 2 above.)
\item By Lemma~\ref{lem:vcrec} and Fact~\ref{fact:vcconj}, there exists a database $\recdb \in (\bits^{d/3})^{n_{k}}$, for $n_{k} = k - \ell - 1$, that enables an $(\alpha' = 4\alpha)$-reconstruction attack from $(\alpha, 0)$-accurate answers to the $(k-\ell-1)$-way marginals $\mathcal{M}_{k-\ell-1,d/3}$.  Note that $(k - \ell - 1) \geq 1$, since we have assumed $k \geq \ell + 2$.
\end{enumerate}
We can then apply Theorem~\ref{thm:composition} to 1' and 2' (with parameter $c' = 300/\beta$).  Thereby we obtain a distribution $\cD$ on databases $D \in (\bits^{d/3} \times \bits^{d/3} \times \bits^{d/3})^{n_{d} n_{\alpha} n_{k}}$ that is $(1/2, \sec)$-re-identifiable from $(\alpha, 0)$-accurate answers to $\mathcal{M}_{k-\ell-1,d/3} \land \mathcal{M}_{\ell, d/3} \land \mathcal{M}_{1, d/3} \subset \mathcal{M}_{k,d}.$

To complete the theorem, first note that $(\alpha, 0)$-accurate answers to $\mathcal{M}_{k,d}$ imply $(\alpha, 0)$-accurate answers to any subset of $\mathcal{M}_{k,d}$.  So our lower bound for the subset $\mathcal{M}_{k-\ell-1,d/3} \land \mathcal{M}_{\ell, d/3} \land \mathcal{M}_{1, d/3}$ is sufficient to obtain the desired lower bound.  Finally, note that
$$
n = n_{d} n_{\alpha} n_{k} = \tilde{\Omega}\left( \frac{k \sqrt{d} }{ \alpha^2} \right),
$$
as desired.  This completes the proof.
\end{proof}

Using the composition Theorem \ref{thm:composition-fpc} in place of Theorem \ref{thm:composition}, we obtain a version of Theorem \ref{thm:conjlb} in the language of generalized fingerprinting codes.

\begin{theorem} 
For every constant $\ell \in \N$, every $k, d \in \N$, $\ell + 2 \le k \le d$, and every sufficiently small (i.e.~bounded by an absolute constant) $\alpha \geq 1/d^{.499\ell}$, there is an
$$
n = n(k, d, \alpha) = \tilde{\Omega}\Bigg( \frac{k \sqrt{d}}{\alpha^2} \Bigg)
$$
such that there exists a $(\rows, \kdconj{k})$-generalized fingerprinting code with security $(1/2, o(1/\rows))$ for $(\alpha, 0)$-accuracy.
\end{theorem}

\subsubsection{A Tight Lower Bound for 2-Way Marginals}
Theorem~\ref{thm:conjlb} does not give any non-trivial lower bound for $2$-way marginals.  Intuitively, the problem is that the proof uses two rounds of composition, and thus if we try to instantiate the proof for $2$-way marginals, one of the three lower bounds being composed will have to be trivial (i.e.~will be a lower bound for $0$-way marginals).  However, a simple modification of the proof yields a tight lower bound for $2$-way marginals that holds even for $(\alpha, \beta)$-accuracy.

\begin{theorem}
For every $k, d \in \N$, and every sufficiently small (i.e.~bounded by an absolute constant) $\alpha \geq 1/d^{.499}$, there is a constant $\beta > 0$ and an
$$
n = n(d, \alpha) = \tilde{\Omega}\big( \sqrt{d} / \alpha^2 \big)
$$
such that there exists a distribution on $\rows$-row databases $D \in (\bits^d)^\rows$ that is $(1/2, o(1/\rows))$-re-identifiable from $(\alpha, \beta)$-accurate answers to the $2$-way marginals $\kdconj{2}$.
\end{theorem}

\begin{proof}
We begin with the following two attacks:
\begin{enumerate}
\item By combining Theorem~\ref{thm:fpctolb} and Theorem~\ref{thm:rfpc}, there exists a distribution on databases $\privdb \in (\bits^{d/2})^{n_{d}}$ that is $(\gamma = 1/6, \sec = o(1/n_{d}n_{\alpha}))$-re-identifiable from $(6c\alpha' = 1/3, 2/c = 1/75)$ accurate answers to the $1$-way marginals $\mathcal{M}_{1,d/2}$ for $n_d = \tilde{\Omega}(\sqrt{d} / \log (n_{d} n_{\alpha})).$  $n_{\alpha}$ is set below.
\item By Theorem~\ref{thm:conjrec} (with $\alpha' = 1/2700$ and $k = 1$), there is a constant $\beta > 0$ such that for any $2700\alpha / \beta \geq 1/d^{.499}$ there exists a database $\recdb \in (\bits^{d/2})^{n_\alpha}$, for $n_{\alpha} = \tilde{\Omega}(1/\alpha^2)$ that enables a $(1/2700)$-reconstruction attack from $(2700 \alpha, 600\beta)$-accurate answers to $\mathcal{M}_{1, d/2}.$
\end{enumerate}
Applying Theorem~\ref{thm:composition} (with parameter $c=150$), we obtain the following: 
There exists a distribution on databases in $(\bits^{d})^{n_{d} n_{\alpha}}$ that is $(1/3, o(1/n_{d} n_{\alpha}))$-re-identifiable from $(\alpha, 4\beta)$-accurate answers to $\mathcal{M}_{1, d/2} \land \mathcal{M}_{1,d/2} \subset \mathcal{M}_{2,d}$.

To complete the theorem, note that $\mathcal{M}_{1,d/2} \land \mathcal{M}_{1,d/2}$ contains exactly $1/4$ of all the queries in $\mathcal{M}_{2,d}$, so $(\alpha, \beta)$-accurate answers to $\mathcal{M}_{2,d}$ contain $(\alpha, 4\beta)$-accurate answers to the subset $\mathcal{M}_{1,d/2} \land \mathcal{M}_{1,d/2}$.  So our lower bound for the subset $\mathcal{M}_{1,d/2} \land \mathcal{M}_{1, d/2}$ is sufficient to obtain the desired lower bound.  Finally, note that
$$
n = n_{d} n_{\alpha} = \tilde{\Omega}\big(\sqrt{d} / \alpha^2 \big),
$$
as desired.  This completes the proof.
\end{proof}

\subsection{Lower Bounds for Arbitrary Queries} \label{sec:lbarbitrary}

Using our composition theorem, we can also prove a nearly-optimal sample complexity lower bound as a function of the $|\cQ|, d,$ and $\alpha$ and establish Theorem~\ref{thm:main2} in the introduction.  

As was the case in the previous section, the main result of this section will follow from three lower bounds: the $\tilde{\Omega}(\sqrt{d})$ lower bound for $1$-way marginals and the $\Omega(\vcdim(\queryset))$ bound that we have already discussed, a lower bound of $\Omega(1/\alpha^2)$ for worst-case queries, which is a simple variant of the seminal reconstruction attack of Dinur and Nissim~\cite{DinurNi03}, and related attacks such as~\cite{DworkMcTa07,DworkYe08}.  Although we already proved a $\Omega(1/\alpha^2)$ lower bound for the simpler family of $k$-way marginals in the previous section, the lower bound in this section will hold for a much wider range of $\alpha$ than what is known for $k$-way marginals (roughly $\alpha \geq 2^{-d}$ for arbitrary queries, whereas for $k$-way marginals we require $\alpha \geq 1/d^{\ell}$ for some constant $\ell$).

\subsubsection{The $\Omega(1/\alpha^2)$ Lower Bound for Arbitrary Queries}
Roughly, the results of~\cite{DinurNi03} can be interpreted in our framework as showing that there is an $\Omega(1/\alpha^2)$-row database that enables a $1/100$-reconstruction attack from $(\alpha, 0)$-accurate answers to some family of queries $\queryset$, but only when the vector to be reconstructed is Boolean. That is, the attack reconstructs a bit vector accurately provided that every query in $\queryset$ is answered correctly. Dwork et al. \cite{DworkMcTa07,DworkYe08} generalized this attack to only require $(\alpha, \beta)$-accuracy for some constant $\beta > 0$, and we will make use of this extension (although we do not require computational efficiency, which was a focus of those works).  Finally, we need an extension to the case of fractional vectors $s \in [0,1]^n$, instead of Boolean vectors $s \in \{0, 1\}^n$.

The extension is fairly simple and the proof follows the same outline of the original reconstruction attack from~\cite{DinurNi03}.  We are given accurate answers to queries in $\queryset$, which we interpret as approximate ``subset-sums'' of the vector $s \in [0, 1]^n$ that we wish to reconstruct.  The reconstruction attack will output any vector $t$ from a discretization $\multsofm^n$ of the unit interval that is ``consistent'' with these subset-sums.  The main lemma we need is an ``elimination lemma'' that says that if $\|t -s\|_1$ is sufficiently large, then for a random subset $T \subseteq [n]$,
$$
\frac{1}{n}\left| \sum_{i \in T} (t_i - s_i) \right| > 3\alpha
$$
with suitable large constant probability.
For $m=1$ this lemma can be established via combinatorial arguments, whereas for the $m > 1$ case we establish it via the Berry-Ess\'een Theorem.  The lemma is used to argue that for every $t$ that is sufficiently far from $s$, a large fraction of the subset-sum queries will witness the fact that $t$ is far from $s$, and ensure that $t$ is not chosen as the output.

First we state and prove the lemma that we just described, and then we will verify that it indeed leads to a reconstruction attack.
\begin{lemma} \label{lem:elimination}
Let $\kappa > 0$ be a constant, let $\alpha > 0$ be a parameter with $\alpha \le \kappa^2/240$, and let $n = 1/576\kappa^2\alpha^2$.  Then for every $r \in [-1,1]^n$ such that $\frac{1}{n} \sum_{i=1}^{n} |r_i| > \kappa$, and a randomly chosen $q \subseteq [n]$,
$$
\Prob{q \subseteq [n]}{\left|\frac{1}{n} \sum_{i \in q} r_i \right| > 3\alpha} \geq \frac{3}{5}.
$$
\end{lemma}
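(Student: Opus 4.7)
The plan is to analyze the random subset-sum $S := \sum_{i \in q} r_i = \sum_{i=1}^n Z_i r_i$, where the $Z_i$ are independent $\mathrm{Bernoulli}(1/2)$ indicators of membership in $q$, and approximate $S$ by a Gaussian via the Berry-Ess\'een theorem. I would first record that $\mu := \ex{S} = \tfrac{1}{2}\sum_i r_i$ and $\sigma^2 := \mathrm{Var}(S) = \tfrac{1}{4}\sum_i r_i^2$. From $|r_i| \leq 1$ and the hypothesis $\sum_i |r_i| > \kappa n$, Cauchy-Schwarz gives $\sum_i r_i^2 \geq (\sum_i |r_i|)^2/n > \kappa^2 n$, so $\sigma \geq \kappa\sqrt{n}/2$; trivially $\sigma \leq \sqrt{n}/2$. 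The third absolute moment of $Z_i r_i - r_i/2$ is $|r_i|^3/8 \leq |r_i|/8$, so $\rho := \sum_i \ex{|Z_i r_i - r_i/2|^3} \leq n/8$. The standard Berry-Ess\'een bound $|\Prob{}{S \leq x} - \Phi_{\mu,\sigma^2}(x)| \leq C\rho/\sigma^3$ then has error at most $O(1/(\kappa^3 \sqrt{n})) = O(\alpha/\kappa^2)$, which by the assumption $\alpha \leq \kappa^2/240$ is a small absolute constant.

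With the Gaussian approximation in hand, I would case-split on the magnitude of the mean. In the ``large mean'' regime $|\mu| \geq 6\alpha n$, the interval $[-3\alpha n, 3\alpha n]$ lies at distance at least $|\mu|/2$ from $\mu$, so Chebyshev's inequality gives
\[
\prob{|S| \leq 3\alpha n} \leq \prob{|S - \mu| \geq |\mu|/2} \leq \frac{4\sigma^2}{\mu^2} \leq \frac{1}{36\alpha^2 n},
\]
using $\sigma^2 \leq n/4$. Substituting $n = 1/(576\kappa^2\alpha^2)$ reduces this to an expression in $\kappa$ alone that is controlled by the hypothesis on $\alpha$. In the complementary ``small mean'' regime $|\mu| < 6\alpha n$, I would instead use anti-concentration: the Gaussian $N(\mu,\sigma^2)$ assigns mass at most $6\alpha n/(\sigma\sqrt{2\pi})$ to any interval of length $6\alpha n$ since its density is bounded by $1/(\sigma\sqrt{2\pi})$, so Berry-Ess\'een gives
\[
\prob{|S| \leq 3\alpha n} \leq \frac{6\alpha n}{\sigma\sqrt{2\pi}} + \frac{2C\rho}{\sigma^3}.
\]
Plugging in $\sigma \geq \kappa\sqrt{n}/2$ and the value of $n$ makes the first term $O(1/\kappa^2)$ times $\alpha\sqrt{n}$, which telescopes to an explicit constant depending only on $\kappa$ and the Berry-Ess\'een constant.

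The main obstacle I anticipate is balancing the constants so that \emph{both} regimes yield a bound of $2/5$ under the single choice $n = 1/(576\kappa^2\alpha^2)$. The Chebyshev bound is tightest when $\sigma$ is near its maximum $\sqrt{n}/2$ and $|\mu|$ is just above the threshold $6\alpha n$, whereas the anti-concentration bound is tightest when $\sigma$ is near its minimum $\kappa\sqrt{n}/2$. To make both simultaneously work, I would either tune the case-split threshold to $c \cdot \max(\alpha n, \sigma)$ for a carefully chosen constant $c$, or refine into three regimes based on the relative sizes of $|\mu|$, $\sigma$, and $\alpha n$. The final step is then just arithmetic verification, using the slack provided by $\alpha \leq \kappa^2/240$ to absorb the Berry-Ess\'een correction.
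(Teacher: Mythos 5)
Your overall strategy---write the random subset-sum as a sum of independent variables, compute mean, variance and third moments, and invoke Berry--Ess\'een---is exactly the paper's. Your moment bounds ($\sigma^2 = \tfrac14\sum_i r_i^2 \ge \kappa^2 n/4$ via Cauchy--Schwarz, third-moment sum at most $n/8$, Berry--Ess\'een error $O(1/(\kappa^3\sqrt{n}))$) coincide with the paper's, which centers the summands as $Q_i = \pm r_i/2$ and uses $n \ge 100/\kappa^6$ to make the approximation error at most $1/5$.

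The gap is in your case split on $|\mu|$. In the ``large mean'' branch, Chebyshev gives $\prob{|S|\le 3\alpha n} \le 4\sigma^2/\mu^2 \le \tfrac{1}{36\alpha^2 n}$, and substituting $n = 1/(576\kappa^2\alpha^2)$ yields $16\kappa^2$. The hypothesis $\alpha \le \kappa^2/240$ does nothing to control this quantity, and since $\kappa$ can be any constant up to $1$, the bound can be as large as $16$; so that branch does not close, and the constant-balancing you flag as your ``main obstacle'' cannot be resolved by tuning the threshold. The repair is to notice that no case split is needed at all: the event $|S| \le 3\alpha n$ is precisely the event that the centered sum $S - \mu$ lands in the \emph{fixed} interval $[-3\alpha n - \mu,\ 3\alpha n - \mu]$ of width $6\alpha n$, and your anti-concentration bound---Gaussian density at most $1/(\sigma\sqrt{2\pi})$, hence mass at most $6\alpha n/(\sigma\sqrt{2\pi})$ on \emph{any} interval of that width, plus the Berry--Ess\'een correction---applies verbatim whether or not that interval contains the origin. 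Your ``small mean'' argument is therefore already the entire proof, and the Chebyshev branch and the three-regime refinement should be deleted. This is exactly how the paper argues: its key claim is that $\sum_i Q_i$ avoids \emph{every} interval of width $6\alpha n$ with probability at least $3/5$, by comparing $6\alpha n$ to $\sigma/2$. One last caution: the final arithmetic is not ``just verification''---the anti-concentration term works out to roughly $\tfrac{6\alpha n}{\sigma\sqrt{2\pi}} \le \tfrac{1}{2\sqrt{2\pi}\,\kappa^2}$, so the dependence on $\kappa$ must be tracked explicitly rather than attributed to the hypothesis on $\alpha$, which only controls the Berry--Ess\'een error term.
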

\begin{proof} [Proof of Lemma~\ref{lem:elimination}]

Let $r$ be as in the statement of the lemma. Define a random variable
\begin{equation*}
Q_i =
\begin{cases}
r_i/2 & \textrm{if $i \in q$} \\
-r_i/2 & \textrm{if $i \notin q$}
\end{cases}
\end{equation*}
By construction, we have
$$
\frac{1}{n} \sum_{i \in q} r_i = \frac{1}{n}\sum_{i=1}^n \left(Q_i + \frac{r_i}{2}\right),
$$
Thus,
$$
\left| \frac{1}{n} \sum_{i \in q} r_i\right| \leq 3 \alpha \Longleftrightarrow \sum_{i=1}^n Q_i \in \left[ -3\alpha n - \frac{1}{2} \sum_{i=1}^n r_i, 3\alpha n - \frac{1}{2} \sum_{i=1}^n r_i \right].
$$
The condition on the right-hand side says that $\sum_{i} Q_i$ is in some interval of width $6\alpha n$.  Since the random variables $Q_i$ are independent, as $q$ is a randomly chosen subset, we will use the Berry-Ess\'een Theorem (Theorem~\ref{thm:berryesseen}) to conclude that this sum does not fall in any interval of this width too often. Establishing the next claim suffices to prove Lemma~\ref{lem:elimination}.

\begin{claim} \label{clm:elimination}
For any interval $I \subseteq \R$ of width $6 \alpha n$,
$$
\prob{\sum_{i} Q_i \not\in I} \geq \frac{3}{5}.
$$
\end{claim}
\begin{proof}[Proof of Claim~\ref{clm:elimination}]
We use the Berry-Ess\'een Theorem:

\begin{theorem}[Berry-Ess\'een Theorem] \label{thm:berryesseen}
Let $X_1,\dots,X_n$ be independent random variables such that $\ex{X_i} = 0$, $\sum_{i} \ex{X_i^2} = \sigma^2$, and $\sum_{i} \ex{|X_i|^3} = \gamma$.  Let $X = (X_1 + \dots + X_n) / \sigma$ and let $Y$ be a normal random variable with mean $0$ and variance $1$.  Then,
$$
\sup_{z, z' \in \R} \left| \prob{X \in [z,z']} - \prob{Y \in [z,z']} \right| \leq \frac{2\gamma}{\sigma^3}.
$$
\end{theorem}

In order to apply Theorem~\ref{thm:berryesseen} with $X_i = Q_i$, we need to analyze the moments of the random variables $Q_i$.
The following bounds can be verified from the definition of $Q_i$ and the assumption that $\|r\|_1 \geq \kappa n$.
\begin{enumerate}
\item $\ex{Q_i} = 0$.
\item $\sigma^2 = \sum_{i} \ex{Q_i^2} \ge \kappa^2n/4$.
\item $\gamma =  \sum_{i} \ex{|Q_i|^3} \leq \frac{n}{8}$.
\end{enumerate}
Thus, by Theorem~\ref{thm:berryesseen} we have
$$
\sup_{z,z' \in \R} \left| \prob{\frac{Q_1 + \dots + Q_n}{\sigma} \in [z,z']} - \prob{Y \in [z,z']}  \right| \leq \frac{2\gamma}{\sigma^3} \le \frac{2}{\kappa^3\sqrt{n}} \leq \frac{1}{5},
$$
where the final inequality holds because $n = 1/576\kappa^2\alpha^2 \geq 100/\kappa^6$.  It can be verified that for a standard normal random variable $Y$, and every interval $I \subset \R$ of width $1/2$, it holds that $\prob{Y \not\in I} \geq 4/5$.  Thus, for every such interval $I$,
\begin{align*}
&\prob{ \frac{Q_1 + \dots + Q_n}{\sigma} \not\in I} \geq \frac{4}{5} - \frac{1}{5}\\
\Longrightarrow{} &\prob{ Q_1 + \dots + Q_n \not\in \sigma I} \geq \frac{3}{5}
\end{align*}
where $\sigma I$ is an interval of width $\sigma/2$.  Thus we have obtained that $\sum_{i} Q_i$ falls outside of any interval of width $\sigma / 2$ with probability at least $3/5$.  In order to establish the claim, we simply observe that
$$
\frac{\sigma}{2} \ge \frac{ \kappa\sqrt{n}}{4} \ge 6 \alpha n 
$$
when $n = 1/576\kappa^2\alpha^2$. Thus, the probability of falling outside an interval of width $6 \alpha n $ is only larger than the probability of falling outside an interval of width $\sigma/2$.
\end{proof}

Establishing Claim~\ref{clm:elimination} completes the proof of Lemma~\ref{lem:elimination}.
\end{proof}

\begin{theorem} \label{thm:arbitraryrec}
Let $\alpha' \in (0, 1]$ be a constant, let $\alpha > 0$ be a parameter with $\alpha \le (\alpha')^2/960$, and let $n = 1/144(\alpha')^2\alpha^2$.  For any data universe $\univ = \{x_1, \dots, x_n\}$ of size $n$, there is a set of counting queries $\cQ$ over $\univ$ of size at most $O(n \log (1/\alpha))$ such that the database $D = (x_1, \dots, x_n)$ enables a $\alpha'$-reconstruction attack from $(\alpha, 1/3)$-accurate answers to $\cQ$.
\end{theorem}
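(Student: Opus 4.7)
\textbf{The plan} is to follow the classical Dinur--Nissim reconstruction attack, adapted to the fractional setting $s \in [0,1]^n$ by invoking the Elimination Lemma (Lemma~\ref{lem:elimination}). Let $m = \lceil 1/\alpha \rceil$ and $\cG = \{0, 1/m, \dots, 1\}^n$, a discretization of the unit cube of size $(m+1)^n = 2^{O(n \log(1/\alpha))}$. The query family $\cQ$ will consist of $k = C n \log(1/\alpha)$ independent uniformly random subset-indicator queries (so $q_\ell(x_i)$ is an independent $\mathrm{Bernoulli}(1/2)$), for a sufficiently large constant $C$. The reconstruction adversary $\cB_D(a)$ outputs any $t \in \cG$ for which the consistency check $|a_\ell - \frac{1}{n} q_\ell^\top t| \leq 2\alpha$ holds on a majority of queries. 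I will argue that with positive probability over the choice of $\cQ$ the reconstruction succeeds for every $s$, so a good deterministic $\cQ$ exists by the probabilistic method.

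\textbf{Completeness} comes from rounding $s$ to the nearest grid point $\bar{s} \in \cG$: since $\|\bar{s} - s\|_\infty \leq 1/m \leq \alpha$, every query satisfies $|\frac{1}{n} q_\ell^\top \bar{s} - \frac{1}{n} q_\ell^\top s| \leq \alpha$, and combining this with $(\alpha, 1/3)$-accuracy shows that at least a $2/3$-fraction of queries have $|a_\ell - \frac{1}{n} q_\ell^\top \bar{s}| \leq 2\alpha$. \textbf{Soundness} is where the Elimination Lemma enters: for any bad $t \in \cG$ with $\frac{1}{n}\|t - s\|_1 > \alpha'$, applying the lemma to $r = t - s$ with $\kappa = \alpha'/2$ (which is precisely what pins down the choice $n = 1/(144 (\alpha')^2 \alpha^2)$ and $\alpha \leq (\alpha')^2/960$) shows that a random query ``catches'' the discrepancy, meaning $\frac{1}{n} |q^\top(t-s)| > 3\alpha$, with probability at least $3/5$. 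Each query that is both $\alpha$-accurate for $s$ and catching for $t$ forces $|a_\ell - \frac{1}{n} q_\ell^\top t| > 2\alpha$ by the triangle inequality, so $t$ fails the consistency check on every such query. A Chernoff bound over the random $\cQ$, together with a union bound over the $2^{O(n \log(1/\alpha))}$ points of $\cG$, then shows that for $C$ large enough no bad $t$ passes $\cB_D$'s majority criterion with probability $1 - 2^{-\Omega(n)}$.

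\textbf{The main obstacle} is threading the quantitative constants. The naive accounting only guarantees an accurate-and-catching overlap of $2/3 + 3/5 - 1 = 4/15$, which falls slightly short of the $1/3$ budget needed to strictly separate bad $t$ from $\bar{s}$. I would close this gap by sharpening the Elimination Lemma within the regime $n = \Theta(1/((\alpha')^2 \alpha^2))$: choosing constants so that $6\alpha n$ is comfortably smaller than $\sigma/2$, and hence the relevant interval width is a small fraction of $\sigma$, lifts the Berry--Ess\'een escape probability above $2/3$. The catching probability then exceeds $2/3$, the accurate-and-catching fraction strictly exceeds $1/3$, and the majority-consistency reconstruction cleanly distinguishes $\bar{s}$ (inconsistent on at most $1/3$) from every bad $t$ (inconsistent on more than $1/3$). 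Once this sharpening is in place, the Chernoff and union-bound steps finish the proof with $k = O(n \log(1/\alpha))$ queries, as required.
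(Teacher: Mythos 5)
Your proposal follows essentially the same route as the paper's proof: the same discretization of $[0,1]^n$ to multiples of $1/m$ with $m = \lceil 1/\alpha\rceil$, the same family of $O(n\log(1/\alpha))$ uniformly random subset queries chosen by the probabilistic method, the same invocation of the elimination lemma with $\kappa = \alpha'/2$ (which is exactly what produces $n = 1/144(\alpha')^2\alpha^2$ and $\alpha \le (\alpha')^2/960$), and the same Chernoff-plus-union-bound over the $2^{O(n\log(1/\alpha))}$ grid points. The one substantive difference is the acceptance rule: you require a candidate $t$ to pass the $2\alpha$-consistency check on a \emph{majority} of queries, whereas the paper requires it to pass on more than a $5/6$ fraction. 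The gap you flag is real: with the lemma's catching probability of $3/5$ (which degrades to $1/2$ after the Chernoff step), a bad $t$ is only guaranteed to fail on a $2/3 + 1/2 - 1 = 1/6$ fraction of queries, so it can pass up to a $5/6$ fraction, and the majority rule does not exclude it. The paper's $5/6$ threshold handles soundness cleanly without touching Berry--Ess\'een, but it shifts the tension to completeness: under $(\alpha,1/3)$-accuracy the rounded vector $\bar{s}$ is only guaranteed to pass a $2/3 < 5/6$ fraction of the checks, so the constants in this argument genuinely do not thread as stated on either side. Your proposed fix---tightening the elimination lemma so that the catching probability strictly exceeds $2/3$ (which is achievable, since the interval width $6\alpha n$ can be made a small constant fraction of $\sigma$ by adjusting the constant in $n = \Theta(1/(\alpha')^2\alpha^2)$, and the Berry--Ess\'een error term scales as $1/(\kappa^3\sqrt{n})$)---is a legitimate way to close the gap and would let a single threshold of ``pass on at least a $2/3$ fraction'' separate $\bar{s}$ from every bad $t$; the alternative is simply to demand $(\alpha,\beta)$-accuracy for a smaller constant $\beta$. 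Either way, the idea content of your proof matches the paper's; only the constant bookkeeping, which you correctly identify as the delicate point, differs.
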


\begin{proof}
First we will give a reconstruction algorithm $\adv$ for an arbitrary family of queries.  We will then show that for a random set of queries $\cQ$ of the appropriate size, the reconstruction attack succeeds for every $s \in [0,1]^n$ with non-zero probability, which implies that there exists a set of queries satisfying the conclusion of the theorem. We will use the shorthand
\[\corr{q}{s} = \frac{1}{n} \sum_{i=1}^n q(x_i)s_i\]
for vectors $s \in [0,1]^n$.
\begin{figure}[ht]
\begin{framed}
\begin{algorithmic}
\STATE{\textbf{Input:}  Queries $\cQ$, and $(a_q)_{q \in \cQ}$ that are $(\alpha, 1/3)$-accurate for $s$.}
\STATE{Let $m = \lceil\frac{1}{\alpha}\rceil$}
\STATE{Find any $t \in \multsofm^n$ such that
$$
\Prob{q \getsr \cQ}{ | \corr{q}{t} - a_q | < 2\alpha} > \frac{5}{6}.
$$ }
\STATE{\textbf{Output:} $t$.}
\end{algorithmic}
\end{framed}
\vspace{-6mm}
\caption{The reconstruction adversary $\recadv$.}
\end{figure}
\label{fig:arbqueries}

In order to show that the reconstruction attack $\adv$ from Figure~\ref{fig:arbqueries} succeeds, we must show that
$
\frac{1}{n} \sum_{i=1}^n |t_i - s_i| \leq \alpha'.
$
Let $s \in [0, 1]^n$, and let $s' \in \multsofm^n$ be the vector obtained by rounding each entry of $s$ to the nearest $1/m$. Then 
$$
\frac{1}{n} \sum_{i=1}^n |s'_i - s_i| \leq \frac{\alpha}{2} \le \frac{\alpha'}{2},
$$
so it is enough to show that the reconstruction attack outputs a vector close to $s'$. Observe that the vector $s'$ itself satisfies
$$|\corr{q}{s'} - a_q| \le |\corr{q}{s} - a_q| + |\corr{q}{s' - s}| \le 2\alpha$$
for any subset-sum query $q$, so the reconstruction attack always finds some vector $t$. To show that the reconstruction is successful, fix any $t \in \multsofm^n$ such that 
$
\frac{1}{n} \sum_{i=1}^{n} | t_i -s'_i | > \frac{\alpha'}{2}.
$  
If we write $r =  s' - t \in \{-1, \dots,-1/m,0,1/m,\dots, 1\}^n$, then 
$
\frac{1}{n} \sum_{i=1}^{n} |r_i| > \frac{\alpha'}{2}
$  and
$
\corr{q}{r} = \corr{q}{t} - \corr{q}{s'}
$.
In order to show that no $t$ that is far from $s'$ can be output by $\recadv$, we will show that for any $r \in \{-1,\dots,-1/m,0,1/m,\dots,1\}$ with $\frac{1}{n}\sum_{i=1}^{n} |r| > \frac{\alpha'}{2}$,
$$
\Prob{q \getsr \cQ}{|\corr{q}{r}| > 3\alpha} \geq \frac{1}{2}.
$$

To prove this, we first observe by Lemma~\ref{lem:elimination} (setting $\kappa = \frac{1}{2}\alpha'$) that for a randomly chosen query $q$ defined on $\univ$,
$$
\Prob{q}{|\corr{q}{r}| > 3 \alpha} \geq \frac{3}{5}.
$$
The lemma applies because $\corr{q}{r} = \frac{1}{n} \sum_{i=1}^{n} q(x_i) r_i$ is a random subset-sum of the entries of $r$.  

Next, we apply a concentration bound to show that if the set $\queryset$ of queries is a sufficiently large random set, then for every vector $r$ the fraction of queries for which $|\corr{q}{r}|$ is large will be close to the expected number, which we have just established is at least $3|\cQ|/5$.  We use the following version of the Chernoff bound. 

\begin{theorem}[Chernoff Bound] \label{thm:chernoff}
Let $X_1, \dots, X_N$ be a sequence of independent random variables taking values in $[0, 1]$. If $X = \sum_{i=1}^N X_i$ and $\mu = \ex{X}$, then
\[\prob{X \le \mu - \eps} \le e^{-2\eps^2/N}.\]
\end{theorem}

Consider a set of randomly chosen queries $\cQ$.  By the above, we have that for every $r \in \{-1, \dots,-1/m,0,1/m,\dots, 1\}^n$ such that $\frac{1}{n} \sum_{i=1}^{n} |r| > \frac{\alpha'}{2}$, 
$$
\Ex{\cQ}{\left|\left\{ q \in \cQ \mid |\corr{q}{r}| > 3\alpha \right\}\right|} \geq \frac{3|\cQ|}{5}.
$$
Since the queries are chosen independently, by the Chernoff bound we have
\begin{align*}
\Prob{\cQ}{\left|\left\{ q \in \cQ \mid |\corr{q}{r}| > 3 \alpha \right\}\right| \leq \frac{|\cQ|}{2}} \leq e^{-|\cQ|/50}.
\end{align*}
Thus, we can choose $|\cQ| = O(n \log m)$ to obtain
\begin{align*}
&\Prob{\cQ}{ \exists r \in \set{-1,\dots,-1/m,0,1/m,\dots,1}^n, \; \frac{1}{n} \sum_{i=1}^{n} |r_i| > \frac{\alpha'}{2}, \quad \left|\left\{ q \in \cQ \mid |\corr{q}{y}| > 3 \alpha  \right\}\right| \leq \frac{|\cQ|}{2}} \\
&\quad < (2m+1)^n e^{-|\cQ|/50} \leq \frac{1}{2}.
\end{align*}

Thus, we have established that there exists a family of queries $\cQ$ such that for every $s, t$ such that $\frac{1}{n} \sum_{i=1}^{n} |t_i - s_i| > \alpha'$,
$$
\Prob{q \getsr \cQ}{|\corr{q}{s} - \corr{q}{t}| > 3\alpha} \geq \frac{1}{2}.
$$
Moreover, by $(\alpha, 1/3)$-accuracy, we have
$$
\Prob{q \getsr \cQ}{|a_q - \corr{q}{s}| > \alpha} \leq \frac{1}{3}.
$$
Applying a triangle inequality, we can conclude
$$
\Prob{q \getsr \cQ}{|a_q - \corr{q}{t}| > 2\alpha} \geq \frac{1}{2} - \frac{1}{3} \geq \frac{1}{6},
$$
which implies that $t$ cannot be the output of $\recadv$.  This completes the proof.

\end{proof}

\subsubsection{Putting Together the Lower Bound}

Now we show how to combine the various attacks to prove Theorem~\ref{thm:main1} in the introduction.  We obtain our lower bound by applying two rounds of composition. In the first round, we compose the reconstruction attack described above with the re-identifiable distribution for $1$-way marginals. We then take the resulting re-identifiable distribution and apply a second round of composition using the reconstruction attack for query families of high VC-dimension.

Just like our lower bound for $k$-way marginal queries, we remark that it is necessary to apply the two rounds of composition in this order.  See Section~\ref{sec:puttingtogetherconj} for a discussion of this issue.

\begin{theorem} \label{lem:arbitrarylb}
For all $d \in \N$, all sufficiently small (i.e.~bounded by an absolute constant) $\alpha > 2^{-d/6}$, and all $h \leq 2^{d/3}$, there exists a family of queries $\cQ$ of size $O(h d \log (1/\alpha)/ \alpha^2)$ and an
$$
n = n(h,d,\alpha) = \tilde{\Omega}\Bigg( \frac{\sqrt{d} \log h}{\alpha^2}\Bigg)
$$
such that there exists a distribution on $\rows$-row databases $D \in (\bits^{d})^{n}$ that is $(1/2, o(1/\rows))$-re-identifiable from $(\alpha, 0)$-accurate answers to $\cQ$.
\end{theorem}

\begin{proof}
We begin with the following two attacks:
\begin{enumerate}
\item By Theorem~\ref{thm:fpctolb} and Theorem~\ref{thm:rfpc}, there exists a distribution on databases in $(\bits^{d/3})^{m}$ that is $(1/6, o(1/m\ell\log h))$-re-identifiable from $(1/3, 1/75)$ accurate answers to $\mathcal{M}_{1,d/3}$ for $m =  \tilde{\Omega}(\sqrt{d} / \log (m\ell\log h))$.  Here $\ell$ and $h$ are parameters we set below.
\item For some $\ell = \Omega(1/\alpha^2)$, by Theorem~\ref{thm:arbitraryrec}, there exists a database $\recdb \in (\bits^{d/3})^{\ell}$ that enables a $\alpha'$-reconstruction attack from $(6c'\alpha, 1/3)$-accurate answers to some $\queryset_{rec}$ of size \linebreak$O((\log (1/\alpha))/\alpha^2)$. Here $\alpha'$ is a constant with $6c\alpha' = 1/3$ for a composition parameter $c$ set below, and $c'$ is a constant composition parameter set when we apply the second round of composition.
\end{enumerate}
Applying Theorem~\ref{thm:composition} (with parameter $c=150$), we obtain item 1' below.  We then bring in another reconstruction attack for the composition theorem.
\begin{enumerate}[label=\arabic*'.]
\item There exists a distribution on databases in $(\bits^{2d/3})^{m\ell}$ that is $(1/3, o(1/m\ell\log h))$-re-identifiable from $(6c' \alpha, 1/450)$-accurate answers to $\queryset_{rec} \land \mathcal{M}_{1,d/3}$ (By applying Theorem~\ref{thm:composition} to 1 and 2 above.)
\item By Lemma \ref{lem:vcrec}, there exists a database $\db \in (\bits^{d/3})^{\log h}$ that enables a $(4\alpha)$-reconstruction attack from $(\alpha, 0)$-accurate answers to some $\queryset_{vc}$ of size $h$.  (In particular, the family of queries can be all $(\log h)$-way marginals on the first $\log h$ bits of the data universe items.)
\end{enumerate}
We can then apply Theorem~\ref{thm:composition} to 1' and 2' (with parameter $c' = 900$).  Thereby we obtain a distribution $\cD$ on databases $D \in (\bits^{d/3} \times \bits^{d/3} \times \bits^{d/3})^{m \ell \log h}$ that is $(1/2, \sec)$-re-identifiable from $(\alpha, 0)$-accurate answers to $\cQ = \cQ_{vc} \land \cQ_{rec} \land \mathcal{M}_{1,d/3}$.

To complete the theorem we first set 
$$
n = m \ell \log h = \tilde{\Omega}(\sqrt{d} \log h/ \alpha^2).
$$
and then observe that 
$$
|\cQ_{vc} \land \cQ_{rec} \land \mathcal{M}_{1,d/3}| = h \cdot O(\ell \log (1/\alpha)/\alpha^2) \cdot d/3 =  O(h d \log (1/\alpha) / \alpha^2).
$$
This completes the proof.
\end{proof}

Again, Theorem \label{lem:arbitrarylb} has a corresponding statement in terms of generalized fingerprinting codes.

\begin{theorem} \label{lem:arbitrarylb}
For all $d \in \N$, all sufficiently small (i.e. bounded by an absolute constant) $\alpha > 2^{-d/6}$, and all $h \leq 2^{d/3}$, there exists a family of queries $\cQ$ of size $O(h d \log (1/\alpha)/ \alpha^2)$ and an
$$
n = n(h,d,\alpha) = \tilde{\Omega}\Bigg( \frac{\sqrt{d} \log h}{\alpha^2}\Bigg)
$$
such that there exists a $(\rows, \cQ)$-generalized fingerprinting code with security $(1/2, o(1/\rows))$ for $(\alpha, 0)$-accuracy.
\end{theorem}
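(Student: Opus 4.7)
The plan is to mirror the proof of the re-identifiability version (Theorem~\ref{lem:arbitrarylb} above) step by step, but replace each invocation of the composition theorem for re-identifiable distributions (Theorem~\ref{thm:composition}) with its fingerprinting-code analogue (Theorem~\ref{thm:composition-fpc}). This works because the underlying constructions — the product database, the reconstruction attacks, and the parameter choices — are identical; the only change is the object produced at the end of each composition step.

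First, I would establish the ``base'' generalized fingerprinting code for $1$-way marginals. By Theorem~\ref{thm:rfpc}, there is an $(m, d/3)$-fingerprinting code with security $o(1/m\ell \log h)$ robust to a $1/75$ fraction of errors, for $m = \tilde{\Omega}(\sqrt{d}/\log(m\ell\log h))$. The argument from the proof of Theorem~\ref{thm:fpctolb} — that rounding any $(1/3, 1/75)$-accurate answer vector to $\mathcal{M}_{1,d/3}$ yields a string in $F_{1/75}(C_S)$ — shows that this code is a $(m, \mathcal{M}_{1,d/3})$-generalized fingerprinting code for $(1/3, 1/75)$-accuracy with security $(\gamma, \sec) = (o(1/m\ell \log h), o(1/m\ell \log h))$ in the sense of Definition~\ref{def:general-fpc}; in particular $\gamma \leq 1/6$.

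Next, I would combine this base code with Theorem~\ref{thm:arbitraryrec}, which provides a database in $(\bits^{d/3})^{\ell}$, $\ell = \Omega(1/\alpha^2)$, that enables a $\alpha'$-reconstruction attack (with $6c\alpha' = 1/3$, $c = 150$) from $(\alpha, 1/3)$-accurate answers to some query family $\queryset_{rec}$ of size $O(\log(1/\alpha)/\alpha^2)$. Applying Theorem~\ref{thm:composition-fpc} with $c=150$ yields a $(m\ell, \queryset_{rec} \land \mathcal{M}_{1,d/3})$-generalized fingerprinting code for $(6c'\alpha, 1/450)$-accuracy with security $(1/3, o(1/m\ell\log h))$, where $c'$ will be the composition parameter for the next step.

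Finally, I would compose once more using Lemma~\ref{lem:vcrec}, which gives a database in $(\bits^{d/3})^{\log h}$ enabling a $4\alpha$-reconstruction attack from $(\alpha,0)$-accurate answers to some family $\queryset_{vc}$ of size $h$ (for instance, the $(\log h)$-way marginals on $\log h \leq d/3$ bits). Taking $c' = 900$ and applying Theorem~\ref{thm:composition-fpc} a second time produces the desired $(n, \cQ)$-generalized fingerprinting code with $\cQ = \queryset_{vc} \land \queryset_{rec} \land \mathcal{M}_{1,d/3}$ for $(\alpha, 0)$-accuracy and security $(1/2, o(1/n))$, for
$$
n = m\ell \log h = \tilde{\Omega}\!\left(\frac{\sqrt{d}\,\log h}{\alpha^2}\right),
$$
with $|\cQ| = h \cdot O(\log(1/\alpha)/\alpha^2) \cdot (d/3) = O(hd\log(1/\alpha)/\alpha^2)$. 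Since this is essentially a transliteration of the earlier proof, I do not anticipate a substantive obstacle; the only nontrivial check is that the base fingerprinting code from Theorem~\ref{thm:rfpc} indeed satisfies the two-sided security definition of a generalized fingerprinting code with the parameters demanded by Theorem~\ref{thm:composition-fpc}, and that the promised range $\alpha > 2^{-d/6}$ and $h \leq 2^{d/3}$ is consistent with all the applications of $\tilde{\Omega}$ in the parameter bookkeeping.
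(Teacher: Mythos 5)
Your proposal is correct and is essentially the paper's own argument: the paper obtains this statement by rerunning the two-round composition for arbitrary queries verbatim (same base object from Theorems~\ref{thm:fpctolb} and~\ref{thm:rfpc}, same reconstruction attacks from Theorem~\ref{thm:arbitraryrec} and Lemma~\ref{lem:vcrec}, same parameters $c=150$, $c'=900$) with Theorem~\ref{thm:composition-fpc} substituted for Theorem~\ref{thm:composition}. Your added check that the robust Tardos code is itself an $(m,\mathcal{M}_{1,d/3})$-generalized fingerprinting code for $(1/3,1/75)$-accuracy via the rounding argument is exactly the detail the paper leaves implicit.
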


\section{Constructing Error-Robust Fingerprinting Codes} \label{chap:rfpc}

\newcommand{\marked}{m}
\newcommand{\interword}{\tilde{\codeword}}
\newcommand{\diffword}{\overline{\codeword}}

In this section, we show how to construct fingerprinting codes that are robust to a constant fraction of errors, which will establish Theorem~\ref{thm:rfpc}.  Our codes are based on the fingerprinting code of Tardos~\cite{Tardos08}, which has a nearly optimal number of users, but is not robust to any constant fraction of errors.  The number of users in our code is only a constant factor smaller than that of Tardos, and thus our codes also have a nearly optimal number of users.

To motivate our approach, it is useful to see why the Tardos code (and all other fingerprinting codes we are aware of) are not robust to a constant fraction of errors.  The reason is that the the only way to introduce an error is to put a $0$ in a column containing only $1$'s or vice versa (recall that the set of codewords, $\codebook \in \bits^{\users \times \len}$, can be viewed as an $\users \times \len$ matrix).  We call such columns ``marked columns.''  Thus, if the adversary is allowed to introduce $\geq m$ errors where $m$ is the number of marked columns then he can simply ignore the codewords and output either the all-$0$ or all-$1$ codeword, which cannot be traced.  Thus, in order to tolerate a $\beta$ fraction of errors, it is necessary that $m \geq \beta d$ where $d$ is the length of the codeword, and this is not satisfied by any construction we know of (when $\beta > 0$ is a constant).  However, Tardos' construction can be shown to remain secure if the adversary is allowed to introduce $\beta m$ errors, rather than $\beta d$ errors, for some constant $\beta > 0$.  We demonstrate this formally in Section~\ref{sec:weakrobusttardos}.  In addition, we show how to take a fingerprinting code that tolerates $\beta m$ errors and modify it so that it can tolerate about $\beta d / 3$ errors.  This reduction is formalized in Section~\ref{sec:weaktostrong}.  Combining these two results will give us a robust fingerprinting code.

We remark that prior work~\cite{BonehNa08, BonehKiMo10} has shown how to construct fingerprinting codes satisfying a weaker robustness property.  Specifically, their codes allow the adversary to introduce a special ``?'' symbol in a large fraction of coordinates, but still require that any coordinate that is not a ``?'' satisfies the feasibility constraint.

Before proceeding with the construction and analysis, we restate some terminology and notation from Section~\ref{sec:1way}.  Recall that a fingerprinting code is a pair of algorithms $(\gen, \trace)$, where $\gen$ specifies a distribution over codebooks $\codebook \in \bits^{\users \times \len}$ consiting of $\users$ codewords $(\codewordi{1},\dots,\codewordi{\users})$, and $\trace(\codebook, \pirateword)$ either outputs the identity $i \in [\users]$ of an accused user or outputs $\bot$.  Recall that $\gen$ and $\trace$ share a common state.  For a coalition $S \subseteq [\users]$, we write $\codebookS{S} \in \bits^{|S| \times \len}$ to denote the subset of codewords belonging to users in $S$.

Every codebook $\codebook$, coalition $S$, and robustness parameter $\rob \in [0,1]$ defines a feasible set of combined codewords,
$$
\mathit{F}_{\rob}(\codebookS{S}) = \set{\pirateword \in \bits^{\len} \mid \Prob{j \getsr [\len]}{\exists i \in S, \piratewordj{j} = \codewordij{i}{j}} \geq 1-\rob}.
$$

We now recall the definition of an error-robust fingerprinting code from Section~\ref{sec:fpcs}.
\begin{definition}[Error-Robust Fingerprinting Codes (Restatement of Definition \ref{def:rfpc})] \label{def:restaterfpc}
For any $\users, \len \in \N$, $\sec, \rob \in [0,1]$, a pair of algorithms $(\gen, \trace)$ is an \emph{$(\users,\len)$-fingerprinting code with security $\sec$ robust to a $\rob$ fraction of errors} if $\gen$ outputs a codebook $\codebook \in \bits^{\users \times \len}$ and for every (possibly randomized) adversary $\fpadv$, and every coalition $S \subseteq [\users]$, if we set $\pirateword \getsr \fpadv(\codebookS{S})$, then
\begin{enumerate}
\item
$
\prob{(\trace(\codebook, \pirateword) = \bot) \land (\pirateword \in F_{\rob}(\codebookS{S}))} \leq \sec,
$
\item
$
\prob{\trace(\codebook, \pirateword) \in [\users] \setminus S} \leq \sec,
$
\end{enumerate}
where the probability is taken over the coins of $\gen, \trace$, and $\fpadv$.  The algorithms $\gen$ and $\trace$ may share a common state.
\end{definition}

The main result of this section is a construction of fingerprinting codes satisfying Definition~\ref{def:restaterfpc}
\begin{theorem}[Restated from Section~\ref{sec:fpcs}] \label{thm:rfpc0}
For every $\users \in \N$ and $\sec \in (0,1]$, there exists an $(\users, \len)$ fingerprinting code with security $\sec$ robust to a $1/75$ fraction of errors for
$$
\len = \len(\users, \sec) = \tilde{O}(\users^2 \log(1/\sec)).
$$
Equivalently, for every $\len \in \N$, and $\sec \in (0,1]$, there exists an $(\users, \len)$-fingerprinting code with security $\sec$ robust to a $1/75$ fraction of errors for
$$
\users = \users(\len, \sec) = \tilde{\Omega}(\sqrt{\len / \log(1/\sec)}).
$$
\end{theorem}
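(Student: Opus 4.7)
The plan is to follow the two-step outline sketched in the paper. In the first step (corresponding to the intended Section~\ref{sec:weakrobusttardos}), I would establish a ``weak robustness'' property of the standard Tardos construction: the code remains secure even when the adversary corrupts up to a constant fraction $\beta_0$ of the \emph{marked columns} (columns in which all coalition members share the same bit). The starting point is Tardos' original analysis, in which each column $j$ is assigned a bias $p_j$ from a carefully chosen distribution, each codebook entry $\codewordij{i}{j}$ is an independent Bernoulli$(p_j)$, and the tracing algorithm computes a score $\sigma_i = \sum_j w(\codewordij{i}{j},\piratewordj{j},p_j)$ for each user and accuses those exceeding a threshold. Tardos shows that the score gap between coalition and non-coalition users is $\Omega(\sqrt{\len/(\users^2 \log(1/\sec))})$ in expectation, with Azuma-type concentration. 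Since any error the adversary can introduce occurs in a marked column, each erroneous coordinate perturbs the relevant scores by only a bounded additive amount. Reproducing Tardos' concentration bounds while absorbing this perturbation shows that if the adversary errs on at most $\beta_0 m$ marked columns (where $m$ is the total number of marked columns), both the completeness and soundness of Tardos' tracing continue to hold, with the same length-vs.-users-vs.-security tradeoff up to constants.

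In the second step (corresponding to Section~\ref{sec:weaktostrong}) I would amplify weak robustness to full robustness via padding. Given the weak-robust code $(\gen_0,\trace_0)$ of length $\len_0$, define a new code $(\gen,\trace)$ of length $\len = (c+1)\len_0$ for a constant $c$ to be tuned. The generator $\gen$ runs $\gen_0$ to produce $\codebook_0 \in \bits^{\users \times \len_0}$ and then inserts $D = c\len_0$ ``fake marked columns'' (half all-zero and half all-one) into $\codebook_0$ at a uniformly random set of positions specified by a permutation $\pi$; this $\pi$ is stored as the shared state. The tracer $\trace$ uses $\pi$ to read off the $\len_0$ coordinates of $\pirateword$ corresponding to original columns and then runs $\trace_0$ on this restriction.

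For the analysis, suppose $\pirateword \in F_{\rob}(\codebookS{S})$ in the new code, so $\pirateword$ fails the marking assumption on at most $\rob \cdot (\len_0 + D)$ coordinates. Every such violation must lie in a marked column of the new code, which is either a fake column or an original marked column. Consequently, the restriction of $\pirateword$ to original coordinates has at most $\rob(\len_0+D)$ errors in marked columns of $\codebook_0$. A straightforward Chernoff bound on $\sum_j (p_j^{\users} + (1-p_j)^{\users})$, where $p_j$'s are drawn according to the Tardos distribution, shows that with probability $1-\sec/2$ over $\gen_0$ the number of original marked columns is at least $c_0 \len_0$ for an absolute constant $c_0 > 0$. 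Hence the fraction of errors among marked columns of the original code is at most $\rob(c+1)/c_0$, and choosing $\rob \le \beta_0 c_0/(c+1)$ puts the restricted codeword into $F_{\beta_0}(\codebookS{S})$ of the original code. Weak robustness then yields $\trace_0$ output in $S$ except with probability $\sec/2$, combining with the padding failure event for total failure probability $\sec$; false accusations are inherited directly from $\trace_0$. Tuning $c$ and $c_0$ (and the constant $\beta_0$ obtained in Step 1) yields the $1/75$ robustness bound claimed in the theorem, and since $\len = O(\len_0)$ and $\users$ is unchanged, the asymptotic length-vs.-users relationship of Theorem~\ref{thm:fpc} is preserved, giving $\users = \tilde{\Omega}(\sqrt{\len/\log(1/\sec)})$.

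The main obstacle will be the weak-robustness extension of Tardos' tracing argument. The high-level idea---that an error in a marked column behaves like a bounded adversarial noise on the score---is clean, but faithfully threading this through Tardos' delicate choice of weights $w$ and his martingale concentration arguments, while keeping the gap between coalition and non-coalition scores asymptotically intact, requires careful accounting. Once weak robustness is in hand with an explicit constant $\beta_0$, the padding reduction in the second step is essentially combinatorial and the constants are straightforward to tune to obtain the $1/75$ figure.
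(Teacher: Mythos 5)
Your two-step outline---show Tardos' code is ``weakly robust'' against errors in a small fraction of the \emph{marked} columns, then amplify to full robustness by padding with fake marked columns and a random permutation---is the same decomposition the paper uses. However, your Step~2 contains a genuine error that makes the argument fail.

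You claim that ``with probability $1-\sec/2$ over $\gen_0$ the number of original marked columns is at least $c_0\len_0$ for an absolute constant $c_0>0$,'' and then turn a bound of $\rob(\len_0+D)$ violations into a bound of $\rob(c+1)/c_0$ on the fraction of erred marked columns by a deterministic counting argument. This is false for Tardos' code: a column is marked only when its bias $p_j$ is so extreme (roughly $p_j < 1/n$ or $p_j > 1-1/n$) that all $n$ Bernoulli draws agree, and under the Tardos bias distribution this happens with probability $\Theta(1/\sqrt{n})$, not $\Omega(1)$. So the number of real marked columns is $m = \Theta(\len_0/\sqrt{n})$, and your ``constant'' $c_0$ degrades to $O(1/\sqrt{n})$, which blows up the claimed bound. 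Indeed, in the worst case the adversary's $\rob(\len_0+D)$ permitted violations are far more numerous than $m$, so a deterministic accounting cannot possibly conclude that only a small fraction of the real marked columns are erred. The missing idea is exactly what the random permutation buys: the adversary \emph{cannot tell} which marked columns are fake and which are real, so its errors are distributed over real and fake marked columns like a hypergeometric draw, with at most $\approx \rob(\len_0+D)\cdot\frac{m}{m+D}$ landing on real marked columns in expectation and a concentration bound (the paper uses a hypergeometric tail) making this hold with failure probability $\exp(-\Omega(\rob m^2/d))$. Without this indistinguishability-plus-concentration step, the reduction does not go through. Your Step~1 description is at the right level of generality and you correctly flag that ``absorbing this perturbation'' into Tardos' analysis is the delicate part; the paper carries it through by an explicit moment-generating-function calculation rather than the martingale route you allude to, and the perturbations are not uniformly bounded (a single error in a marked column with $p_j$ near $1/(300n)$ contributes $\Theta(n\sqrt{n})$ to the sum of scores), so treating them as ``bounded additive noise'' without the conditional-MGF analysis would also need more care.
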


We remark that we have made no attempt to optimize the fraction of errors to which our code is robust.  We leave it as an interesting open problem to construct a robust fingerprinting code for a nearly-optimal number of users that is robust to a fraction of errors arbitrarily close to $1/2$.

\subsection{From Weak Error Robustness to Strong Error Robustness} \label{sec:weaktostrong}

A key step in our construction is a reduction from constructing error-robust fingerprinting codes to constructing a weaker object, which we call a weakly-robust fingerprinting code. The difference between a weakly-robust fingerprinting code and an error-robust fingerprinting code of the previous section is that we now demand that only a $\beta$ fraction of the \emph{marked} positions can have errors, rather than a $\beta$ fraction of all positions.

 In order to formally define weakly-robust fingerprinting codes, we introduce some terminology.  If $\codebook \in \bits^{\users \times \len}$ is a codebook, then for $b \in \bits$, we say that position $j \in [\len]$ is \emph{$b$-marked in $\codebook$} if $\codewordij{i}{j} = b$ for every $i \in [\users]$.  That is, $j$ is $b$-marked if every user has the symbol $b$ in the $j$-th position of their codeword.  The set $F_{\rob}(\codebook)$ consists of all codewords $\pirateword$ such that for a $1-\rob$ fraction of positions $j$, either $j$ is not marked, or $j$ is $b$-marked and $\piratewordj{j} = b$.  Notice that this constraint is vacuous if fewer than a $\rob$ fraction of positions are marked.

For a weakly-robust fingerprinting code, we will define a more constrained feasible set.  Intuitively, a codeword $\pirateword$ is feasible if for a $1-\rob$ fraction of positions that are marked, $\piratewordj{j}$ is set appropriately.  Note that this condition is meaningful even when the fraction of marked positions is much smaller than $\rob$.  More formally, we define
$$
\mathit{WF}_{\rob}(\codebookS{S}) = \set{\pirateword \in \bits^{\len} \mid \Prob{j \getsr [\len]}{\piratewordj{j} = b \mid \textrm{$j$ is $b$-marked in $\codebookS{S}$ for some $b \in \{0,1\}$}} \geq 1-\rob}.
$$

\begin{definition}[Weakly-Robust Fingerprinting Codes] \label{def:wrfpc}
For any $\users, \len \in \N$ and $\sec, \rob \in [0,1]$, a pair of algorithms $(\gen, \trace)$ is an \emph{$(\users,\len)$-weakly-robust fingerprinting code with security $\sec$ weakly-robust to a $\rob$ fraction of errors} if $(\gen, \trace)$ satisfy the conditions of a robust fingerprinting code (for the same parameters) with $\mathit{WF}_{\rob}$ in place of $\mathit{F}_{\rob}$.
\end{definition}

The next theorem states that if we have an $(n,d)$-fingerprinting code that is weakly-robust to a $\beta$ fraction of errors and satisfies a mild technical condition, then we obtain an $(n, O(d))$-fingerprinting code that is robust to an $\Omega(\beta)$ fraction of errors with a similar level of security.

\begin{lemma} \label{lem:fpcreduction}
For any $\users, \len \in \N$, $\sec, \rob \in [0,1]$, and $\marked \in \N$, suppose there is a pair of algorithms $(\gen,\trace)$ which
\begin{enumerate}
\item are a $(\users,\len)$-fingerprinting code with security $\sec$ weakly-robust to a $\rob$ fraction of errors, and
\item with probability at least $1-\sec$ over $\codebook \getsr \gen$, produce $\codebook$ that has at least $m$ $0$-marked columns and $m$ $1$-marked columns.
\end{enumerate}
Then there is a pair of algorithms $(\gen', \trace')$ that are a $(\users, \len')$-fingerprinting code with security $\sec'$ robust to a $\rob/3$ fraction of errors, where
$$
\len' = 5\len \quad\textrm{ and }\quad \sec' = \sec + 2 \exp\left(-\Omega(\rob m^2/d)\right).
$$
\end{lemma}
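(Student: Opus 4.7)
The natural construction is to boost the number of marked columns in a weakly-robust code so that errors measured as a fraction of \emph{all} positions translate to a comparable fraction of \emph{marked} positions. Specifically, $\gen'$ first runs $\gen$ to obtain $\codebook \in \bits^{\users \times \len}$, and then inserts $2\len$ additional $0$-marked columns and $2\len$ additional $1$-marked columns into uniformly random positions to produce $\codebook' \in \bits^{\users \times 5\len}$. The set $A \subseteq [5\len]$ of ``added'' coordinates (and the associated bits) is kept as shared state between $\gen'$ and $\trace'$. Given a pirate word $\pirateword \in \bits^{5\len}$, the tracer $\trace'$ simply deletes the coordinates in $A$ to obtain $\interword \in \bits^{\len}$ and returns $\trace(\codebook, \interword)$.

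The no-false-accusation condition is immediate: $\trace'(\codebook', \pirateword) \in [\users] \setminus S$ occurs only when $\trace(\codebook, \interword) \in [\users] \setminus S$, which has probability at most $\sec$ by the corresponding property of the underlying code (for \emph{any} derived word $\interword$, so the randomness introduced by the insertions is harmless). All the work is in the tracing condition. We must argue that if $\pirateword \in F_{\rob/3}(\codebook'_S)$ then with high probability $\interword \in \mathit{WF}_{\rob}(\codebook_S)$, after which we may invoke condition 1 of the weakly-robust code on $\interword$.

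The key point is that errors can only occur at positions that are marked in $\codebook'_S$, and the adversary $\fpadv$ cannot distinguish added marked positions from positions that happen to be $S$-marked in the original $\codebook$. Condition on the adversary's view $\codebook'_S$, which determines the sets $M^0, M^1$ of $0$- and $1$-marked positions of $\codebook'_S$ (of sizes $m_0 + 2\len$ and $m_1 + 2\len$, where $m_b$ is the number of $b$-marked positions of $\codebook_S$); by the random-insertion construction, the set of $2\len$ added $0$-marked positions is a uniformly random subset of $M^0$ (and likewise for $M^1$). The adversary's error set $E \subseteq M^0 \cup M^1$ has $|E| \le \rob\len' / 3 = 5\rob\len/3$. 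Writing $k_b = |E \cap M^b|$, the number of original $b$-marked positions in $E$ is hypergeometric with mean $k_b m_b /(m_b + 2\len) \le (5\rob\len/3)\cdot m_b/(2\len) = (5/6)\rob m_b$, i.e., a factor $5/6$ below the $\rob m_b$ threshold for weak feasibility.

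Condition on the event $E_{\mathrm{good}}$ that $\codebook$ has at least $\marked$ $0$-marked and $\marked$ $1$-marked columns (this fails with probability at most $\sec$ by hypothesis~2). On $E_{\mathrm{good}}$, $m_b \ge \marked$, and Hoeffding's inequality for sampling without replacement gives
\[
\Prob{}{|E \cap M_{\mathrm{orig}} \cap M^b| > \rob m_b} \le \exp\!\left(-\frac{2(\rob m_b/6)^2}{k_b}\right) \le \exp\!\left(-\Omega(\rob \marked^2/\len)\right),
\]
using $k_b \le 5\rob\len/3$. A union bound over $b \in \{0,1\}$ shows $\interword \notin \mathit{WF}_{\rob}(\codebook_S)$ with probability at most $2\exp(-\Omega(\rob \marked^2/\len))$, and combining with condition 1 of $(\gen, \trace)$ applied to $\interword$ and with the failure of $E_{\mathrm{good}}$ yields the claimed security bound $\sec' = \sec + 2\exp(-\Omega(\rob \marked^2/\len))$ (after absorbing a constant into the $\Omega(\cdot)$). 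The main conceptual obstacle in turning this sketch into a proof is bookkeeping the conditioning: one must verify that, conditional on $\codebook'_S$, the identity of the added columns really is uniformly random among marked positions (this follows from the independence of the insertion step from $\gen$) so that the hypergeometric bound applies uniformly over adversary strategies.
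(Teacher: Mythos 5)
Your proposal is correct and follows essentially the same route as the paper: pad the codebook with $2\len$ fake $0$-marked and $2\len$ fake $1$-marked columns at uniformly random positions, note that the pirate's view $\codebook'_S$ cannot distinguish fake marked columns from real ones, and apply a hypergeometric concentration bound to conclude that with high probability at most a $\rob$ fraction of the real marked columns receive errors, after which the weakly-robust tracer applies. The paper invokes a standard hypergeometric tail bound where you invoke Serfling/Hoeffding for sampling without replacement, but both yield the same $\exp(-\Omega(\rob m^2/d))$ exponent, so the arguments are equivalent.
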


\begin{proof}
The reduction is given in Figure~\ref{alg:weaktostrong}. Recall that $\gen'$ and $\trace'$ may share state, so $\pi$ and the shared state of $\gen$ and $\trace$ is known to $\trace'$.
\begin{figure}[ht]
\begin{framed}
\begin{algorithmic}
\STATE{$\gen'$:}
\INDSTATE[1]{Choose $\codebook \getsr \gen$, $\codebook \in \bits^{\users \times \len}$}
\INDSTATE[1]{Append $2\len$ $0$-marked columns and $2\len$ $1$-marked columns to $\codebook$}
\INDSTATE[1]{Apply a random permutation $\pi$ to the columns of the augmented codebook}
\INDSTATE[1]{Let the new codebook be $\codebook' \in \bits^{\users \times \len'}$ for $\len' = 5\len$}
\INDSTATE[2]{(We refer to the columns from $\codebook$ as \emph{real} and to the additional columns as \emph{fake})}
\INDSTATE[1]{Output $\codebook'$}
\STATE{}
\STATE{$\trace'(\codebook', \pirateword)$:}
\INDSTATE[1]{Obtain $\codebook$ by applying $\pi^{-1}$ to the columns of $\codebook'$ and removing the fake columns}
\INDSTATE[1]{Obtain $\codeword$ by applying $\pi^{-1}$ to $\pirateword$ and removing the symbols corresponding to fake columns}
\INDSTATE[1]{Output $i \getsr \trace(\codebook, \codeword)$}
\end{algorithmic}
\end{framed}
\vspace{-6mm}
\caption{Reducing robustness to weak robustness.}
\label{alg:weaktostrong}
\end{figure}

Fix a coalition $S \subseteq [\users]$.  Let $\fpadv'$ be an adversary.  Sample $\codebook' \getsr \gen'$ and let $\pirateword = \fpadv'(\codebook')$. We will show that the reduction is successful by proving that if $\pirateword \in \mathit{F}_{\rob/3}(\codebook')$, then the modified string $\codeword \in \mathit{WF}_{\rob}(\codebook)$ with probability $1- \exp(-\Omega(\rob m^2/d))$.  The reason is that an adversary who is given (a subset of the rows of) $\codebook'$ cannot distinguish real columns that are marked from fake columns.  Therefore, the fraction of errors in the real marked columns should be close to the fraction of errors that are either real and marked or fake.  Since the total fraction of errors in the entire codebook is at most $\rob/3$, we know that the fraction of errors in real marked columns is not much larger than $\rob/3$.  Thus the fraction of errors in the real marked columns will be at most $\rob$ with high probability.  We formalize this argument in the following claim.

\begin{claim} \label{clm:weaktostrong0}
$$
\Prob{\pi}{(\pirateword \in \mathit{F}_{\rob/3}(\codebook')) \land (\codeword \in \mathit{WF}_{\rob}(\codebook))} \leq 2 \exp(-\Omega(\beta m^2/d))
$$
\end{claim}
\begin{proof} [Proof of Claim~\ref{clm:weaktostrong0}]
Our analysis will handle $0$-marked and $1$-marked columns separately. Assume that $\pirateword \in \mathit{F}_{\rob/3}(\codebook')$ and that the adversary has introduced $k \le \rob d' / 3$ errors to $0$-marked columns. Let $m_0 \geq m$ be the number of $0$-marked columns.  Let $R_0$ be a random variable denoting the number of columns that are both real and $0$-marked in which the adversary introduces an error. Since real $0$-marked columns are indistinguishable from fake $0$-marked columns, $R_0$ has a hypergeometric distribution on $k$ draws from a population of size $N = m_0 + 2d$ with $m_0$ successes. In other words, we can think of an urn with $N$ balls, $m_0$ of which are labeled ``real'' and $2d$ of which are labeled ``fake.'' We draw $k$ balls without replacement, and $R_0$ is the number that are labeled ``real.''  This distribution has $\ex{R_0} = km_0/N = km_0/(m_0+2d)$.  Moreover, as shown in~\cite[Section 7.1]{DubhashiSe01}), it satisfies the concentration inequality
\[\Pr[|R_0 - \ex{R_0}| > t] \le \exp\left(\frac{-2(N-1)t^2}{(N-k)(k-1)}\right) \le \exp(-\Omega(t^2/k))\]
since $k \le 5N/6$. Thus
\begin{align*}
\Pr[R_0 > \beta m_0]  &\le \Pr[|R_0 - \ex{R_0}| > \beta m_0 - \ex{R_0}] \\
&\le \exp\bigg(-\Omega\bigg(\frac{(\beta m_0 - km_0/N)^2}{k^2}\bigg)\bigg) \\
&\le \exp\bigg(-\Omega\bigg(\frac{(\rob m_0)^2(1 - d'/6d)^2}{(\rob d'/3)^2}\bigg)\bigg) \\
&\le \exp\bigg(-\Omega\bigg(\frac{\rob m_0^2}{d}\bigg)\bigg)
\end{align*}
for any choice of $k$. An identical argument bounds the probability that the number of errors in real $1$-marked columns is more than $\beta m_1$. Therefore, the probability that more than a $\beta$ fraction of marked columns have errors is at most $2 \exp(-\Omega(\beta m^2/d))$.
\end{proof}

Now define an adversary $\fpadv$ that takes $\codebookS{S}$ as input, simulates $\gen'$ by appending marked columns to $\codebook_S$ and applying a random permutation $\pi$, and then applies $\fpadv'$ to the resulting codebook $\codebook'_{S}$.  Then it takes $\fpadv'(\codebook'_{S})$, applies $\pi^{-1}$, removes the fake columns, and outputs the result.  Notice that $\trace'$ applies $\trace$ to a codebook and codeword generated by exactly the same procedure.  If we assume that $\fpadv'(\codebook'_{S})$ is feasible with parameter $\rob/3$, then by the analysis above, with probability at least $1 - \sec - \exp(-\Omega(\rob m^2/d))$, $\fpadv(\codebookS{S})$ is weakly feasible with parameter $\rob$.  Thus,
\begin{align*}
&\Prob{\codebook' \getsr \gen'}{(\trace'(\codebook', \fpadv'(\codebookS{S})) = \bot) \land (\fpadv'(\codebookS{S}) \in \mathit{F}_{\rob/3}(\codebookS{S}))} \\
\leq{} &\Prob{\codebook \getsr \gen}{(\trace(\codebook, \fpadv(\codebookS{S})) = \bot \land (\fpadv(\codebookS{S}) \in \mathit{WF}_{\rob}(\codebookS{S}))} + 2 \exp(-\Omega(\beta m^2/d)) \\
\leq{} &\sec + 2 \exp(-\Omega(\beta m^2/d)),
\end{align*} 
where the first inequality is by Claim~\ref{clm:weaktostrong0} and the second inequality is by $\sec$-security of $\trace$.

Since $\trace$ does not accuse a user outside of $S$ (except with probability at most $\sec$) regardless of whether or not that adversary's codeword is feasible, it is immediate that $\trace'$ also does not accuse a user outside of $S$ (except with probability at most $\sec$).
\end{proof}

\subsection{Weak Robustness of Tardos' Fingerprinting Code} \label{sec:weakrobusttardos}
In this section we show that Tardos' fingerprinting code is weakly robust to a $\rob$ fraction of errors for $\rob \geq 1/25$.  Specifically we prove the following:
\begin{lemma} \label{lem:weakrobusttardos}
For every $\users \in \N$ and $\sec \in (0,1]$, there exists an $(\users, \len)$ fingerprinting code with security $\sec$ weakly robust to a $1/25$ fraction of errors for
$$
\len = \len(\users, \sec) = \tilde{O}(\users^2 \log(1/\sec)).
$$
Equivalently, for every $\len \in \N$, and $\sec \in (0,1]$, there exists an $(\users, \len)$-fingerprinting code with security $\sec$ weakly robust to a $1/25$ fraction of errors for
$$
\users = \users(\len, \sec) = \tilde{\Omega}(\sqrt{\len / \log(1/\sec)}).
$$
\end{lemma}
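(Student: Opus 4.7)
The plan is to argue that Tardos' original fingerprinting code $(\gen, \trace)$, with essentially no modification, already satisfies the weak-robustness condition. Recall that Tardos' construction samples biases $p_j \in (0,1)$ from a specific truncated distribution for each of $d = \tilde{O}(n^2 \log(1/\sec))$ columns, then sets $C_{ij} = 1$ independently with probability $p_j$. The tracing algorithm computes for each user $i$ a score $S_i = \sum_j \sigma(C_{ij}, \pirateword_j, p_j)$ using the Tardos scoring function (for example, $\sigma(1,1,p) = \sqrt{(1-p)/p}$ and $\sigma(0,1,p) = -\sqrt{p/(1-p)}$), and accuses any user whose score exceeds a threshold $Z$.

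I would proceed in two steps mirroring the standard security proof. First, the bound on the false-accusation probability goes through unchanged: for any $i \notin S$, the score $S_i$ is a sum of mean-zero, bounded-variance random variables that are conditionally independent given $p$ and $\pirateword$, and this analysis does not rely on the marking assumption at all. So the ``$\trace(\codebook,\pirateword) \in [n] \setminus S$'' condition of Definition~\ref{def:wrfpc} inherits the bound from Tardos' original code. Second, for the completeness condition, I would decompose $\sum_{i\in S} S_i = \Sigma_{\text{unm}} + \Sigma_{\text{mark}}$ into contributions from unmarked and marked columns. The sum $\Sigma_{\text{unm}}$ is handled by Tardos' original analysis without invoking the marking assumption (on unmarked columns the coalition already disagrees, so the pirate is unconstrained there anyway), yielding $\mathbb{E}[\Sigma_{\text{unm}}] = \Omega(d/n)$. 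For $\Sigma_{\text{mark}}$, flipping the pirate's output on a single $b$-marked column $j$ changes $\Sigma_{\text{mark}}$ by exactly $-2|S|\sigma_j$ where $\sigma_j \in \{U(p_j), V(p_j)\}$. Using the specific form of Tardos' bias distribution, I would show that the expected per-column contribution of a marked column is comparable to the worst-case loss $|\sigma_j|$ up to constant factors, so flipping a $\rob \leq 1/25$ fraction of the marked columns removes at most a constant fraction of $\mathbb{E}[\Sigma_{\text{mark}}]$. Combined with $\Sigma_{\text{unm}}$, this leaves $\sum_{i\in S} S_i = \Omega(d/n)$ with high probability, and a suitably adjusted threshold $Z$ then guarantees some guilty user is accused with probability $\geq 1 - \sec$.

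Finally, I would verify the side condition needed to invoke Lemma~\ref{lem:fpcreduction}: with probability $\geq 1-\sec$ the codebook has at least $m = \tilde\Omega(n \log(1/\sec))$ columns of each marked type. Each column is $0$-marked with probability $\mathbb{E}[(1-p_j)^n]$ and $1$-marked with probability $\mathbb{E}[p_j^n]$; both quantities are $\Theta(1/n)$ under Tardos' truncated bias distribution. Thus the expected count of marked columns of each type is $\Theta(d/n) = \tilde\Omega(n \log(1/\sec))$, and a Chernoff bound over columns (which are independent given the biases) yields the claimed concentration.

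The main obstacle will be making the quantitative accounting for $\Sigma_{\text{mark}}$ tight enough: a worst-case pirate will concentrate its errors on marked columns with the most extreme $p_j$, where $|\sigma_j|$ is largest. Tardos' choice of truncation parameter is designed precisely to keep these extreme contributions in a useful ratio to the total, and the argument hinges on making that balance explicit enough to absorb a $1/25$-fraction of marked-column errors. The constant $1/25$ is conservative and chosen to keep the constants clean; what matters for our application is only that it is some absolute constant.
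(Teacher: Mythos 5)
Your overall strategy --- show that Tardos' code, essentially unmodified, tolerates errors on a $1/25$ fraction of the marked columns --- is the same as the paper's, and your treatment of the false-accusation condition is correct: Tardos' soundness analysis never uses the marking assumption. But your completeness argument has the accounting backwards. You attribute the positive drift to the unmarked columns, claiming $\mathbb{E}[\Sigma_{\mathrm{unm}}] = \Omega(d/n)$; in fact the pirate that outputs $0$ on every unmarked and $0$-marked column and $1$ on every $1$-marked column is feasible and makes $\Sigma_{\mathrm{unm}}$ identically zero, yet Tardos' completeness bound must (and does) still trace it. All of the signal $\sum_i S_i \approx d/\pi \geq nZ$ comes from the forced $1$-marked columns, whose contribution is $\sum_{j\ \text{$1$-marked}} n q_j$. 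This is precisely why error-robustness is delicate here: the errors land exactly where the signal lives, so you cannot treat the marked-column loss as a perturbation of an unmarked-column drift --- you must show directly that corrupting a $1/25$ fraction of the marked columns removes only a bounded part of $\sum_{j\ \text{marked}} n q_j$ (or $n/q_j$), which is what the paper's Claim on $S(\overline{\codeword})$ does.

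Second, your plan for bounding that loss is underspecified at the one point where the argument could fail. You correctly flag that a worst-case pirate would concentrate its errors on marked columns with extreme $p_j$, but the remedy you propose --- controlling the extreme per-column loss via the truncation $p_j \in [t, 1-t]$ --- gives a worst-case loss of $n\sqrt{1/t} = \Theta(\users^{3/2})$ per error, and with $\rob m = \Theta(\users^{3/2}\log(\users/\sec))$ errors this exceeds the available slack $nZ/2 = \Theta(\users^2\log(\users/\sec))$ by a factor of $\Theta(\users)$. The paper's resolution is a symmetry argument, not a truncation argument: the pirate sees only the codebook, never the biases $p_j$, and conditioned on the codebook the losses $\set{n/q_j : j\ \text{$0$-marked}}$ and $\set{nq_j : j\ \text{$1$-marked}}$ are i.i.d.\ (Claim~\ref{clm:iid}). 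Hence the pirate gains nothing from choosing \emph{which} marked columns to corrupt, only \emph{how many}, and the worst case reduces to $\rob m$ errors in random marked columns, whose aggregate loss is controlled by a moment-generating-function computation ($A_n - B_n = O(\rob\alpha)$) plus Markov. Without either this exchangeability observation or a genuinely careful analysis of the conditional order statistics of $q_j$ given that a column is marked (which is possible but much more work), your sketch does not close.
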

Tardos' fingerprinting code is described in Figure~\ref{alg:tardoscode}. Note that the shared state of $\gen$ and $\trace$ will include $p_1, \dots, p_d$.

\begin{figure}[ht]
\begin{framed}
\begin{algorithmic}
\STATE{$\gen$:}
\INDSTATE[1]{Let $\len = 100\users^2 \log(\users/\sec)$ be the length of the code.}
\INDSTATE[1]{Let $t = 1/300\users$ be a parameter and let $t'$ be such that $\sin^2 t' = t$.}
\INDSTATE[1]{For $j = 1,\dots,\len$:}
\INDSTATE[2]{Choose $r_j \getsr [t', \pi/2 - t']$ and let $p_j = \sin^2 r_j$.  Note that $p_j \in [t, 1-t]$.}
\INDSTATE[2]{For each $i = 1,\dots,\users$, set $\codebook_{ij} = 1$ with probability $p_j$, independently.}
\INDSTATE[1]{Output $\codebook$.}
\STATE{}
\STATE{$\trace(\codebook, \pirateword)$:}
\INDSTATE[1]{Let $Z = 20\users \log(\users/\sec)$ be a parameter.}
\INDSTATE[1]{For each $j = 1,\dots,\len$, let $q_j = \sqrt{(1-p_j)/p_j}$.}
\INDSTATE[1]{For each $j = 1,\dots,\len$, and each $i = 1,\dots,\users$, let 
	$$U_{ij} = \begin{cases} q_j & \textrm{if $\codebook_{ij} = 1$} \\ -1/q_j & \textrm{if $\codebook_{ij} = 0$} \end{cases}$$}
\INDSTATE[1]{For each $i = 1,\dots,\users$:}
\INDSTATE[2]{Let
	$$S_i(\pirateword) = \sum_{j=1}^{\len} \piratewordj{j} U_{ij}$$}
\INDSTATE[2]{If $S_i(\pirateword) \geq Z/2$, output $i$}
\INDSTATE[1]{If $S_i(\pirateword) < Z/2$ for every $i = 1,\dots,\users$, output $\bot$.}
\end{algorithmic}
\end{framed}
\vspace{-6mm}
\caption{The Tardos Fingerprinting Code~\cite{Tardos08}}
\label{alg:tardoscode}
\end{figure}

Tardos' proof that no user is falsely accused (except with probability $\sec$) holds for every adversary, regardless of whether or not the adversary's output is feasible, therefore it holds without modification even when we allow the adversary to introduce errors.  So we will state the following lemma from~\cite[Section 3]{Tardos08} without proof.
\begin{lemma}[Restated from \cite{Tardos08}] \label{lem:robsoundness}
Let $(\gen, \trace)$ be the fingerprinting code defined in Algorithm~\ref{alg:tardoscode}.  Then for every adversary $\fpadv$, and every $S \subseteq [\users]$,
$$
\prob{\trace(\codebook, \fpadv(\codebookS{S})) \in [n] \setminus S} \leq \sec,
$$
where the probability is taken over the choice of $\codebook \getsr \gen$ and the coins of $\fpadv$.
\end{lemma}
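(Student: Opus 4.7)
The plan is to follow Tardos' original argument \cite{Tardos08}, noting that the lemma is just the soundness half of his analysis, which is robust to any adversary output (feasible or not) because it never uses the marking assumption—only the independence of an outside user's codeword from the pirate word.

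First, I would fix an arbitrary user $i \notin S$ and condition on the biases $p = (p_1, \dots, p_\len)$ and on the codewords $\codebookS{S}$. Under this conditioning, the random vector $\codewordi{i}$ has independent $\mathrm{Bernoulli}(p_j)$ entries and is independent of the pirate word $\pirateword$, since $\fpadv$ only sees $\codebookS{S}$. A direct calculation from the definitions shows $\ex{U_{ij} \mid p_j} = p_j q_j - (1-p_j)/q_j = 0$ and $\ex{U_{ij}^2 \mid p_j} = p_j q_j^2 + (1-p_j)/q_j^2 = 1$. Moreover, the cutoff $p_j \in [t, 1-t]$ yields the uniform bound $|U_{ij}| \leq \max(q_j, 1/q_j) \leq 1/\sqrt{t}$.

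Next, I would bound the moment generating function of $S_i(\pirateword) = \sum_j \piratewordj{j} U_{ij}$. Conditioning on $p$ and $\pirateword$, the terms are independent across $j$, so the MGF factorizes. For each $j$, using mean zero, variance one, and boundedness, a standard Taylor expansion gives
\[
\ex{\exp(\lambda \piratewordj{j} U_{ij}) \mid p_j, \piratewordj{j}} \;\leq\; \exp\!\left(\tfrac{1}{2}\lambda^2 + O(\lambda^3/\sqrt{t})\right)
\]
as long as $\lambda/\sqrt{t}$ is at most a sufficiently small absolute constant (so that the cubic and higher-order terms in $\ex{(\lambda U_{ij})^k}$ can be absorbed into the quadratic one). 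Multiplying over $j$ yields $\ex{\exp(\lambda S_i(\pirateword)) \mid p, \pirateword} \leq \exp(\lambda^2 \len/2 + O(\lambda^3 \len/\sqrt{t}))$. Applying Markov's inequality to $\exp(\lambda S_i)$ and optimizing $\lambda = \Theta(Z/\len)$ gives $\prob{S_i(\pirateword) \geq Z/2} \leq \exp(-\Omega(Z^2/\len))$.

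With the parameter choices $\len = 100 \users^2 \log(\users/\sec)$, $t = 1/300\users$, and $Z = 20 \users \log(\users/\sec)$, we have $\lambda = \Theta(Z/\len) = \Theta(1/\users)$ and therefore $\lambda/\sqrt{t} = \Theta(\sqrt{1/\users}) = o(1)$, so the MGF bound applies; the exponent becomes $-\Omega(Z^2/\len) = -\Omega(\log(\users/\sec))$, and choosing the hidden constants appropriately makes this at most $\sec/\users$. Taking a union bound over the at most $\users$ users outside $S$ gives the claimed failure probability $\sec$. The main obstacle is the moment-generating-function estimate above: one must verify quantitatively that the cubic error term $O(\lambda^3/\sqrt{t})$ is dominated by $\tfrac{1}{2}\lambda^2$ for the specific constants in Figure~\ref{alg:tardoscode}, which is precisely the computation carried out in Tardos' original analysis and is the reason for the particular choice of cutoff $t = 1/300\users$ there.
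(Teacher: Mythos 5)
Your sketch is correct and follows exactly Tardos' original soundness argument, which the paper itself states without proof---citing Tardos directly and observing only, as you also do at the outset, that this half of the analysis never invokes the feasibility/marking constraint and therefore carries over unchanged when the adversary may introduce errors. Your calculations (conditional independence of $\codewordi{i}$ from $\pirateword$ for $i \notin S$, zero mean and unit conditional variance of $U_{ij}$, the $1/\sqrt{t}$ boundedness from the density cutoff, and the exponential Markov argument at scale $\lambda = \Theta(Z/\len)$ followed by a union bound over the at most $\users$ outside users) are precisely the standard ingredients of Tardos' Section~3.
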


Most of the remainder of this section is devoted to proving that any adversary who introduces errors into at most a $1/25$ fraction of the marked columns can be traced successfully.
\begin{lemma} \label{lem:robcompleteness}
Let $(\gen, \trace)$ be the fingerprinting code defined in Algorithm~\ref{alg:tardoscode}.  Then for every adversary $\fpadv$, and every $S \subseteq [\users]$,
$$
\prob{(\trace(\codebook, \fpadv(\codebookS{S})) = \bot) \land (\fpadv(\codebookS{S}) \in \mathit{WF}_{1/25}(\codebookS{S}))} \leq \sec,
$$
where the probability is taken over the choice of $\codebook \getsr \gen$ and the coins of $\fpadv$.
\end{lemma}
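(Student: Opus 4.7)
The plan is to adapt Tardos's original completeness analysis to the weakly-feasible setting. Tardos's argument has two parts: (i) a column-by-column lower bound on the expected total score $\ex{T}$, where $T := \sum_{i \in S} S_i(\pirateword)$, of the form $\ex{T} \geq c_0 d$ for some constant $c_0 > 0$; and (ii) a concentration inequality showing that $T$ does not drop far below its mean with high probability. With $d = 100 n^2 \log(n/\sec)$ and $Z = 20 n \log(n/\sec)$, these together imply that some $i \in S$ has $S_i(\pirateword) \geq Z/2$ except with probability $\sec$. Step (ii) depends only on the bound $|U_{ij}| \leq \sqrt{(1-t)/t} = O(\sqrt{n})$ and on the second-moment structure of the $U_{ij}$, so it transfers verbatim to our setting; the substantive modification is to (i).

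The new ingredient is a bound on the ``damage'' caused by the adversary flipping marked columns. Let $c^*_j$ be the forced bit at a marked column $j$ and set $F_j = \mathbb{I}[\pirateword_j \neq c^*_j]$. Under full marking, a $1$-marked column contributes $|S| q_j$ to $T$ and a $0$-marked column contributes $0$; flipping these changes the contributions to $0$ and $-|S|/q_j$ respectively, so the damage per flipped column is $|S| q_j$ or $|S|/q_j$. The key observation is a symmetry: conditional on $\{j \text{ is } b\text{-marked}\}$, the $j$-th column of $\codebook$ is constant (all $b$'s), so the adversary's view $\codebookS{S}$ is a function only of the remaining columns, which are independent of $p_j$. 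Hence $F_j$ is independent of $p_j$ given $\{j \text{ is } b\text{-marked}\}$, yielding
\[
\ex{F_j \cdot q_j \mid j \text{ 1-marked}} = \ex{F_j \mid j \text{ 1-marked}} \cdot \ex{q_j \mid j \text{ 1-marked}}.
\]
Summing over all columns and using the total flipping budget $\sum_j F_j \mathbb{I}[j \text{ is marked}] \leq M/25$, the expected total damage is at most $1/25$ times the feasible-case contribution from marked columns; consequently $\ex{T} \geq (24/25) \cdot \ex{T_{\mathrm{feas}}} \geq (24/25) c_0 d$, which is large enough that Tardos's concentration step still gives $T \geq |S| Z / 2$ except with probability $\sec$, so some $i \in S$ is accused.

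The main obstacle is to rigorously combine the per-column factorization with the global flipping budget. Although the adversary's $F_j$ may depend on all of $\codebookS{S}$ and the flips across columns are coupled by the constraint $\sum_{j \text{ marked}} F_j \leq M/25$, the per-column independence of $p_j$ from the adversary's view (conditional on the marking of that column) suffices: linearity of expectation turns the global budget into the bound $\ex{\sum_j F_j \mathbb{I}[j \text{ is marked}]} \leq \ex{M}/25$, which in turn yields the $(1/25)$-factor loss in the marked-column contribution without any need to track correlations across columns.
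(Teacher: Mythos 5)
Your factorization observation is sound and is essentially the paper's Claim~\ref{clm:iid} in a different guise: conditioned on column $j$ being $b$-marked, the column is constant and so the adversary's flip decision $F_j$ is a function only of the other columns (and the adversary's coins), hence independent of $p_j$ and therefore of $q_j$. That part is fine. The problem is everything after it.

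The assertion that Tardos's concentration step ``transfers verbatim'' because it only uses a uniform bound on $|U_{ij}|$ and second moments is incorrect, and this is where the actual content of the lemma lives. Tardos does not prove a second-moment (Chebyshev) bound; the failure probability must be $\exp(-\Omega(\sqrt{n}))$ to get security $\sec$, so the argument is an exponential-moment/MGF bound of the form $\ex{e^{-\alpha T}} \le e^{c\alpha d}$. That MGF factorizes over columns \emph{only because} the feasible adversary's per-column behavior can be reduced to a function of column $j$ alone. The weakly-feasible adversary's flips are globally coupled by the budget $\sum_{j\text{ marked}} F_j \le \rob m$; the MGF no longer factorizes, and your per-column independence (which holds for a \emph{fixed} flip pattern) does nothing to remove this coupling. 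Linearity of expectation converts the a.s.\ budget constraint into a constraint on $\ex{\sum F_j \mathbb{I}[j\text{ marked}]}$, which bounds the mean damage $\ex{D}$, but it provides no tail bound on $D$. You could certainly have an adversary whose expected damage is small but which occasionally produces very large damage. The paper deals with this coupling explicitly: it replaces the budget-constrained flipper by a Bernoulli flipper that errs on each marked column independently with probability $\rob$ (so the MGF factors), carries out the $\ex{e^{-\alpha S(\diffword)}}$ computation for that adversary, and then relates back to the exact-budget adversary using the fact that a $\mathrm{Bin}(m,\rob)$ lands exactly on $\rob m$ with probability $\Omega(1/\sqrt{m})$. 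Your proposal has no analogue of this step and therefore never produces the exponential tail bound.

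There is also a secondary gap in your expectation step: from ``$\ex{D}$ is at most $(1/25)$ (or, after accounting for $0$- and $1$-marked columns, $(2/25)$) times the expected feasible score coming from marked columns'', the conclusion $\ex{T}\ge (24/25)\ex{T_{\mathrm{feas}}}$ requires that the marked-column contribution be at most $\ex{T_{\mathrm{feas}}}$. That is not established and is not obvious, since $T_{\mathrm{feas}}$ aggregates positive and negative contributions from partially-agreeing columns. The paper avoids this entirely by keeping the feasible part $S(\tilde{c})$ and the error part $S(\bar{c})$ as separate random variables, invoking Tardos's theorem unchanged for $\Pr[S(\tilde{c}) < nZ]$ and carrying out a new MGF bound for $\Pr[S(\bar{c}) < -nZ/2]$; it never needs to compare the two means. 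To make your route work you would have to (a) make the expectation comparison precise, and, more importantly, (b) supply an exponential concentration bound for the damage sum under a global budget constraint; as written, (b) is missing and is exactly the heart of the lemma.
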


Before giving the proof, we briefly give a high-level roadmap.  Recall that in the construction there is a ``score'' function $S_i(\pirateword)$ that is computed for each user, and $\trace$ will output some user whose score is larger than the threshold $Z/2$, if such a user exists.  Tardos shows that the sum of the scores over all users is at least $nZ/2$, which demonstrates that there exists a user whose score is above the threshold.  His argument works by balancing two contributions to the score: 1) the contribution from $1$-marked columns $j$, which will always be positive due to the fact that $\pirateword_j = 1$, and 2) the potentially negative contribution from columns that are not $1$-marked.  Conceptually, he shows that the contribution from the $1$-marked columns is larger in expectation than the negative contribution from the other columns, so the expected score is significantly above the threshold.  He then applies a Chernoff-type bound to show that the score will be above the threshold with high probability.  When the adversary is allowed to introduce errors so that there may be some $1$-marked columns $j$ such that $\pirateword_j = 0$, these errors will contribute negatively to the score.  The new ingredient in our argument is essentially to bound the negative contribution from these errors.  We are able to get a sufficiently good bound to tolerate errors in $1/25$ of the coordinates.  We expect that a tighter analysis and more careful tuning of the parameters can improve the fraction of errors that can be tolerated.

\begin{proof}[Proof of Lemma~\ref{lem:robcompleteness}]
We will write $S = [\users]$.  Doing so is without loss of generality as users outside of $S$ are irrelevant.  We will use $\beta = 1/25$ to denote the allowable fraction of errors.  Fix an adversary $\adv$.  Sample $\codebook \getsr \gen$ and let $\pirateword = \adv(\codebook)$.  Assume $\pirateword \in \mathit{WF}_{\beta}(C)$.  In order to prove that some user is traced, we will  bound the quantity
$$
S(\pirateword) = \sum_{i=1}^{\users} S_i(\pirateword) = \sum_{j = 1}^{\len} \piratewordj{j} \left( x_j q_j - \frac{n - x_j}{q_j}\right)
$$
where $x_j = \sum_{i=1}^{\users} \codebook_{ij}$ is defined to be the number of codewords $\codewordi{i}$ such that $\codewordij{i}{j} = 1$.  Our goal is to show that this quantity is at least $\users Z/2$ with high probability.  If we can do so, then there must exist a user $i \in [\users]$ such that $S_i(\pirateword) \geq Z/2$, in which case $\trace(\codebook, \pirateword) \neq \bot$.

We may decompose an output $\pirateword$ of $\adv(\codebook)$ into a the sum of a codeword $\interword \in \mathit{WF}_0(C)$ with no errors, and a string $\diffword$ that captures errors introduced into at most a $\rob$ fraction of the marked coordinates. Each codeword $\codeword$ has a unique such decomposition if we assume the following constraints on $\diffword$.
\begin{enumerate}
\item If $j$ is unmarked, then $\diffword_j = 0$.
\item If $j$ is $0$-marked, then $\diffword_j \in \{0, 1\}$.
\item If $j$ is $1$-marked, then $\diffword_j \in \{-1, 0\}$.
\item The number of nonzero coordinates of $\diffword$ is at most $\rob m$, where $m$ is the number of marked columns of $\codeword$.
\end{enumerate}
We call a $\diffword$ satisfying the above constraints \emph{valid}. By the linearity of $S(\cdot)$, we can write
$$
S(\pirateword) = S(\interword) + S(\diffword).
$$
Tardos' analysis of the error-free case proves that $S(\interword)$ is large.  In our language, he proves
\begin{claim}[Restated from \cite{Tardos08}] \label{clm:errorfreecase}
For every adversary $\adv$, if $\codebook \getsr \gen$, $\pirateword \getsr \adv(\codebook)$, and $\pirateword = \tilde{c} + \overline{c}$ as above, then
$$
\prob{(S(\interword) < nZ) \land (\interword \in \mathit{WF}_{0}(\codebook))} \leq \sec^{\sqrt{\users}/4}.
$$
\end{claim}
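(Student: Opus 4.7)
Since $\interword \in \mathit{WF}_0(\codebook)$ satisfies the standard marking assumption, this claim is exactly the completeness statement of Tardos' original fingerprinting code, so the plan is to follow his analysis from \cite{Tardos08} essentially verbatim. The argument proceeds in two parts: (i) show that the expected per-column contribution to $S(\interword)$ is large regardless of the adversary's strategy, and (ii) show concentration of $S(\interword)$ around its expectation.

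For (i), write $S(\interword) = \sum_{j=1}^{\len} \interword_j Y_j$ where $Y_j = x_j q_j - (\users - x_j)/q_j$ and $x_j = \sum_{i} \codebook_{ij}$. Fixing the codebook, the adversary's worst case is to choose $\interword_j$ to minimize $\interword_j Y_j$ subject to feasibility; when $0 < x_j < \users$ this means $\interword_j = \mathbb{1}[Y_j < 0]$, while when $x_j \in \{0, \users\}$ the value of $\interword_j$ is forced. I would compute $\Ex{p_j, x_j}{\interword_j Y_j}$ under this worst-case strategy, integrating $x_j$ against $\mathrm{Binomial}(\users, p_j)$ and then $p_j$ against Tardos' biased distribution with density proportional to $1/\sqrt{p_j(1-p_j)}$ on $[t, 1-t]$. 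The precise calculation, carried out in \cite[Section 3]{Tardos08}, shows that each column contributes at least $2/(\pi \users) - O(t)$ in expectation per user (so $\Omega(1/\users)$ total per column given the cutoff $t = 1/300\users$), yielding $\Ex{}{S(\interword)} \geq 2\users Z$ for $\len = 100 \users^2 \log(\users/\sec)$ and $Z = 20\users\log(\users/\sec)$.

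For (ii), I would apply a martingale concentration bound, exposing the randomness column by column. Conditional on $p_j$ and on the columns before $j$, the increment $\interword_j Y_j$ has bounded range (roughly $\users \max(q_j, 1/q_j) = O(\users/\sqrt{t}) = O(\users^{3/2})$) and bounded conditional variance. Applying Azuma-Hoeffding (or more carefully, the Bennett-type inequality Tardos uses that exploits the smallness of each $p_j$'s contribution to the variance), one obtains
\[
\prob{S(\interword) < \users Z \land \interword \in \mathit{WF}_0(\codebook)} \leq \exp(-\Omega(\len / \users^2)) \leq \sec^{\sqrt{\users}/4},
\]
by the choice $\len = 100\users^2 \log(\users/\sec)$.

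The main obstacle is the delicate first step: the expected per-column score contribution depends sensitively on the shape of the $p_j$ distribution, and the cutoff parameter $t$ must be chosen small enough to avoid losing too much from boundary effects, but large enough that $\max(q_j, 1/q_j) = O(1/\sqrt{t})$ remains bounded so that the concentration step goes through. Tardos' specific arcsine-type density is tuned precisely to balance these two constraints, and re-deriving it requires several pages of integral estimates; my plan is to invoke his computation as a black box rather than reproducing it. The rest of the lower bound proof (Lemma~\ref{lem:robcompleteness}) can then focus on the genuinely new ingredient, namely bounding the negative contribution $S(\diffword)$ from the adversary's errors in the $\rob$-fraction of marked positions.
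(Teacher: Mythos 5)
Your proposal matches the paper's treatment: Claim~\ref{clm:errorfreecase} is stated as a restatement of the completeness analysis in~\cite{Tardos08} and is invoked there without proof, exactly as you ultimately do by citing Tardos' computation as a black box. Two small caveats about your sketch of that computation, so it does not mislead: (1) the adversary cannot set $\interword_j = \mathbb{1}[Y_j < 0]$, since $Y_j$ depends on $q_j$ and hence on the hidden bias $p_j$, which the coalition never observes --- it only sees the column $\codebook_{\cdot j}$; and (2) Tardos' actual argument (mirrored in this paper's proof of Claim~\ref{clm:errorcase}) is not ``compute $\Expectation[S]$ and apply Azuma'' but rather a direct bound on the exponential moment $\Expectation[e^{-\alpha S}]$, which factors over columns once one argues that the worst-case adversary for this functional decides each bit from the corresponding column alone; a naive martingale over columns is delicate precisely because the pirate word may correlate columns. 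Neither point creates a gap, since the claim is being cited rather than re-proved.
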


Although $S(\diffword)$ will be negative, and thus $S(\pirateword) \leq S(\interword)$, we will show that $S(\diffword)$ is not too negative.  That is, introducing errors into a $\rob$ fraction of the marked columns in $\pirateword$ cannot reduce $S(\pirateword)$ by too much. 

We will now establish the following claim.
\begin{claim} \label{clm:errorcase}
For any adversary $\adv$, if $\codebook \getsr \gen$, $\codeword' \getsr \adv(\codebook)$, and $\pirateword = \tilde{c} + \overline{c}$ as above, then
$$
\prob{(S(\diffword) < -nZ/2) \land (\textrm{$\diffword$  is valid} )} \leq \sec/2.
$$
\end{claim}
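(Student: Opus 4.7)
The plan is to reduce the adaptive bound on $S(\diffword)$ to a tail bound on a top-$\rob m$ sum that depends only on the random codebook $\codebook$. First I would observe that, for a valid $\diffword$, every nonzero entry sits in a marked column, and since $x_j = \users$ for $j \in M_1$ and $x_j = 0$ for $j \in M_0$, each nonzero entry contributes exactly $-\users q_j$ (if $j \in M_1$) or $-\users/q_j$ (if $j \in M_0$) to $S(\diffword)$. Defining $v_j := q_j$ for $j \in M_1$ and $v_j := 1/q_j$ for $j \in M_0$, this gives $|S(\diffword)| \leq \users \sum_{j: \diffword_j \neq 0} v_j$. Because a valid $\diffword$ has at most $\rob m$ nonzero entries (all in marked columns), the adversary's best response is to pick the $\rob m$ marked columns with the largest $v_j$, so it suffices to show that this top-$\rob m$ sum is at most $Z/2$ with probability at least $1 - \sec/2$.

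Next, using the layer-cake identity $\text{top-}k\text{ sum} = \int_0^\infty \min(k, N(\theta))\, d\theta$ with $N(\theta) = |\{j \text{ marked} : v_j > \theta\}|$, I would split the integral at a threshold $\theta_0 = \Theta(\sqrt{\log(1/\rob)/\users})$, giving the bound $\rob m \theta_0 + \sum_{j \text{ marked}} v_j \mathbf{1}[v_j > \theta_0]$. Tardos' distribution on $p_j$ lets both pieces be computed directly: substituting $p_j = \sin^2 r_j$ and exploiting the $p \leftrightarrow 1-p$ symmetry between $M_0$ and $M_1$ yields $\ex{m} = \Theta(\len/\sqrt{\users})$ and $\ex{\sum_j v_j \mathbf{1}[v_j > \theta_0, j \text{ marked}]} = O((\len/\users)(1+\theta_0^2)^{-\users})$. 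With $\theta_0$ as above and $\len = 100\users^2 \log(\users/\sec)$, both pieces evaluate to $O(\rob \users \log(\users/\sec) \sqrt{\log(1/\rob)})$, which is bounded by $Z/4$ for $\rob \leq 1/25$.

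The main obstacle will be lifting these expectation bounds to a high-probability bound with failure probability at most $\sec/2$. A naive Chernoff bound on the tail sum is too weak, because a single summand $v_j$ can be as large as $\sqrt{(1-t)/t} = \Theta(\sqrt{\users})$. I would address this by stratifying along dyadic levels $\theta_k = 2^k \theta_0$, applying a Chernoff bound to each count $N(\theta_k)$ (a sum of independent Bernoullis across columns), and taking a union bound over the $O(\log \users)$ levels; the extra $\log(\users/\sec)$ factor in $\len$ supplies exactly the slack needed for each per-level Chernoff bound to beat the corresponding per-level error budget. By the $p \leftrightarrow 1-p$ symmetry, the analyses of $M_0$ and $M_1$ are identical, so handling one side and doubling completes the proof.
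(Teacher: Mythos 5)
Your reduction to a worst-case statement --- ``with probability $1-\sec/2$ over $\codebook\getsr\gen$, \emph{every} valid $\diffword$ has $|S(\diffword)|\le \users Z/2$'' --- is a genuinely different route from the paper's, and it is where the trouble lies. The paper never bounds the cherry-picking adversary. Its key observation is that the adversary sees only $\codebook$, not the $p_j$'s, so conditioned on the codebook and on $\diffword$, the quantities $\users/q_j$ (over $0$-marked corrupted columns) and $\users q_j$ (over $1$-marked corrupted columns) are i.i.d.\ draws from one fixed nonnegative distribution whose law does not depend on \emph{which} marked columns were corrupted, only on how many. By stochastic dominance this reduces to a single oblivious adversary that corrupts a uniformly random set of $\rob m$ marked columns (then to independent Bernoulli$(\rob)$ corruption, at a $\poly$ cost recovered by the $\sec^{\sqrt{\users}/4}$ slack), after which a Tardos-style MGF/Markov computation finishes. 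Your adversary, by contrast, is allowed to select the $\rob m$ columns with the largest $v_j$, which inflates the damage by the ratio between the mass of the top $\rob$-quantile of the $v_j$'s and $\rob$ times their total mass. For the Tardos distribution, conditioned on a column being marked one has $v_j\approx\sqrt{G/\users}$ with $G$ roughly $\Gamma(1/2,1)$, and this ratio is about $e^{-u_\rob}/\rob$ where $\mathrm{erfc}(\sqrt{u_\rob})=\rob$; at $\rob=1/25$ that is a factor of about $3$. Since the oblivious adversary already achieves expected damage $\approx \tfrac{2\rob\len}{\pi}\cdot\users \approx 2.5\,\users^2\log(\users/\sec)$ against a threshold of $\users Z/2=10\,\users^2\log(\users/\sec)$, a factor of $3$ consumes essentially all of the slack.

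Concretely, your asserted numerics do not hold at $\rob=1/25$. With $\len=100\users^2\log(\users/\sec)$ one has $m\approx 2\len/\sqrt{\pi\users}$, so the first piece $\rob m\theta_0$ with $\theta_0=\sqrt{\log(1/\rob)/\users}$ is already $\approx 8\,\users\log(\users/\sec) > Z/4 = 5\,\users\log(\users/\sec)$, and the two pieces together come out around $10.5\,\users\log(\users/\sec)$, i.e.\ \emph{above} $Z/2$, not below it --- partly because the layer-cake split $\rob m\theta_0+\sum_j v_j\mathbf{1}[v_j>\theta_0]$ overshoots the exact top-$(\rob m)$ sum by an additive $\rob m\theta_0$. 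The exact top-$(\rob m)$ sum is, by my estimate, about $7.8\,\users\log(\users/\sec)$ in expectation, so a maximally careful version of your argument (exact top-$k$ sum, tight evaluation of the Tardos integrals, plus your dyadic concentration, which is fine since the pairs $(p_j,\codebook_j)$ are independent across columns) might just clear $Z/2$ at $\rob=1/25$ --- but the margin is a few tens of percent and nothing in your write-up verifies it; the $O(\cdot)$ bounds followed by ``bounded by $Z/4$'' is exactly the step that fails. As written, your proof establishes the claim only for some smaller constant $\rob$, which would weaken the robustness constant in Theorem~\ref{thm:rfpc0}. Either redo the constants exactly, or adopt the paper's obliviousness argument, which avoids paying the cherry-picking factor altogether.
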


\begin{proof} [Proof of Claim~\ref{clm:errorcase}]
We start by making an observation about the distribution of $S(\diffword) = S(\diffword)|_{C, \diffword}$, which denotes $S(\diffword)$ when we condition on a fixed choice of a codebook $\codebook$ and a valid choice of $\diffword$.  Because the non-zero coordinates of $\diffword$ are only in marked columns of $\codebook$ (those in which $x_j = 0$ or $x_j = \users$), the distribution of
$$
S(\diffword)|_{C, \diffword} = \sum_{j=1}^{\len} \diffword_j \left( x_j q_j - \frac{\users - x_j}{q_j}\right)
$$
depends only on the number of non-zero coordinates of $\diffword$, and not on their location.  To see that this is the case, consider a $0$-marked coordinate $j$ on which $\diffword_j = 1$.  The contribution of $j$ to $S(\diffword)$ is exactly $-\users / q_j$.  Similarly, for a $1$-marked coordinate $j$ on which $\diffword_j = -1$, the contribution of $j$ to $S(\diffword)$ is exactly $-\users q_j$.  Thus we can write
\begin{align}
S(\diffword) 
&={} \sum_{j=1}^{\len} \diffword_j \left( x_j q_j - \frac{\users - x_j}{q_j}\right) \notag \\
&={} - \left(\sum_{j \in [d] : \textrm{$j$ is $0$-marked and $\diffword_j = 1$}} n / q_j  + \sum_{j \in [d] : \textrm{$j$ is $1$-marked and $\diffword_j = -1$}} n q_j \right) \label{eq:scoreassum}
\end{align}
Each term in the first sum (resp.~second sum) is a random variable that depends only on the distribution of $q_j$ conditioned on the the $j$-th column being $0$-marked (resp.~$1$-marked).  Recall that $q_j$ is determined by $p_j$.  Moreover, conditioned on a fixed $\codebook$, the $p_j$'s are independent. To see this, let $\codebook_j$ denote the $j$th column of the codebook $\codebook$. Recall that each column $\codebook_j$ is generated independently using $p_j$, and the $p_j$'s themselves are chosen independently. Letting $f_X$ denote the density function of a random variable $X$, this means that the joint density
\begin{align*}
f_{p_1, \dots, p_d}(x_1, \dots, x_d \mid C_1, \dots, C_d) &= \frac{\Pr[C_1, \dots, C_d \mid x_1, \dots, x_d] f_{p_1, \dots, p_d}(x_1, \dots, x_d)}{\Pr[C_1, \dots, C_d]} \tag{Bayes' rule}\\
&= \frac{\Pr[C_1 \mid x_1]f_{p_1}(x_1)}{\Pr[C_1]} \cdot \ldots \cdot \frac{\Pr[C_d \mid x_d]f_{p_d}(x_d)}{\Pr[C_d]}\\
&= f_{p_1}(x_1 \mid C_1) \cdot \ldots \cdot f_{p_d}(x_d \mid C_d).
\end{align*} 
This shows that the conditional random variables $p_j|_{C_j}$ are independent. Moreover, since $\diffword$ only depends on the codebook $\codebook$ and coins of the adversary $\adv$, the $p_j$'s are still independent when we also condition on $\diffword$. In fact, the following holds: 
\begin{claim} \label{clm:iid}
Conditioned on any fixed choice of $\codebook$ and $\diffword$, the following distributions are all identical, independent, and non-negative: 1) $(n/q_j \mid \textrm{$j$ is $0$-marked})$ for $j \in [d]$, and 2) $(nq_j \mid \textrm{$j$ is $1$-marked})$.
\end{claim}
\begin{proof}[Proof of Claim~\ref{clm:iid}]
By the discussion above, we know that these random variables are independent. To see that they are identicially distributed, note that the distribution $p_j$ used to generate the $j$th column of $\codebook$ is symmetric about $1/2$. Therefore, the probability that column $j$ is $0$-marked when its entries are sampled according to $p_j$ is the same as the probability that $j$ is $1$-marked when its entries are sampled according to $1-p_j$. Applying Bayes' rule, again using the fact that $p_j$ and $1-p_j$ have the same distribution, we see that the random variables $(p_j \mid \textrm{$j$ is $0$-marked})$ and $(1-p_j \mid \textrm{$j$ is $1$-marked})$ are identically distributed. The claim follows since $q_j = \sqrt{(1-p_j)/p_j}$.
\end{proof}
In light of this fact, we can see that the conditional random variable $S(\diffword)|_{\codebook, \diffword}$ is a sum of i.i.d. random variables and the number of these variables in the sum is exactly the number of marked columns $j$ on which $\diffword_j$ is non-zero.  For any $t \in \N$ and any non-negative random variable $Q$, the sum of $t+1$ independent draws from $Q$ stochastically dominates\footnote{For random variables $X$ and $Y$ over $\R$, $X$ \emph{stochastically dominates} $Y$ if for every $z \in \R$, $\prob{X \geq r} \geq \prob{Y \geq r}$.} the sum of $t$ independent draws from $Q$.  Recall that $S(\diffword)$ will be negative and we want its magnitude not to be too large.  Equivalently, we want the positive sum in~\eqref{eq:scoreassum} not to be too large.  Therefore, the ``worst-case'' for the sum~\eqref{eq:scoreassum} is when $\diffword$ has the largest possible number of non-zero coordinates.  Recall that the number of non-zero coordinates of $\diffword$ is exactly the number of errors introduced by the adversary.  Thus, the ``worst-case'' adversary $\adv^*$ is the one that chooses a random set of exactly $\rob m$ marked columns and for the chosen columns $j$ that are $0$-marked, sets $\diffword_j = 1$ and for those that are $1$-marked, sets $\diffword_j = -1$.  In summary, it suffices to consider only the single adversary $\adv^*(\codebook)$ that constructs a feasible $\interword$ and introduces errors in a random set of $\rob m$ of the marked coordinates in $\codebook$.  

Now we proceed to analyzing $\adv^*$.  We follow Tardos' approach to analyzing $S$.  A key step in his analysis is to show that the optimal adversary (for the error-free case) chooses the $j$-th coordinate of $\pirateword$ based only on the $j$-th column of $\codebook$.  In our case, the optimal adversary $\adv^*$ introduces errors in a random set of exactly $\rob m$ marked columns, which does not satisfy this independence condition.  So instead, we will analyze an adversary $\hat{\adv}^*$ that introduces an error in each marked column independently with probability $\rob$.  This adversary may fail to introduce errors in exactly $\rob m$ random columns, and thus it is not immediately sufficient to bound $\prob{S(\diffword) < -nZ/2}$ for $\pirateword \getsr \hat{\adv}^*(\codebook)$.  However, a standard analysis of the binomial distribution shows that this adversary introduces errors in exactly $\rob m$ marked columns with probability at least
$$
1/2\sqrt{m} \geq 1/2 \sqrt{d} = 1/\poly(\users, \log(1/\sec)),
$$ 
and conditioned on having $\rob m$ errors, those errors occur on a uniformly random set of marked columns.  Thus, if we can show that 
$$
\Prob{\pirateword \getsr \hat{\adv}^*(\codebook)}{S(\diffword) < -nZ/2} < \sec^{\sqrt{n}/4},
$$
we must also have 
$$
\Prob{\pirateword \getsr \adv^*(\codebook)}{S(\diffword) < -nZ/2} \leq \poly(\users, \log(1/\sec)) \cdot \sec^{\sqrt{n}/4} \leq \sec/2,
$$
provided $\users, 1/\sec$ are sufficiently large.

For the remainder of the proof, we will show that indeed $\prob{S(\diffword) < -nZ/2} < \sec^{\sqrt{n}/4}$ for $\pirateword \getsr \adv^*(\codebook)$.  We do so by bounding the quantity $\Ex{\overline{p},\codebook}{e^{-\alpha S}}$ for a suitable $\alpha > 0$ that we will choose later, and then by applying Markov's inequality.  Note that the expectation is taken over both the parameters $\overline{p} = (p_1, \dots, p_d)$ and the randomness of the adversary.
\begin{align*}
\Ex{\overline{p},\codebook}{e^{-\alpha S}}
&={} \sum_{\codebook} \Ex{\overline{p}}{e^{-\alpha S} \prod_{j=1}^{\len} p_j^{x_j} (1-p_j)^{n-x_j}} \\
&={} \sum_{\codebook} \Ex{\overline{p}}{\prod_{j=1}^{\len} p_j^{x_j} (1-p_j)^{\users-x_j} e^{-\alpha \diffword_j \left(x_jq_j - \frac{\users-x_j}{q_j}\right)}} \\
&={} \sum_{\codebook} \prod_{j=1}^{\len} \Ex{p}{p^{x_j} (1-p)^{\users-x_j} e^{-\alpha \diffword_j \left(x_jq_j - \frac{\users-x_j}{q_j}\right)}} \\
\end{align*}
The first two equalities are by definition.  The third equality follows from observing that for fixed $\codebook$, each term in the product depends only on the (independent) choice of $p_j$ and the adversary's choice of $\diffword_j$, and are thus independent by our choice of adversary $\tilde{\adv}^*$.  This step is the sole reason why it was helpful to consider an adversarial strategy that treats columns independently.  Now we want to interchange the sum and product to obtain a product of identical terms, so we can analyze the contribution of an individual term to the product.
\begin{align}
\Ex{\overline{p},\codebook}{e^{-\alpha S}}
&={} \sum_{\codebook} \prod_{j=1}^{\len} \Ex{p}{p^{x_j} (1-p)^{\users-x_j} e^{-\alpha \diffword_j \left(x_jq_j - \frac{\users-x_j}{q_j}\right)}} \notag \\
&={} \left( \sum_{x = 0}^{\users} \binom{\users}{x} \Ex{p}{p^{x}(1-p)^{\users-x} e^{-\alpha \diffword \left(xq - \frac{\users-x}{q}\right)}} \right)^{\len} \tag{independence of $\overline{c}_j$'s} \\
&={} \left( \sum_{x = 0}^{\users} \binom{\users}{x} A_{x} \right)^\len \notag
\end{align}
where
\begin{equation*}
A_{x} =
\begin{cases}
(1-\rob) \Ex{p}{(1-p)^\users} + \rob \Ex{p}{(1-p)^\users e^{\alpha n / q}} &\textrm{if $x = 0$} \\
\Ex{p}{p^{x}(1-p)^{n-x}} &\textrm{if $1 \leq x \leq n-1$} \\
(1-\rob) \Ex{p}{p^n} + \rob \Ex{p}{p^n e^{\alpha n q}} &\textrm{if $x = n$}
\end{cases}
\end{equation*}

First, observe that, since the distribution of $p$ is symmetric about $1/2$, $A_0 = A_n$.  Second, if we let
\begin{equation*}
B_{x} = \Ex{p}{p^x (1-p)^{n-x}}
\end{equation*}
for every $x = 0,1,\dots,n$, then we have
\begin{align*}
\sum_{x = 0}^{n} \binom{n}{x} A_{x} 
&={} \left(\sum_{x = 0}^{n} \binom{n}{x} B_{x} \right) + 2(A_n - B_n) \\
&={} 1 + 2(A_n - B_n)
\end{align*}
In order to obtain a strong enough bound, we need to show that $A_n - B_n = O(\rob \alpha)$.  We can calculate
\begin{align*}
A_n - B_n
&={} (1-\rob) \Ex{p}{p^n} + \rob \Ex{p}{p^n e^{\alpha n q}} - \Ex{p}{p^n} \\
&={} \rob \Ex{p}{p^n e^{\alpha n q}} - \rob\Ex{p}{p^n}
\end{align*}
Now we apply the approximation $e^{u} \leq 1+2u$, which holds for $0 \leq u \leq 1$.  To do so, we choose $\alpha = \sqrt{t}/n$.  Since $q = \sqrt{(1-p)/p}$ and $p \geq t$, we have $\alpha n q \leq 1$ for this choice of $\alpha$.  Thus we have
\begin{align*}
A_n - B_n
&= {}\rob \Ex{p}{p^n e^{\alpha n q}}  - \rob \Ex{p}{p^n} \\
&\leq{} \rob \Ex{p}{p^n (1+2\alpha n q)} - \rob \Ex{p}{p^n} \\
&={} 2\rob \alpha \Ex{p}{p^n n q}
\end{align*}
Now, to show that $A_n - B_n = O(\rob \alpha)$, we simply want to show that $\Ex{p}{p^n n q} = O(1)$, which we do by direct calculation.
\begin{align*}
\Ex{p}{p^n n \sqrt{\frac{1-p}{p}}}
&={} n \int_{t'}^{\pi/2 - t'} \frac{\sin^{2n} r \sqrt{\frac{1-\sin^2 r}{\sin^2 r}}}{\pi/2 - 2t'} dr 
={} \frac{\sin^{2n} (\pi/2 - t') - \sin^{2n} (t')}{\pi - 4t'} \\
&={} \frac{(1-t)^{n} - t^{n}}{\pi - 4t'}
={} \frac{(1-1/300n)^{n} - (1/300n)^n}{\pi - 4t'}
\leq{} \frac{1}{\pi}
\end{align*}
The final inequality holds as long as $n$ is larger than some absolute constant.  (To see that this is the case, recall that $t' = \arcsin(\sqrt{t}) = \arcsin(\sqrt{1/300n}) = \Theta(1/\sqrt{n})$, whereas $(1-1/300n)^n = 1 - \Omega(1)$.) So we have established
$$
A_n - B_n \leq \frac{2\rob \alpha}{\pi}.
$$
Plugging this fact into the analysis above, we have
\begin{align*}
\Ex{\overline{p},\codebook}{e^{-\alpha S}} 
&={} \left( \sum_{x=0}^{n} \binom{n}{x} A_x \right)^\len \\
&={} \left( 1 + 2(A_n - B_n) \right)^d \\
&\leq{} \left(1 + \frac{4 \rob \alpha}{\pi}\right)^\len \leq e^{4\rob \alpha \len / \pi}
\end{align*}
Now all that remains is to apply Markov's inequality to bound this quantity by $\sec^{\sqrt{n}/4}$.
\begin{align*}
\prob{S < -nZ/2}
&={} \prob{-\alpha S > \alpha nZ/2} \\
&={} \prob{e^{-\alpha S} > e^{\alpha nZ/2}}
\leq{} \frac{\ex{e^{-\alpha S}}}{e^{\alpha n Z / 2}}
\leq{} \frac{e^{4\beta \alpha \len / \pi}}{e^{\alpha n Z / 2}} \\
&={} e^{4\beta \alpha \len / \pi - \alpha n Z / 2}
\end{align*}
To get the desired upper bound, it is sufficient to show $$\frac{\alpha n Z }{ 2}  -\frac{4 \rob \alpha \len }{ \pi} \geq \frac{\sqrt{n}\log(1/\sec)}{4}.$$
We calculate
\begin{align*}
\frac{\alpha n Z}{2} - \frac{4 \rob \alpha \len }{ \pi}
&={} 10 \sqrt{t} n \log(n/\sec) - \frac{400 \rob}{\pi} \sqrt{t} n \log(n/\sec) \\
&={} \left(10 - \frac{400 \rob}{\pi}\right) \left( \sqrt{t}{n} \log(n/\sec) \right) \\
&\geq{} \left(10 - \frac{400 \rob}{\pi}\right) \frac{\sqrt{n} \log(n/\sec)}{18} \\
&\geq{} \frac{\sqrt{n} \log(1/\sec)}{4}
\end{align*}
where the last inequality holds when $\rob < 1/25$.  This is sufficient to complete the proof of Claim \ref{clm:errorcase}.
\end{proof}
Combining Claims \ref{clm:errorfreecase} and \ref{clm:errorcase} yields Lemma \ref{lem:robcompleteness} as follows. If $S(\pirateword) < nZ/2$, then either $S(\interword) < nZ$ or $S(\diffword) < nZ/2$. Moreover, if $\pirateword \in \mathit{WF}_{1/25}(\codebook)$, we must have both $\interword \in \mathit{WF}_0(\codebook)$ and a valid $\diffword$. A union bound thereby gives us Lemma \ref{lem:robcompleteness}.
\end{proof}

Lemma~\ref{lem:robsoundness} and~\ref{lem:robcompleteness} are sufficient to imply Lemma~\ref{lem:weakrobusttardos}, that Tardos' fingerprinting code is weakly robust.  In order to apply our reduction from full robustness to weak robustness (Lemma~\ref{lem:fpcreduction}), we need to also establish that with high probability there are many marked columns in the matrix $\codebook \getsr \gen$ for Tardos' fingerprinting code.

\begin{lemma} \label{lem:manymarkedcols}
With probability at least $1-\sec$ over the choice of $\codebook \getsr \gen$, it holds that the number of $0$-marked columns $m_0$ and the number of $1$-marked columns $m_1$ are both larger than $m = 5\users^{3/2} \log(\users/\sec)$.
\end{lemma}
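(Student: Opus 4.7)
The plan is to show that each column of $\codebook \getsr \gen$ is marked (with either bit) independently with probability $P = \Omega(1/\sqrt{\users})$, so that both $\ex{m_0}$ and $\ex{m_1}$ greatly exceed $m$, and then apply a Chernoff bound followed by a union bound over $\{0,1\}$.

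Because $\gen$ samples $p_1, \dots, p_\len$ independently and generates each column's entries independently conditioned on the corresponding $p_j$, the indicator random variables $\mathbf{1}[\text{column $j$ is $1$-marked}]$ are mutually independent across $j \in [\len]$, each with marginal probability
$$P := \ex{p^\users} = \frac{1}{\pi/2 - 2t'} \int_{t'}^{\pi/2 - t'} \sin^{2\users}(r)\, dr.$$
By symmetry of the distribution of $p = \sin^2 r$ about $1/2$ (equivalently, the substitution $r \mapsto \pi/2 - r$), the indicators for $0$-marked columns have the same distribution.

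The main step is to lower bound $P$ by $c/\sqrt{\users}$ for some absolute constant $c > 0$. After substituting $s = \pi/2 - r$ and using the elementary inequality $\cos(s) \geq e^{-s^2}$ on $[0,1]$, we get
$$\int_{t'}^{\pi/2 - t'} \sin^{2\users}(r)\, dr \;\geq\; \int_{t'}^{1} e^{-2\users s^2}\, ds \;=\; \frac{1}{\sqrt{2\users}} \int_{t'\sqrt{2\users}}^{\sqrt{2\users}} e^{-u^2}\, du.$$
Since $t'\sqrt{2\users} = \sqrt{2\users t} = \sqrt{2/300}$ is an absolute constant (much smaller than $\sqrt{\pi}/2$), the Gaussian tail on the right is bounded below by a positive absolute constant, yielding the desired bound on $P$. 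This integral estimate is the only substantive obstacle; the rest is routine.

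Combining $P \geq c/\sqrt{\users}$ with $\len = 100\users^2 \log(\users/\sec)$ gives $\ex{m_0} = \ex{m_1} = \len \cdot P \geq 100c \cdot \users^{3/2}\log(\users/\sec)$, which exceeds $2m$ for the concrete value of $c$ obtained above (and in any case can be ensured by absorbing constants into the length). A multiplicative Chernoff bound applied to the sum of the $\len$ independent column indicators then yields, for each $b \in \{0,1\}$,
$$\Pr[m_b < m] \;\leq\; \Pr[m_b < \ex{m_b}/2] \;\leq\; \exp(-\ex{m_b}/8) \;\leq\; \sec/2,$$
once $\users$ and $1/\sec$ are at least some absolute constants. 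A union bound over $b \in \{0,1\}$ completes the proof.
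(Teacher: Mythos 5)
Your proof is correct and follows essentially the same route as the paper's: lower-bound the probability that a fixed column is $b$-marked by $\Omega(1/\sqrt{\users})$, then apply a Chernoff bound to the $\len$ independent column indicators and union-bound over $b \in \{0,1\}$. The only (cosmetic) differences are that the paper obtains the per-column bound by conditioning on the event $p_j < 1/\users$ and using $(1-1/\users)^{\users} = \Omega(1)$ rather than by your direct Gaussian comparison for $\ex{p^{\users}}$, and that it invokes the additive rather than multiplicative Chernoff bound.
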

\begin{proof} [Proof of Lemma~\ref{lem:manymarkedcols}]
To estimate the number of marked columns, define for each $j = 1, \dots, d$ an indicator random variable $D_j$ for whether column $j$ is $0$-marked. The $D_j$'s are i.i.d., and have expectation at least
\[\ex{D_j | p_j < 1/n} \Pr[p_j < 1/n] > \left(1 - \frac{1}{n}\right)^n \Pr[r_j < \arcsin (1/\sqrt{n})] \ge \frac{1}{6\sqrt{n}}.\]
Let $D = \sum_{j=1}^d D_j$ be the total number of $0$- marked columns. Then $\ex{D} \ge 10n\sqrt{n}\log(n/\sec)$, so by the additive Chernoff bound (Theorem \ref{thm:chernoff}),
\[\Pr[D < 5n\sqrt{n}\log(n/\sec)] < \exp\left(\frac{-2(5n\sqrt{n}\log(n/\sec))^2}{d}\right) < \sec/2.\]
A similar argument holds for $1$-marked columns. Thus letting $m = 5n\sqrt{n}\log(n/\sec)$, the codebook $C$ has at least $m$ $0$-marked columns and $m$ $1$-marked columns with probability at least $1 - \sec$. Now observe that
\[\exp(-\Omega(\beta m^2 / d)) < \exp(-\Omega(\beta n\log(n/\sec))) < \sec\]
for $n$ larger than some absolute constant. 
\end{proof}

Combining Lemma~\ref{lem:fpcreduction} (reduction from robustness to weak robustness), Lemma~\ref{lem:weakrobusttardos} (weak robustness of Tardos' code), and Lemma~\ref{lem:manymarkedcols} (Tardos' code has many marked columns), suffices to prove Theorem~\ref{thm:rfpc0}.

\section*{Acknowledgements}

We thank Kobbi Nissim for drawing our attention to the question of sample complexity and for many helpful discussions.  We thank Adam Smith for suggesting that we use the Gaussian mechanism to provide a new proof of the lower bound on the length of fingerprinting codes.  Finally, we thank the anonymous reviewers for their helpful comments.

\addcontentsline{toc}{section}{References}
\bibliographystyle{alpha}
\bibliography{references}

\appendix

\section{Lower Bounds on Fingerprinting Codes via Differential Privacy}\label{sec:gaussianmech}
By the contrapositive of Theorem~\ref{thm:fpctolb}, upper bounds on the sample complexity of answering $1$-way marginals with differential privacy imply a lower bound on the length $d$ of any fingerprinting code with a given number of users $n$.  As pointed out to us by Adam Smith, this yields a particularly simple, self-contained proof of Tardos'~\cite{Tardos08} optimal lower bound on the length of fingerprinting codes.  Specifically, using the well known Gaussian mechanism for achieving differential privacy, we can design a simple adversary $\fpadv$ that violates the security of any traitor tracing scheme with length $d = o(n^2)$.

\begin{theorem}
There is a function $n = n(d) = \tilde{O}(\sqrt{d})$ such that for every $d$, there is no $(n, d)$-fingerprinting code with security $\sec < 1/6en$.
\end{theorem}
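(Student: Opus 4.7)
The plan is to instantiate the contrapositive of Theorem~\ref{thm:fpctolb}, using the Gaussian mechanism to build an accurate differentially private algorithm for $1$-way marginals. Suppose, for contradiction, that an $(n,d)$-fingerprinting code with security $\sec < 1/6en$ existed. By Theorem~\ref{thm:fpctolb} with $\rob = 0$, the output distribution of the code's generator $\gen$ would be $(\sec,\sec)$-re-identifiable from $(1/3,0)$-accurate answers to $\kdconj{1}$. Then by Lemma~\ref{lem:reidenttodp}, no algorithm $\san$ satisfying $(\eps,\delta)$-differential privacy could be $(1/3,0)$-accurate for $\kdconj{1}$ whenever $e^{-\eps}(2/3 - \sec)/n - \delta \geq \sec$. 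The strategy is to contradict this by exhibiting such an $\san$ at $\eps = 1$, $\delta$ small, and $n = \tilde{O}(\sqrt{d})$.

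First, I would construct $\san$ via the Gaussian mechanism: the vector of $d$ $1$-way marginals has $\ell_2$-sensitivity $\sqrt{d}/n$, so adding independent $\cN(0,\sigma^2)$ noise to each marginal with $\sigma = O(\sqrt{d \log(1/\delta)}/(n\eps))$ yields $(\eps,\delta)$-differential privacy by the standard analysis of the Gaussian mechanism. A Gaussian tail bound combined with a union bound over the $d$ coordinates shows that, with probability $\geq 2/3$, every coordinate incurs error at most $O(\sigma \sqrt{\log d})$. Requiring this to be at most $1/3$ and setting $\eps = 1$, $\delta = 1/(2en)$, yields the constraint $n = \Omega(\sqrt{d \log d \log(en)})$, which is satisfiable at $n = \tilde{O}(\sqrt{d})$.

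Next, I would plug these parameters back into Lemma~\ref{lem:reidenttodp}. With $\eps = 1$, the forbidden region for $\sec$ becomes $e^{-1}(2/3 - \sec)/n - \delta \geq \sec$, i.e., $\sec(1 + 1/(en)) \geq 2/(3en) - \delta$. With $\delta \leq 1/(2en)$, the right-hand side is at least $1/(6en) \cdot (1 + o(1))$, so solving for $\sec$ shows that any code resisting the Gaussian-mechanism adversary must have security $\sec \geq 1/(6en)$, contradicting the assumption $\sec < 1/(6en)$ once $n$ is large enough to absorb the $1 + 1/(en)$ factor. This rules out fingerprinting codes of length $d$ for $n$ users with security $\sec < 1/6en$ for the claimed $n = \tilde{O}(\sqrt{d})$.

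The main delicacy is bookkeeping the constants: one has to choose $\eps$ and $\delta$ in the Gaussian mechanism so that (i) the accuracy guarantee $\sigma\sqrt{\log d} \leq 1/3$ is achieved at $n = \tilde{O}(\sqrt{d})$, and simultaneously (ii) the quantity $e^{-\eps}(2/3)/n - \delta$ remains larger than $1/6en$. Taking $\eps = 1$ produces the factor of $e^{-1}$ that matches the $1/6en$ in the statement, and $\delta = 1/(2en)$ is comfortably small enough both to satisfy $\delta = o(1/n)$ and to leave $2/(3en) - \delta \geq 1/(2en) > 1/(6en)$. No robustness of the fingerprinting code is needed, so Theorem~\ref{thm:fpctolb} suffices with $\rob = 0$.
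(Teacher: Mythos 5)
Your plan is the right one, and it is essentially the argument the paper itself gives: the paper's proof is just the "inlined" version of your reduction, constructing the pirate directly as (average of codewords) $+$ (Gaussian noise) $+$ (rounding), proving feasibility by the same Gaussian tail bound and union bound, and replacing the appeal to Lemma~\ref{lem:reidenttodp} with a direct use of the Gaussian shift property (its Fact on $\Prob{z}{c+z\in T} \geq (1/e)\Prob{z}{c'+z\in T}-\delta$). Invoking Theorem~\ref{thm:fpctolb} with $\rob=0$ and Lemma~\ref{lem:reidenttodp} as black boxes is exactly the contrapositive route the paper sketches in the lead-in to this section, so there is no substantive difference in ideas. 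The one thing to repair is the endgame arithmetic: with $\gamma=\sec$ the condition of Lemma~\ref{lem:reidenttodp} reads $e^{-1}(2/3-\sec)/n-\delta\geq\sec$, which rearranges to $\sec\bigl(1+\tfrac{1}{en}\bigr)\leq \tfrac{2}{3en}-\delta$; with your choice $\delta=1/(2en)$ this only rules out $\sec\leq 1/(6en+6)$, and taking $n$ large does \emph{not} ``absorb'' the multiplicative $(1+\tfrac{1}{en})$ loss---codes with $\sec\in(1/(6en+6),\,1/(6en))$ are never excluded, for any $n$. The fix is trivial: take $\delta\leq 1/(6en)$ (the paper's choice) or $1/(3en)$, which changes $\sigma$ only by a constant factor and leaves $n=\tilde{O}(\sqrt{d})$ intact; then $\tfrac{2}{3en}-\delta\geq\tfrac{1}{2en}\geq\tfrac{1}{6en}\bigl(1+\tfrac{1}{en}\bigr)$ and every $\sec<1/(6en)$ yields the contradiction.
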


\begin{proof}
Before diving into the proof, we will state the following elementary fact about Gaussian random variables.  The fact simply says that a Gaussian random variable with suitable variance is ``close'' to a shifted version of itself in a particular sense.  This same fact is used to show that adding Gaussian noise of suitable variance provides differential privacy.
\begin{fact} \label{fact:gaussiandp}
Let $c, c' \in \R^d$ satisfy $\| c - c' \|_2 \leq \sqrt{d}/n,$ $\delta > 0$ be a parameter, and let $\sigma^2 = 2d\ln(1/\delta)/n^2.$  Let $z \in \R^d$ be a random vector where each coordinate is an independent draw from a Gaussian distribution with mean $0$ and variance $\sigma^2.$  Then for any (measurable) set $T \subseteq \R^d$.
$$
\Prob{z}{c + z \in T} \geq (1/e) \Prob{z}{c' + z \in T} - \delta.
$$
\end{fact}

Now we proceed with the proof.
Fix any choice of $d$.  Assume towards a contradiction that there is an $(n, d)$-fingerprinting code $(\gen, \trace)$ with security $\sec < 1/6en$ for $n = \left\lceil\sqrt{18 d \ln(6en) \ln(3d/2)}\right\rceil$.  Observe that $n = n(d) = \tilde{O}(\sqrt{d})$ as promised in the theorem.

Let $\fpadv(\codebook_{S})$ be the following adversary.  Define the vector $\overline{c} \in [0,1]^d$ as
$$\overline{c} = \frac{1}{n} \sum_{i \in S} c_{i}.$$
Now, let $z \in \R^d$ be a $d$-dimensional Gaussian where every coordinate is independent with mean $0$ and variance $\sigma^2 = 2 d \ln(1/\delta) / n^2,$ for $\delta = 1/6en.$ Finally, let $c'$ be $\hat{c}$ with each coordinate rounded to $\bits,$ and output the pirated codeword $c'.$

First we claim that $\fpadv$ outputs feasible codewords with at least constant probability.
\begin{claim} \label{clm:gaussianisfeasible}
For every $S$ such that $|S| \geq n-1,$ and every codebook $C = (c_{ij}) \in \bits^{n \times d},$
$$
\Prob{c' \getsr \fpadv(C_{S})}{c' \in F(C_{S})} \geq 2/3.
$$
\end{claim}
\begin{proof}[Proof of Claim~\ref{clm:gaussianisfeasible}]
By a standard tail bound for the Gaussian, we have
$$
\Prob{}{\forall \, j, \; |z_j| < \sigma \sqrt{\ln (3d/2)}} \geq 2/3.
$$
Thus, by our choice of $\sigma$ and $n \geq \sqrt{18 d \ln(1/\delta) \ln(3d/2)}$ we have
$
\Prob{}{\forall \, j, \; |z_j| < 1/3} \geq 2/3.
$
Now the claim follows easily.  Specifically, if $c_{ij} = 1$ for every $i \in S$, then $(1/n) \sum_{i \in S} c_{ij} \geq 1-1/n$, so $\hat{c}_j > 2/3 - 1/n$ and $c'_j = 1$.  A similar argument applies if $c_{ij} = 0$ for every $i \in S$.
\end{proof}

Now it remains to show that $\fpadv$ cannot be traced successfully.  By assumption $(\gen, \trace)$ has security $\sec < 1/6en <  1/3.$  Then we have in particular
$$
\Prob{\codebook \getsr \gen \atop c' \getsr \fpadv(\codebook)}{\pirateword \in F(\codebook) \land \trace(\codebook, \pirateword) = \bot} < \sec.
$$
Combining with Claim~\ref{clm:gaussianisfeasible} we have
$$
\Prob{\codebook \getsr \gen \atop c' \getsr \fpadv(\codebook)}{\trace(\codebook, c') \in [n]} > 1 - 1/3 - \sec > 1/3.
$$
Therefore, there exists $i^* \in [n]$ such that
\begin{equation} \label{eq:usetracing}
\Prob{\codebook \getsr \gen \atop c' \getsr \fpadv(\codebook)}{\trace(\codebook, c') = i^*} > 1/3n.
\end{equation}
To complete the proof, it now suffices to show that if $S = [n] \setminus \set{i^*}$, then
$$
\Prob{\codebook \getsr \gen \atop c' \getsr \fpadv(\codebook_{S})}{\trace(\codebook, c') = i^*} \geq 1/6en > \sec,
$$
which will contradict the security of the fingerprinting code.

To do so, first observe that if 
$$
\overline{c} = \frac{1}{n} \sum_{i \in [n]} c_{i}, \qquad \textrm{and} \qquad \overline{c}^{S} = \frac{1}{n} \sum_{i \in S} c_{i},
$$
then $\| \overline{c}_j - \overline{c}^{S}_j \|_2 \leq \sqrt{d}/n.$  Now, in case the tracing algorithm is randomized, let $\trace_r$ denote the tracing algorithm when run with its random coins fixed to $r$.  For any string of random coins $r$, define the set $T_{r} = \{t \in \R^d \mid \trace_r(C, \mathrm{round}(t)) = i^*\}.$  Here, $\mathrm{round}(\cdot)$ is the function that rounds each entry of its input to $\bits.$\footnote{Note, for completeness, that $T_{r}$ is measurable, since the set of $c' \in \bits^d$ such that $\trace_r(C, c') = i^*$ is finite (for every fixed $n, d$) and for every $c',$ $\set{t \mid \mathrm{round}(t) = c'}$ is a hypercube, so $T_{r}$ is a union of finitely many hypercubes.}

By Fact~\ref{fact:gaussiandp} (with $\delta = 1/6en > \sec$), for every $r$,
$$
\Prob{z}{\overline{c}^{S} + z \in T_{r}} \geq (1/e) \Prob{z}{\overline{c} + z \in T_{r}} - \sec.
$$  Applying~\eqref{eq:usetracing}, and averaging over $\codebook \getsr \gen$ and $r$, we have
\begin{align*}
\Prob{\codebook \getsr \gen \atop c' \getsr \fpadv(\codebook_{S})}{\trace(\codebook, c') = i^*} \geq (1/e)(1/3n) - 1/6en = 1/6en > \sec,
\end{align*}
which is the desired contradiction.  This completes the proof.
\end{proof}

\end{document}